\documentclass{article}
\usepackage{fullpage}
\usepackage{xcolor}
\usepackage{amsfonts,amssymb,amsmath,dsfont,amsthm}

\usepackage{subfigure}
\usepackage{tikz}
\usepackage{enumitem}
\usepackage{xspace}
\usepackage{graphicx}
\usepackage[colorlinks=true,citecolor=blue,breaklinks]{hyperref}
\usepackage[
    capitalize,
    noabbrev
]{cleveref}
\usepackage{cite}
\usepackage{bm}
\usepackage{comment}

\newtheorem{theorem}{Theorem}
\newtheorem{proposition}{Proposition}
\newtheorem{lemma}{Lemma}
\newtheorem*{lemmastar}{Lemma}
\newtheorem{corollary}{Corollary}
\newtheorem{definition}{Definition}
\newcommand{\paragraphproof}[1]{\noindent \textbf{#1}}

\newcommand{\proba}{\mathbb{P}}
\newcommand{\mean}{\mathds{E}}
\newcommand{\var}{\operatorname{Var}}
\newcommand{\probamalphawinner}{\proba(m \;\textnormal{is}  \; \valpha \textnormal{-winner})}
\newcommand{\probacondorcetwinner}[1]{\proba(#1 \;\textnormal{is CW})}
\newcommand{\bigO}{\mathcal{O}}

\newcommand{\integers}{\mathds{Z}}
\newcommand{\reals}{\mathbb{R}}

\newcommand{\indic}{\mathds{1}}
\newcommand{\diag}{\operatorname{diag}}
\newcommand{\hessian}{\mathcal{H}}
\newcommand{\transp}{\intercal}

\newcommand*{\eg}{\textit{e.g.,}\@\xspace}
\newcommand*{\ie}{\textit{i.e.,}\@\xspace}

\newcommand{\Real}{\operatorname{Re}}

\newcommand{\mA}{\mathcal{A}}

\newcommand{\mC}{\mathcal{C}}

\newcommand{\mH}{\mathcal{H}}

\newcommand{\mM}{\mathcal{M}}

\newcommand{\mS}{\mathcal{S}}

\newcommand{\mX}{\mathcal{X}}
\newcommand{\mY}{\mathcal{Y}}

\newcommand{\vect}[1]{\bm{#1}}
\newcommand{\vb}{\vect{b}}

\newcommand{\ve}{\vect{e}}

\newcommand{\vl}{\vect{\ell}}
\newcommand{\vm}{\vect{m}}
\newcommand{\vn}{\vect{n}}
\newcommand{\vp}{\vect{p}}
\newcommand{\vq}{\vect{q}}

\newcommand{\vs}{\vect{s}}
\newcommand{\vt}{\vect{t}}
\newcommand{\vu}{\vect{u}}
\newcommand{\vv}{\vect{v}}
\newcommand{\vx}{\vect{x}}
\newcommand{\vX}{\vect{X}}
\newcommand{\vy}{\vect{y}}
\newcommand{\vY}{\vect{Y}}
\newcommand{\vz}{\vect{z}}
\newcommand{\vzero}{\vect{0}}
\newcommand{\vone}{\vect{1}}
\newcommand{\vtwo}{\vect{2}}
\newcommand{\valpha}{\vect{\alpha}}
\newcommand{\vbeta}{\vect{\beta}}

\newcommand{\vkappa}{\vect{\kappa}}
\newcommand{\vlambda}{\vect{\lambda}}
\newcommand{\vtheta}{\vect{\theta}}
\newcommand{\vtau}{\vect{\tau}}
\newcommand{\vzeta}{\vect{\zeta}}

\DeclareMathOperator*{\argmax}{arg\,max}
\DeclareMathOperator*{\argmin}{arg\,min}
\DeclareMathOperator\erf{erf}

\usepackage{listings}
\usepackage{graphicx}

\newcommand{\elie}[1]{\textcolor{red}{(Elie) #1}}
\newcommand{\fanfan}[1]{\textcolor{red}{(Fanfan) #1}}

\usepackage{pgfplots}
\newcommand{\axisWidth}{10cm}
\newcommand{\axisHeight}{5cm}
\newcommand{\legendFont}{\small}

\begin{document}

\title{Probability of a Condorcet Winner for Large Electorates: \\ An Analytic Combinatorics Approach}
\author{%
Emma Caizergues$^{1, 2, }$\thanks{Authors presented in alphabetical order.},
Fran\c{c}ois Durand$^1$,
Marc Noy$^3$,\\
\'Elie de Panafieu$^1$,
Vlady Ravelomanana$^4$\\
{\normalsize%
$^1$ Nokia Bell Labs,
$^2$ CNRS - Lamsade, Université Paris Dauphine - PSL, Paris, France,}
\\
{\normalsize%
$^3$ Universitat Polit\`ecnica de Catalunya, Barcelona, Spain,}
\\
{\normalsize%
$^4$ IRIF - UMR CNRS 8243, Université Paris Denis Diderot}}
\date{}

\maketitle

\begin{abstract}
We study the probability that a given candidate is an $\valpha$-\emph{winner}, \ie a candidate preferred to each other candidate~$j$ by a fraction $\alpha_j$ of the voters. This extends the classical notion of \emph{Condorcet winner}, which corresponds to the case $\valpha=(\frac{1}{2},\dots,\frac{1}{2})$. Our analysis is conducted under the general assumption that voters have independent preferences, illustrated through applications to well-known models such as Impartial Culture and the Mallows model. 
While previous works use probabilistic arguments to derive the limiting probability as the number of voters tends to infinity, we employ techniques from the field of analytic combinatorics to compute convergence rates and provide a method for obtaining higher-order terms in the asymptotic expansion. 
In particular, we establish that the probability of a given candidate being the Condorcet winner in Impartial Culture is  
\(
a_0 + a_{1, n} n^{-1/2} + \bigO(n^{-1}),
\)  
where we explicitly provide the values of the constant \( a_0 \) and the coefficient \( a_{1, n} \), which depends solely on the parity of the number of voters \( n \).
Along the way, we derive technical results in multivariate analytic combinatorics that may be of independent interest.
\end{abstract}

\section{Introduction}

\paragraph{Motivation}
A long-standing tradition in social choice theory is to  compute the probability that a \emph{Condorcet winner} exists, \ie a candidate preferred to every other candidate by a majority of voters. A comprehensive overview of this research is presented in \cite{gehrlein2006condorcet}, which is entirely dedicated to this topic. \cite{niemi1968paradox} derived the limit of this probability as the number of voters tends to infinity, under the only assumption of independent preferences. More recently, \cite{krishnamoorthy2005condorcet} revisited and modernized this result.

However, to the best of our knowledge, all previous studies have relied on classical combinatorial or probabilistic methods. For instance, the results by \cite{niemi1968paradox} and \cite{krishnamoorthy2005condorcet} are primarily based on a Gaussian approximation. 
In this paper, we introduce techniques from \emph{analytic combinatorics} into this domain, drawing on the approach outlined in the reference book by \cite{flajolet2009analytic}. These methods encode combinatorial problems into analytic objects, allowing the application of powerful analytical tools to extract information, especially  limits, but also speeds of convergence, and more generally asymptotic expansions as certain parameters tend to infinity.

\paragraph{Contributions}
Similarly to \cite{niemi1968paradox} and \cite{krishnamoorthy2005condorcet}, we consider a general probability distribution over voter preferences which we call \emph{General Independent Culture (GIC)}, with the primary assumption that individual preferences are independent. Additionally, we assume that the distribution is \emph{generic}, in the sense that every possible ranking over the candidates has a positive probability.

To obtain results as general as possible, we introduce the notion of an \emph{$\valpha$-winner}, a generalization of the Condorcet winner, where the required victory threshold for each pairwise comparison is not necessarily one-half of the voters.

We analyze the probability that a given candidate is an $\valpha$-winner, with a particular focus on its asymptotic behavior as the number of candidates \( m \) remains fixed while the number of voters \( n \) tends to infinity. Our main contributions lie in providing a method that not only recovers previous results on limiting probabilities using a different approach but also enables the computation of the rate of convergence and higher-order terms in the asymptotic expansion.

These general theoretical results are illustrated through applications to the usual notion of Condorcet winner under probabilistic models known as the \emph{Impartial Culture} and the \emph{Mallows model}. In particular, we establish that the probability of a given candidate being the Condorcet winner in Impartial Culture is  
\(
a_0 + a_{1, n} n^{-1/2} + \bigO(n^{-1}),
\)  
where we explicitly provide the values of the constant \( a_0 \) and the coefficient \( a_{1, n} \), which depends solely on the parity of the number of voters~\( n \).

In the process, we also establish results in multivariate analytic combinatorics that are of independent interest.

\paragraph{Related Work}\label{sec_related-work}
Regarding the probability of a Condorcet winner, the standard reference is \cite{gehrlein2006condorcet}. In the General Independent Culture (GIC), which assumes independent voter preferences, the cases for finite values of voters \(n\) and candidates \(m\) were explored by \cite{gehrlein1976probability}, \cite{gillett1978recursion}, and \cite{gehrlein1983condorcet}.
The most relevant works to ours are \cite{niemi1968paradox} and \cite{krishnamoorthy2005condorcet}, which derive the limiting probability as \(n\) tends to infinity in this general setting. Our approach differs in methodology and results: we use analytic combinatorics and provide more precise asymptotic behaviors.

Among the specific cases of GIC, the most studied is \emph{Impartial Culture (IC)}, where all rankings are equally likely \cite{gehrlein2006condorcet}. For finite values of \( n \) and \( m \) or for the asymptotic regime where \( m \to \infty \), see \cite{sauermann2022probability}. Other works, like ours, focus on large electorates (\ie \( n \to \infty \)). An explicit formula exists for \( m = 3 \) \cite{guilbaud1952theories, niemi1968paradox, garman1968paradox}, while the limit for \( m=4 \) follows from a recurrence relation \cite{may1971some, fishburn1973proof}. Explicit formulas for \( m \in \{5, 6, 7\} \) have been derived in \cite{gehrlein1978probabilities, gehrlein1983condorcet}. The general case of arbitrary \( m \) with \( n \to \infty \) was addressed by \cite{niemi1968paradox} and \cite{krishnamoorthy2005condorcet}, whose results we recall in \Cref{th:impartial}. In IC, as in the broader GIC framework, our main contribution is to provide the rate of convergence and a method for computing higher-order terms in the asymptotic expansion.


Other specific cases of GIC have been studied, notably the \emph{Dual Culture} \cite{gehrlein2006condorcet} and the \emph{Perturbed Culture} \cite{williamson1967social,gehrlein2006condorcet}. Although the \emph{Mallows model} \cite{mallows1957non} is popular, the probability of a Condorcet winner in this framework has received little attention. This is likely because, for large electorates, the limiting behavior is trivial: the candidate favored by the culture becomes the Condorcet winner with probability tending to 1. The question of the convergence rate, which we explore here, is much more interesting.

More distant from our work, other models of culture fall outside the GIC framework as they do not assume voter independence. Notable examples include \emph{Impartial Anonymous Culture (IAC)} \cite{gehrlein1976condorcet,lepelley1989contribution,gehrlein2006condorcet} and, more generally, \emph{Pólya-Eggenberger models} \cite{berg1992note,gehrlein2006condorcet}.

One can also study the notion of a \emph{Weak Condorcet Winner}, a candidate who wins or ties in every pairwise comparison. \cite{krishnamoorthy2005condorcet} note that the probability of a tie in a pairwise comparison tends to zero as the electorate grows, implying that the limiting probability of a Weak Condorcet Winner is the same as for a Condorcet Winner. However, the convergence speeds may differ, as discussed in \Cref{sec_mallows-last-3}. Other works examine the probability that the \emph{pairwise majority relation} is transitive \cite{gehrlein2006condorcet,kelly1974voting,bell1978happens,bell1981random}. While this could be addressed with analytic combinatorics, it is beyond the scope of this paper.


Our analysis is based on representing voting configurations by polynomials, expressing probabilities via coefficient extraction, and deriving asymptotics using the saddle-point method. In Impartial Anonymous Culture, many voting questions \cite{gehrlein2002arithmetic,huang2000analytical,lepelley2008ehrhart,wilson2007probability} have been studied using Ehrhart polynomials \cite{barvinok1994polynomial}, reducing the enumeration of voting configurations to counting integral points in a polyhedron. \cite{maassen2002weak} use generating functions to express the probability of a Condorcet Winner in Impartial Culture or its variant with weak orders, obtaining exact expressions and asymptotics for several cases. However, none of these works apply complex analysis to derive asymptotics. \cite{may1971some} uses the saddle-point method to study the asymptotics as $m \to \infty$ of the limit as $n \to \infty$ of the probability that a candidate is the Condorcet Winner in Impartial Culture. This double limit problem is not addressed in this paper and leads to very different computations, involving only a univariate real integral.

\paragraph{Limitations}
A limitation of our work is that we focus on the probability that a \emph{given} candidate is an $\valpha$-winner, rather than the probability that at least one exists. For certain values of $\valpha$, multiple candidates may satisfy this condition simultaneously, requiring inclusion-exclusion techniques to compute the probability of existence. We exclude this issue from the scope of this paper, as it does not arise for the Condorcet winner, who, if existent, is unique; thus, its probability is simply the sum of individual probabilities.

Moreover, our main results concern the asymptotic behavior as \( n \) tends to infinity. While our approach also provides exact expressions for finite \( n \), these expressions are essentially the same as those obtained using classical combinatorial methods. The main advantage of our method, therefore, lies in its ability to analyze asymptotic behavior.

We assume that voters have independent preferences, excluding cases with more complex interactions. However, De Finetti's theorem \cite{definetti1929funzione,hewitt1955symmetric} asserts that \emph{exchangeable} random variables (i.e., roughly symmetric ones) are conditionally independent given a latent variable. In our case, the preferences of voters are exchangeable up to random relabeling. Therefore, the problem can, in principle, be reduced to independent preferences by conditioning on the latent variable and integrating over its possible values.

To avoid degenerate cases, we also assume that the preference distribution is \emph{generic}, meaning that all rankings have positive probability.
While analytic combinatorics methods could be applied without this assumption, doing so would introduce additional subcases that we prefer to avoid, where the \emph{saddle point} (which we will define shortly) has null or infinite coordinates.

Finally, in some situations, our method requires computing a sum that, in the worst case, may contain up to \( \mathcal{O}(2^m) \) terms (cf. \Cref{sec_supercritical-case}). However, it is worth noting that merely specifying the full probability distribution already requires \( m! \) real numbers in general.

\paragraph{Roadmap}

\Cref{sec_preliminaries} presents the necessary preliminaries. \Cref{sec_step-by-step} translates our combinatorial problem into an analytic question and introduces the notion of \emph{saddle point}. Depending on the coordinates of the saddle point, the analysis then branches into different cases, detailed in \Cref{sec_subcritical-case,sec_critical-case,sec_mixed-case,sec_supercritical-case}. \Cref{sec_error-terms} extends the analysis by showing how to derive higher-order terms in the asymptotic expansion. \Cref{sec_conclusion} concludes.

\Cref{sec_appendix-Elie} establishes the technical results of analytic combinatorics used throughout the paper.  
\Cref{sec_appendix-gic} proves general properties of the GIC model.  
\Cref{sec_appendix-ic-and-mallows} focuses on properties specific to Impartial Culture and the Mallows model.  

The paper is accompanied by the Python package \texttt{Actinvoting} (Analytic Combinatorics Tools In Voting), which was used for symbolic mathematics and numerical simulations. It is available at\\ \url{https://github.com/francois-durand/actinvoting}.

\section{Preliminaries}\label{sec_preliminaries}

In this section, we begin by introducing the main definitions and notations, and we formulate our research question.
We then briefly present the field of analytic combinatorics.

\subsection{Definitions, Notations, and Research Question}\label{sec_definitions-and-notations}

The number of voters is denoted by \( n \). We define the set of candidates as $\{1, \dots, m\}$, where \( m \) denotes the number of candidates.  A \emph{profile} represents the preferences of the voters and is defined as a list of \( n \) rankings over the set of candidates.  A \emph{culture} refers to a probability distribution over the \( (m!)^n \) possible profiles. We assume that voters have independent preferences, meaning the culture is characterized by a probability $p_r$ for each preference ranking~$r$.
Surprisingly, there is no standard terminology for this common assumption, which we refer to as a \emph{General Independent Culture (GIC)}.
We also assume that the culture is \emph{generic}, in the sense that \( p_r > 0 \) for every ranking~\( r \). The notation \( \proba \) refers to the probability under the given culture.

A \emph{Condorcet winner} is a candidate who is preferred to every other candidate by more than \( \frac{n}{2} \) voters. Without loss of generality, we study the probability $\probacondorcetwinner{m}$ that candidate \( m \) is the Condorcet winner.  
Our theoretical results extend this question to a more general notion, which we call an \( \valpha \)-\emph{winner}, defined as follows. Let \( \mathcal{A} := \{1, \dots, m-1\} \) be the set of adversaries of candidate \( m \). 
Consider a vector \( \valpha := (\alpha_1, \dots, \alpha_{m-1}) \in (0,1)^{m-1} \), where each \( \alpha_j \) represents the proportion of voters that candidate \( m \) needs on their side to win the pairwise comparison against candidate \( j \).
For an adversary \( j \in \mathcal{A} \), we write \( m \succ_{\valpha} j \), and we say that \( m \) wins against \( j \) in the sense of \( \valpha \), if candidate \( m \) is preferred to~\( j \) by more than \( \alpha_j n \) voters. For a subset of adversaries \( \mX = \{j_1, \dots, j_k\} \subseteq \mathcal{A} \), we write \( m \succ_{\valpha} \mX \), or equivalently \( m \succ_{\valpha} j_1, \dots, j_k \), if \( m \succ_{\valpha} j \) holds for every \( j \in \mX \). Similarly, we write \( m \preccurlyeq_{\valpha} j \), and we say that \( m \) does not win against \( j \) in the sense of \( \valpha \), if candidate \( m \) is preferred to \( j \) by at most \( \alpha_j n \) voters.  
We say that candidate \( m \) is an \( \valpha \)-\emph{winner} if \( m \succ_{\valpha} \mathcal{A} \), \ie \( m \) is preferred to each adversary \( j \) by more than \( \alpha_j n \) voters. The standard notion of a Condorcet winner corresponds to the special case \( \valpha = \left(\frac{1}{2}, \dots, \frac{1}{2}\right) \). The notion of a \emph{generalized Condorcet winner}, as introduced by \cite{sertel2004strong}, is recovered by considering a vector \( \valpha \) whose coordinates are equal.  

The goal of this paper is to study \(\probamalphawinner\), the probability that candidate \( m \) is an \mbox{\( \valpha \)-winner}, with a particular focus on its asymptotic behavior as the number of voters \( n \) tends to infinity. Note that the probability of the existence of a Condorcet winner can be obtained as the sum of the probabilities for all \( m \) candidates.

Our theoretical results will be illustrated using the \emph{Mallows model} \cite{mallows1957non}. This model is characterized by a reference ranking \( r_0 \) over the candidates and a concentration parameter \( \rho \in \mathbb{R}_{\ge 0} \). The probability of a ranking \( r \) is given by  
\[
p_r := \gamma e^{-\rho d(r,r_0)},
\]  
where \( \gamma \) is a normalization constant ensuring \( \sum_r p_{r} = 1 \), and \( d(r, r_0) \) denotes the Kendall-tau distance \cite{kendall1938new}, which counts the minimum number of adjacent swaps needed to transform \( r \) into~\( r_0 \). 
The Mallows model is commonly used in the field of \emph{epistemic democracy} \cite[Chapters 8 and 10]{brandt2016handbook}, where noisy evaluations from voters are collected with the aim of uncovering a hidden truth. The reference ranking models the hidden truth, and the concentration parameter \( \rho \) indicates the skill level of the voters.
When \( \rho = 0 \), all rankings are equally probable, recovering the classical \emph{Impartial Culture} model. We denote by \( \mathcal{M}_{m \;\textnormal{last}} \) (resp. \( \mathcal{M}_{m \;\textnormal{first}} \)) a Mallows culture with parameter \( \rho \), where the reference ranking \( r_0 \) places candidate \( m \) last (resp. first). We can assume, without loss of generality, that the reference ranking is \( (1,\dots,m) \) (resp. \( (m,\dots,1) \)).

Finally, we typeset vectors in boldface, \eg \( \vx := (x_1, \dots, x_{m-1}) \). If \( \mX \) is a set of indices, we define \( \vx_{{}_{\mX}} := (x_j)_{j \in \mX} \). We define $\log(\vx):=(\log(x_1),\dots,\log(x_{m-1}))$, and similarly for the exponential.
We write \( \vu \leq \vect{v}  \) if \( u_j \leq v_j \) for every coordinate \( j \), with similar notation for strict inequalities. 
We denote \( \vzero := (0, \ldots, 0) \) and \( \vone := (1, \dots, 1) \), where the vector’s size is understood from context. The diagonal matrix with diagonal elements $u_1, u_2, \ldots$ is denoted $\diag(\vu)$. Finally, for convenience, we set \( \vbeta := \vone - \valpha \), where \( \valpha \) is the vector of victory thresholds in the definition of an \( \valpha \)-winner.

\subsection{A Short Introduction to Analytic Combinatorics}

The techniques employed in this paper come from the field of analytic combinatorics \cite{flajolet2009analytic}. 
Within this framework, formal variables are introduced and linked to specific parameters that define a particular counting problem.
Once these variables are properly introduced, a formal power series is defined as a function of these variables, called the $\emph{generating function}$. It serves as a formal notation that succinctly represents the combinatorial structure of the problem. Furthermore, it can be interpreted as a function of complex variables, which can be studied with the powerful tools of complex analysis, hence giving insights into the original combinatorial problem.

To illustrate this, consider a classical combinatorial example: counting binary trees with \( k \) vertices. The problem is first encoded as a formal power series by introducing a variable \( z \) marking the number of vertices. The generating function \( T(z) \) is then defined as  
\[
T(z) := \sum\limits_{k=0}^{\infty} T_k z^k,
\]
where \( T_k \) is the number of binary trees with \( k \) vertices. The \emph{symbolic method} then translates the combinatorial structure into analytic properties.  
In this example, whose detailed analysis is beyond the scope of this paper, the combinatorial property that a binary tree is either an isolated leaf or a root with two binary subtrees translates into the equation  
\(
T(z) = z + z T(z)^2.
\)
To determine \( T_k \), we solve this equation and extract the coefficient of \( z^k \) in the Taylor expansion of \( T(z) \), which is denoted by the \emph{coefficient extraction}~\( [z^k] T(z) \). 

While this example is solved using an algebraic equation, other cases may involve different analytical techniques, such as differential equations or complex calculus.

\section{From Our Combinatorial Problem to Its Analytic Formulation}\label{sec_step-by-step}

As for binary trees, we first encode the problem of determining the probability that a candidate is an \(\valpha\)-winner. However, instead of a univariate infinite series, we now work with a multivariate polynomial.
We then show that the coefficient extractions can be achieved through \emph{Cauchy integrals}. Finally, to analyze their asymptotic behavior for large electorates, we introduce the general principle of the \emph{saddle-point method}. Its different sub-cases are later developed in \Cref{sec_subcritical-case,sec_critical-case,sec_mixed-case,sec_supercritical-case}.

\subsection{Symbolic Method}\label{sec_symbolic-method}

As a warm-up, consider the case \( m = 3 \).  
To determine whether candidate \( 3 \) is the Condorcet winner, or more generally an \( \valpha \)-winner, it suffices to know whether each voter ranks candidate \( 1 \) and/or \( 2 \) above candidate~\( 3 \), rather than their full ranking. Thus, we introduce formal variables \( x_1 \) (respectively \( x_2 \)) that indicates when a voter prefers candidate \( 1 \) (respectively \( 2 \)) over candidate \( 3 \).

To represent the probability distribution governing the preferences of a single voter, we introduce the characteristic polynomial \( P(x_1, x_2) \), as illustrated in \Cref{fig:symbolic-method}. In this polynomial, the probability of each ranking is multiplied by the appropriate formal variables, depending on whether candidate \( 1 \) and/or \( 2 \) is preferred over candidate \( 3 \). At this stage, the characteristic polynomial may appear to be merely a compact way of representing the probability distribution of a single voter.

\begin{figure}[ht!]
    \centering
    \begin{tikzpicture}[xscale=0.95,yscale=0.65]
        \coordinate (hg) at (-1,4.5) ;
        \coordinate (hd) at (11,4.5) ;
        
        \coordinate (a) at (-1,3.5)  ;
        \coordinate (b) at (11,3.5)  ;

        \coordinate (c) at (-1,1.3)  ;
	\coordinate (d) at (11,1.3)  ;
 
        \coordinate (bg) at (-1,0.3)  ; 
        \coordinate (bd) at (11,0.3)  ;

        \draw[thick] (hg) -- (hd) ;
	\draw[thick] (a) -- (b) ;
	\draw[thick] (c) -- (d) ;
        \draw[thick] (bg) -- (bd) ;

	\coordinate (a1) at (1,4.5)  ;
	\coordinate (a2) at (3,4.5)  ;
	\coordinate (a3) at (5,4.5)  ;
	\coordinate (a4) at (7,4.5)  ;
	\coordinate (a5) at (9,4.5)  ;
			
	\coordinate (b1) at (1,0.3)  ;
	\coordinate (b2) at (3,0.3)  ;
	\coordinate (b3) at (5,0.3)  ;
	\coordinate (b4) at (7,0.3)  ;
	\coordinate (b5) at (9,0.3)  ;
			
        \draw[thick] (hg) -- (bg) ;
	\draw (a1) -- (b1) ;
	\draw (a2) -- (b2) ;
	\draw (a3) -- (b3) ;
	\draw (a4) -- (b4) ;
	\draw (a5) -- (b5) ;
	\draw[thick] (hd) -- (bd) ;
		
	\node (p1) at (0,4) {$p_{123}$} ;
	\node (p2) at (2,4) {$p_{132}$} ;
	\node (p3) at (4,4) {$p_{213}$} ;
	\node (p4) at (6,4) {$p_{231}$} ;
	\node (p5) at (8,4) {$p_{312}$} ;
	\node (p6) at (10,4) {$p_{321}$} ;
			
	\node (a) at (0,3) {$1$};
	\node (b) at (0,2.4) {$2$};
	\node[draw, inner sep= 0pt, minimum width=4mm, minimum height=4mm] (c) at (0,1.8) {$3$};
			
	\node (a) at (2,3) {$1$};
	\node[draw, inner sep= 0pt, minimum width=4mm, minimum height=4mm] (b) at (2,2.4) {$3$};
	\node (c) at (2,1.8) {$2$};
			
	\node (a) at (4,3) {\strut $2$};
	\node (b) at (4,2.4) {\strut $1$};
	\node[draw, inner sep= 0pt, minimum width=4mm, minimum height=4mm] (c) at (4,1.8) {$3$};
			
	\node (b1) at (6,3) {$2$};
	\node[draw, inner sep= 0pt, minimum width=4mm, minimum height=4mm] (a1) at (6,2.4) {$3$};
	\node (c1) at (6,1.8) {$1$};
			
	\node[draw, inner sep= 0pt, minimum width=4mm, minimum height=4mm] (a2) at (8,3) {$3$};
	\node (b2) at (8,2.4) {$1$};
	\node (c2) at (8,1.8) {$2$};
			
	\node[draw, inner sep= 0pt, minimum width=4mm, minimum height=4mm] (a) at (10,3) {$3$};
	\node (b) at (10,2.4) {$2$};
	\node (c) at (10,1.8) {$1$};
			
			
	\node (x12) at (0,0.8) {\strut $x_1 x_2$};
	\node (x1) at (2,0.8) {\strut $x_1$};
	\node (x21) at (4,0.8) {\strut $x_1 x_2$};
	\node (x2) at (6,0.8) {\strut $x_2$};
	\node (1) at (8,0.8) {\strut $1$};
	\node (1) at (10,0.8) {\strut $1$};

        \begin{scope}[yshift=1.2cm]
	\node (poly) at (-2,-1.5) {\strut $P(x_1,x_2) \; :=$};

	\node (plus) at (1,-1.5) {\strut $+$};
        \node (plus) at (3,-1.5) {\strut $+$};
        \node (plus) at (5,-1.5) {\strut $+$};
        \node (plus) at (7,-1.5) {\strut $+$};
        \node (plus) at (9,-1.5) {\strut $+$};

        \node at (0,-1.5) {\strut $p_{123}\cdot x_1 x_2$};
       
        \node at (2,-1.5) {\strut $p_{132} \cdot x_1$};

        \node at (4,-1.5) {\strut $p_{213}\cdot x_1 x_2$};

        \node at (6,-1.5) {\strut $p_{231}\cdot x_2$};

        \node at (8,-1.5) {\strut $p_{312}\cdot 1$}; 

        \node at (10,-1.5) {\strut $p_{321} \cdot 1$}; 
        \end{scope}
		\end{tikzpicture}
    \caption{Definition of the characteristic polynomial \( P(x_1, x_2) \) encoding the probability distribution for a single voter when \( m = 3 \). The notation $p_{123}$, for example, is a shorthand for $p_{(1, 2, 3)}$. The formal variable $x_j$ marks rankings where candidate $j$ is preferred to candidate~$3$.}
    \label{fig:symbolic-method}
\end{figure}

Note that the characteristic polynomial can be rewritten as  
\begin{equation}\label{eq_polynomial-m-is-3}
  P(x_1,x_2) = p_{\emptyset} + p_{\{1\}} x_1 + p_{\{2\}} x_2 + p_{\{1,2\}} x_1 x_2,  
\end{equation}
where, for example, \( p_{\{1, 2\}} := p_{123} + p_{213} \) is the probability that a voter ranks both candidates \( 1 \) and \( 2 \) above candidate \( 3 \). 
This probability can also be expressed as a multivariate coefficient extraction, denoted by \( [{x_1}^{1}{x_2}^{1}]P(x_1, x_2) \).

More generally, for an arbitrary value of $m$ and a subset of adversaries $\mX \subseteq \mathcal{A}$, we overload the notation $p$ by defining $p_{{}_{\mX}}$ as the total probability that in a random ranking, the set of adversaries above $m$ is exactly $\mX$. We then encode the probability distribution for a single voter as follows.
\begin{definition}\label{def_def-of-P}
The \emph{characteristic polynomial} is defined as
    \[P(\vx) := \sum_{\mX \subseteq \mathcal{A}} \Big(p_{{}_{\mX}} \prod_{j \in \mX} x_j\Big).\]
\end{definition}
Note that $\sum_r p_r =1$ implies that $P(\vone)=1$.

Now, let us examine what happens with several voters, beginning with the case with $m=3$ candidates and $n=2$ voters.
The core observation, illustrated in \Cref{fig:treeproba}, is that the algebraic operation of developing $P(x_1, x_2)^{2}$ is isomorphic to the probability tree associated with the preferences of two voters. For example, the probability that, among the two voters, exactly two prefer candidate \( 1 \) over candidate \( 3 \), and exactly one prefers candidate \( 2 \) over candidate \( 3 \) is given by the coefficient extraction \( [{x_1}^2 {x_2}^{1}]P(x_1, x_2)^2 = 2 p_{\{1\}} p_{\{1, 2\}} \).

\begin{figure}[ht!]
\begin{center}
\begin{tikzpicture}[scale=0.43,yscale=0.8]





    \node (O2) at (0,-10) {1};
    \node (empty2) at (-12,-13) { $p_{\emptyset}$ };
    \node (a2) at (-4,-13) { $p_{\{1\}}x_1$ };
    \node (b2) at (4,-13) {$p_{\{2\}}x_2$ };
    \node (ab2) at (12,-13) {$p_{\{1, 2\}}x_1 x_2$ };

    \draw (O2) -- (empty2) ;
    \draw[very thick] (O2) -- (a2) ;
    \draw (O2) -- (b2) ;
    \draw[very thick] (O2) -- (ab2) ;

    \node (emptyT1a) at (-15,-16) { $ p_{\emptyset}$ };
    \node (emptyT1b) at (-13.5,-16.8) { $p_{\{1\}}x_1$ };
    \node (emptyT1c) at (-11.5,-16) { $p_{\{2\}}x_2$ };
    \node (emptyT1d) at (-9.5,-16.8) { $p_{\{1, 2\}}x_1 x_2$};

    \draw (empty2) -- (emptyT1a) ;
    \draw (empty2) -- (emptyT1b) ;
    \draw (empty2) -- (emptyT1c) ;
    \draw (empty2) -- (emptyT1d) ;

    \node (aT1a) at (-7,-16) { $ p_{\emptyset}$ };
    \node (aT1b) at (-5.5,-16.8) { $p_{\{1\}}x_1$ };
    \node (aT1c) at (-3.5,-16) { $p_{\{2\}}x_2$ };
    \node (aT1d) at (-1.5,-16.8) { $p_{\{1,2\}}x_1 x_2$ };

    \draw (a2) -- (aT1a) ;
    \draw (a2) -- (aT1b) ;
    \draw (a2) -- (aT1c) ;
    \draw[very thick] (a2) -- (aT1d) ;

    \node (bT1a) at (1,-16) { $p_{\emptyset}$ };
    \node (bT1b) at (2.5,-16.8) { $p_{\{1\}}x_1$ };
    \node (bT1c) at (4.5,-16) { $p_{\{2\}}x_2$ };
    \node (bT1d) at (6.5,-16.8) {$p_{\{1, 2\}}x_1 x_2$ };

    \draw (b2) -- (bT1a) ;
    \draw (b2) -- (bT1b) ;
    \draw (b2) -- (bT1c) ;
    \draw (b2) -- (bT1d) ;

    \node (abT1a) at (9,-16) { $p_{\emptyset}$ };
    \node (abT1b) at (10.5,-16.8) { $p_{\{1\}}x_1$ };
    \node (abT1c) at (12.5,-16) { $p_{\{2\}}x_2$ };
    \node (abT1d) at (14.5,-16.8) {$p_{\{1,2\}}x_1 x_2$ };

    \draw (ab2) -- (abT1a);
    \draw[very thick] (ab2) -- (abT1b);
    \draw (ab2) -- (abT1c) ;
    \draw (ab2) -- (abT1d) ;

\end{tikzpicture}
\caption{Tree representing the algebraic expansion of \( P(x_1, x_2)^2 \). The edges correspond to multiplications, and each path from the root to a leaf represents a term in the expansion. For example, the highlighted paths correspond to the terms involving \( {x_1}^2 {x_2}^{1} \), with a total coefficient given by \( [{x_1}^2 {x_2}^{1}]P(x_1, x_2)^2 = 2 p_{\{1\}} p_{\{1, 2\}} \). This coefficient extraction corresponds to the probability that, among the two voters, exactly two prefer candidate \( 1 \) over candidate \( 3 \), and exactly one prefers candidate \( 2 \) over candidate \( 3 \).}
\label{fig:treeproba}
\end{center}
\end{figure}
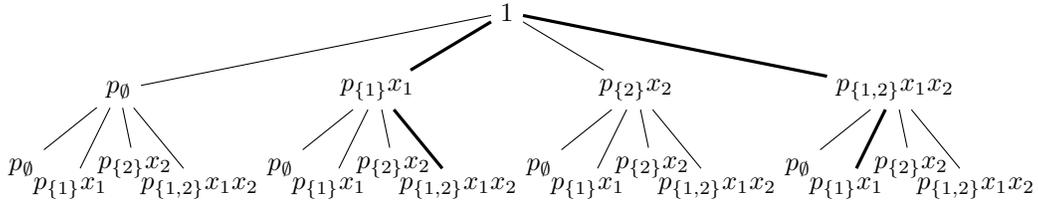

More generally, when $m$ and $n$ are arbitrary, for a vector $\vl \in \mathbb{N}^{m-1}$, the coefficient extraction 
\[ [\vx^{\vl}]P(\vx)^{n} := [{x_1}^{\ell_1}\cdots{x_{m-1}}^{\ell_{m-1}}]P(x_1,\dots, x_{m-1})^{n} \] corresponds to the probability that each adversary $j$ is preferred to candidate $m$ by exactly $\ell_j$ voters.

Now, for $m$ to be an $\valpha$-winner, every adversary $j$ must be preferred to $m$ by less than $(1-\alpha_j) n = \beta_j n$ voters. Summing over all such cases, the probability that candidate $m$ is an $\valpha$-winner is given~by
\[ \probamalphawinner = [\vx^{< \vbeta n}]P(\vx)^{n} := \sum_{\vl < \vbeta n} [\vx^{\vl}] P(\vx)^{n}. \]
However, the coordinates of \( \vbeta n \) are not necessarily integers, and we prefer to express the summation bounds in terms of integers:  
\begin{equation}\label{eq_alpha-winner-coeff-extraction}
    \probamalphawinner = [\vx^{\le \lceil \vbeta n \rceil-1}]P(\vx)^{n}  := \sum_{\vl \le \lceil \vbeta n \rceil-1} [\vx^{\vl}] P(\vx)^{n}.
\end{equation}

Note that if we define a \emph{weak $\valpha$-winner} as a candidate preferred to any other candidate~$j$ by \emph{at least} $\alpha_j n$ voters, then all the results of this paper extend to weak $\valpha$-winners by replacing  
$\lceil \vbeta n \rceil - 1$ with $\lfloor \vbeta n \rfloor$.
In particular, our findings for the Condorcet winner naturally extend to the \emph{weak Condorcet winner} (defined analogously) by replacing all occurrences of $\lceil n/2 \rceil - 1$ with $\lfloor n / 2 \rfloor$.

\subsection{Coefficient Extraction via Cauchy Integrals}

In analytic combinatorics, it is common to extract the coefficient \( f_\ell \) of a series \( F(z) = \sum_k f_k z^k \) with a positive radius of convergence by expressing it as a Cauchy integral \cite[Th. IV.4]{flajolet2009analytic}:  
\begin{equation}\label{eq_cauchy-formula}
    [z^\ell] F(z) = \frac{1}{2 i \pi} \oint F(z) \frac{dz}{z^{\ell+1}}.
\end{equation}
The symbol \( \oint \) indicates that the integral is taken over a positively oriented loop around \( 0 \) in the complex plane. 
To intuitively understand this equality, one can expand \( F(z) \) into its series, interchange sum and integral, and integrate over a circle of small radius centered at $0$. The residue theorem (see \eg \cite[Th. IV.3]{flajolet2009analytic}) then implies that all terms of the form \( z^{k-\ell-1} \) result in a zero integral unless \( k = \ell \).
Now, to extract \( [z^{\le L}]F(z) := f_0 +\dots + f_L \), we apply
\begin{equation}\label{eq_coeff-extraction-lower-than-L}
[z^{\leq L}] F(z) = [z^L] \frac{F(z)}{1-z},
\end{equation}
which is obtained by remarking that $\frac{F(z)}{1-z}= \left(\sum_k f_k z^k \right) \left( \sum_k z^k \right)=\sum_{k} \left(\sum_{\ell=0}^{k} f_{\ell} \right) z^k$.

\Cref{eq_cauchy-formula,eq_coeff-extraction-lower-than-L} extend to the multivariate case. Applying them to \Cref{eq_alpha-winner-coeff-extraction} gives

\begin{equation}\label{eq_alpha-winner-as-cauchy-integral}
     \probamalphawinner = \frac{1}{(2 i \pi)^{m-1}} \oint
		\frac{P(\vx)^n}{\prod_{j=1}^{m-1}(1-x_j)}
		\frac{d \vx}{ \prod_{j=1}^{m-1}x_j^{ \lceil \beta_j n \rceil }},
\end{equation}
where $\oint d\vx$ is a shorthand for $\oint \dots \oint d x_{1} \dots d x_{m-1}$.

\subsection{Saddle Point Method}\label{sec_saddle-point-method}

We study the asymptotic behavior of our complex integrals as $n \to +\infty$.
To build intuition, let us momentarily disregard the ceiling function in \Cref{eq_alpha-winner-as-cauchy-integral} and observe that
\[
\oint \frac{P(\vx)^n}{\prod_{j=1}^{m-1}(1-x_j)}
		\frac{d \vx}{ \prod_{j=1}^{m-1}x_j^{  \beta_j n }} = 
  \oint \frac{e^{-n\left(-\log(P(\vx)) + \vbeta^{\transp}\log(\vx)\right)}}{\prod_{j=1}^{m-1} \left(1-x_j\right)} d\vx =
  \oint A(\vx) e^{-n\psi(\log(\vx))} d\vx,
\]
for an appropriately defined function \( A(\vx) \) and $\psi(\vt) = - \log(P(e^{\vt})) + \vbeta^T \vt$, where $\vt=\log(\vx)$.


To analyze the asymptotic behavior of such integrals, it is standard to apply the \emph{saddle-point method} \cite[Chapter~VIII]{flajolet2009analytic}. The key idea is to find a contour of integration where, as $n$ approaches infinity, the integral's dominant contribution arises from the neighborhood of a specific point, while the contribution from the rest of the contour becomes negligible in comparison. 

For this approach to be valid, the chosen point must satisfy the condition that the gradient of the function~\( \psi \) vanishes. In the univariate case, since a holomorphic function cannot exhibit local extrema in modulus, the graph of the function's modulus at such a point resembles a saddle. This is why it is called a \emph{saddle point}, even in the general multivariate case.

In our case, the presence of the integer parts \( \lceil \beta_j n \rceil \) does not fundamentally alter the method. The necessary adaptations are detailed in \Cref{sec_appendix-Elie}. Moreover, \Cref{th_conditions-P-satisfied} in \Cref{sec_P_theorem_conditions} ensures the existence of a unique saddle point. The absolute value of $e^{-n \psi(\log(\vx))}$ at the saddle point is minimal on the real line and maximal on the integration circle.
These considerations lead to define the following objects, which we will use extensively in the rest of the paper. 
\begin{definition}\label{def_K-tau-etc}
 The \emph{cumulant generating function} of $P$  is
    \[
K: \vt \in  \mathbb{R}^{m-1} \mapsto \log\left(P(e^{\vt})\right).
\]
Observing that the function $\psi: \vt \mapsto - K(\vt)+ \vbeta^{\transp} \vt$ is strictly concave on $\reals^{m-1}$ (an immediate consequence of \Cref{th_convexity-cumulant-gf} in  \Cref{sec_appendix-Elie-preliminaries}), we define the \emph{log saddle point} $\vtau$ and the \emph{saddle point} $\vzeta$ as
\[
\vtau := \argmax_{\vt \in \mathbb{R}^{m-1}} \big( -K(\vt) + \vbeta^{\transp} \vt \big), \qquad \qquad \vzeta = e^{\vtau}.
\]
Remark that this implies that $\vzeta$ minimizes the function $\vx \in (\mathbb{R}_{>0})^{m-1} \mapsto \frac{P(\vx)}{\prod_{j}{x_j}^{\beta_j}}$.

\smallskip
To approximate the integral in \Cref{eq_alpha-winner-as-cauchy-integral}, a key ingredient is the Taylor expansion of \( \psi \) at its saddle point. Consequently, we will need the Hessian of \( \psi \) at \( \vtau \). Since the term \( \vbeta^{\transp} \vt \) is linear, this Hessian simplifies to \( \mathcal{H}_{K}(\vtau) \), the Hessian of $K(\vt)$ at $\tau$, which can be computed either directly or as
\begin{equation}\label{eq_hessian-of-K-in-tau}
     \mH_K(\vtau) =
     \diag(\vzeta) \frac{\mH_P(\vzeta)}{P(\vzeta)} \diag(\vzeta)
     + \diag(\vbeta)
     - \vbeta \vbeta^T.
\end{equation}
\end{definition}

\subsubsection*{Probabilistic interpretation.}

There is a nice probabilistic interpretation for the cumulant generating function. 
For any vector parameter $\vs \in \reals^{m-1}$,
we define the random vector $\vX^{(\vs)}$ as follows. For any subset $\mX \subseteq \mA$ of adversaries whose indicator vector is denoted $\vl \in \{0,1\}^{m-1}$:
\[
    \proba(\vX^{(\vs)} = \vl) :=
    \frac{[\vx^{\vl}] P(\vx) e^{\vl^T \vs}}{P(e^{\vs})} = \frac{p_{{}_\mX}e^{\vl^T \vs}}{P(e^{\vs})}.
\]
Note that $\vX^{(\vzero)}$ corresponds to the original probability distribution, as encoded by $P$.
For $\vs \neq \vzero$, the distribution of $\vX^{(\vs)}$ represents a perturbation of the original culture.

The \emph{cumulant generating function} of $\vX^{(\vs)}$ is classically defined as
\[
K_{\vs}: \vt \mapsto    \log(\mean(e^{\vt^T \vX^{(\vs)}})).
\]
Remark that $K = K_{\vzero}$.
It is well known that
the gradient and Hessian matrix of this function at $\vt = \vzero$
are equal to the mean and covariance matrix of $\vX^{(\vs)}$.
Actually, all the information contained in $K_{\vs}$ can be retrieved through $K$ thanks to the following observation.
\begin{align*}
    K_{\vs}(\vt) &=
    \log \left(
    \sum_{\vl} 
    \frac{\big([\vx^{\vl}] P(\vx)\big) e^{\vl^T \vs}}{P(e^{\vs})}
    e^{\vl^T \vt}
    \right)
    =
    \log \left(
    \sum_{\vl} 
    \big([\vx^{\vl}] P(\vx)\big) e^{\vl^T (\vs + \vt)}
    \right)
    - \log(P(e^{\vs}))
    \\&=
    \log(P(e^{\vs + \vt})) - \log(P(e^{\vs}))
    =
    K(\vs + \vt) - \log(P(e^{\vs})),
\end{align*}
so the mean and covariance matrix of $\vX^{(\vs)}$ are respectively equal to the gradient and the Hessian of $K(\vt)$
at $\vt = \vs$.

Since \( \vtau \) is the vector where the gradient of \( \psi: \vt \mapsto - K(\vt) + \vbeta^T \vt \) vanishes, it follows that the gradient of \( K \) at \( \vtau \), \ie the mean of $X^{(\vtau)}$, is equal to \( \vbeta \). In other words, the distribution of $X^{(\vtau)}$, \ie the perturbation of the original culture induced by $\vtau$, is such that, in expectation, candidate \( m \) is precisely at the threshold for being an \( \valpha \)-winner.

\subsubsection*{Criticality.}
In \Cref{eq_alpha-winner-as-cauchy-integral}, the terms of the form \( \frac{1}{1 - x_j} \) require special precautions. If \( \vzeta_j < 1 \), there is no issue, as we can integrate over a circle of radius \( \zeta_j \) that loops around \( 0 \) without enclosing the singularity at \( 1 \). In this case, we say that the coordinate \( \vzeta_j \) is \emph{subcritical}.
However, if \( \vzeta_j \) is \emph{critical}, i.e., if \( \vzeta_j = 1 \), then the integration path must be adjusted to slightly bypass the singularity. 
Finally, if \( \vzeta_j \) is \emph{supercritical}, i.e., if \( \vzeta_j > 1 \), then a circle of radius \( \zeta_j \) necessarily encloses a singularity, and the issue remains even with slight deformations of the integration path, requiring the singularity at~1 to be explicitly accounted for in the analysis.

In \Cref{sec_subcritical-case}, we analyze the case where all coordinates are subcritical, while \Cref{sec_critical-case} focuses on the case where all coordinates are critical. \Cref{sec_mixed-case} extends the analysis to the mixed case, where each coordinate is either subcritical or critical. To address supercritical coordinates, we introduce the necessary techniques in \Cref{sec_supercritical-case}, which makes it possible to tackle the most general setting. Finally, \Cref{sec_error-terms} provides higher-order terms in the asymptotic expansion of the previous formulas. Together, these sections establish the theoretical asymptotic behavior of an $\valpha$-winner in the GIC, followed by an application to the Condorcet winner in a particular culture.

\section{Subcritical Case}\label{sec_subcritical-case}

In this section, we study the case where all coordinates of the saddle point are subcritical.
We then illustrate this scenario with \( \mathcal{M}_{3 \textnormal{ last}} \), a Mallows culture where the reference ranking is \( (1,2,3) \), making candidate \( 3 \) particularly unlikely to be the Condorcet winner. 

\subsection{Theoretical Result in the Subcritical Case}\label{sec_subcritical-theoretical-result}

Since \( P(\vx) \) has strictly positive coefficients and \( \vbeta \in (0,1)^{m-1} \), it follows from \Cref{sec_P_theorem_conditions} that all the assumptions of \Cref{th_large-powers-limit} in \Cref{sec_appendix-large-powers} are satisfied, allowing its application. Recall that \( P(\vx) \) is introduced in \Cref{def_def-of-P}, and that \( \vtau \), \( \vzeta \), and \( \hessian_{K}(\vtau) \) are defined in \Cref{def_K-tau-etc}.  

\begin{theorem}\label{th_subcritical}

Assume that all coordinates of the saddle point are subcritical, \ie $\zeta_j < 1$
 for every adversary $j\in\mathcal{A}$. Then:
\[ \probamalphawinner \underset{n \to +\infty}{\sim} \frac{P(\vzeta)^n}{\prod\limits_{j\in \mathcal{A}}\big( (1-\zeta_j){\zeta_j}^{\lceil\beta_j n \rceil -1}\big)} \frac{1}{\sqrt{(2\pi n)^{m-1}\det( \hessian_{K}(\vtau) ) }}.\]
\end{theorem}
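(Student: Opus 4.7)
The plan is to apply the multivariate saddle-point theorem (\Cref{th_large-powers-limit} in \Cref{sec_appendix-large-powers}) to the Cauchy integral representation \Cref{eq_alpha-winner-as-cauchy-integral}, after verifying its hypotheses via \Cref{th_conditions-P-satisfied}. In the subcritical regime every $\zeta_j<1$, so the natural integration contour---the product of circles $|x_j|=\zeta_j$---loops around the origin without enclosing the singularities of $1/(1-x_j)$ at $x_j=1$, and the factor $\prod_j (1-x_j)^{-1}$ remains smooth and bounded on the contour. Genericity ($p_r>0$ for every ranking $r$) ensures uniqueness of the saddle point and the strict modular dominance $|P(\zeta_1 e^{i\theta_1},\ldots,\zeta_{m-1}e^{i\theta_{m-1}})|<P(\vzeta)$ for $\vtheta\notin 2\pi\integers^{m-1}$, which is what is needed to kill the tail contributions.

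The core computation is the classical multivariate Laplace argument. Parametrizing $x_j=\zeta_j e^{i\theta_j}$, using $dx_j=ix_j\,d\theta_j$, and pulling out the constant factor $P(\vzeta)^n\prod_j\zeta_j^{-(\lceil\beta_j n\rceil-1)}$, the remaining integrand takes the form
\[
\frac{\exp\bigl(n[K(\vtau+i\vtheta)-K(\vtau)] - i(\lceil\vbeta n\rceil-\vone)^{\transp}\vtheta\bigr)}{\prod_j (1-\zeta_j e^{i\theta_j})}.
\]
I would split $[-\pi,\pi]^{m-1}$ into a central window of radius $n^{-2/5}$ and its complement, the latter contributing an exponentially small amount thanks to strict modular dominance. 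On the window, Taylor expand $K$ at $\vtau$; the saddle condition $\nabla K(\vtau)=\vbeta$ cancels the imaginary linear term $in\vbeta^{\transp}\vtheta$, leaving the Gaussian kernel $\exp(-\tfrac{n}{2}\vtheta^{\transp}\mH_K(\vtau)\vtheta)$ multiplied by a smooth prefactor that may be replaced by its value $\prod_j(1-\zeta_j)^{-1}$ at the origin. The Gaussian integral over $\reals^{m-1}$ evaluates to $(2\pi/n)^{(m-1)/2}/\sqrt{\det \mH_K(\vtau)}$, and reassembling the prefactors produces exactly the claimed equivalent.

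A small bookkeeping remark handles the ceilings: the offsets $\lceil\beta_j n\rceil-\beta_j n-1$ are uniformly bounded in $(-1,0]$, so on the Gaussian scale $\|\vtheta\|=O(n^{-1/2})$ the residual oscillatory factor $e^{-i(\lceil\vbeta n\rceil-\vbeta n-\vone)^{\transp}\vtheta}$ equals $1+O(n^{-1/2})$ and does not affect the leading constant; the exponent $\lceil\beta_j n\rceil-1$ in the statement arises purely from the change of variables. The main obstacle is thus analytic rather than conceptual: it lies in obtaining a uniform Taylor remainder on the central window and a quantitative tail bound from strict modular dominance in all $m-1$ directions simultaneously. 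Both are absorbed once and for all into \Cref{th_large-powers-limit}, so at this point in the paper the proof reduces essentially to verifying its hypotheses via \Cref{th_conditions-P-satisfied}.
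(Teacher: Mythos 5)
Your proposal is correct and follows exactly the paper's route: the paper proves \Cref{th_subcritical} by a one-line reduction to \Cref{th_large-powers-limit}, applied to \Cref{eq_alpha-winner-as-cauchy-integral} with $B = P$, $A(\vx) = \prod_{j}(1-x_j)^{-1}$, and $\vkappa_n = \lceil \vbeta n\rceil - \vone = \vbeta n + \bigO(1)$, after checking the hypotheses via \Cref{th_conditions-P-satisfied}. Your additional sketch of the internal mechanism of \Cref{th_large-powers-limit} (central window of width $n^{-2/5}$, strict modular dominance to control the tail, Taylor expansion of $K$ at $\vtau$, and the ceiling bookkeeping) is faithful to the paper's proof of that lemma, including the admissible choice of window exponent in $(\max(\delta,1/3),1/2)$.
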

To interpret this asymptotic behavior, recall that $\vzeta$ is the unique global minimizer of \mbox{$ \vx \mapsto \frac{P(\vx)}{\prod_{j}{x_j}^{\beta_j}} $}. Hence, $\frac{P(\vzeta)}{\prod_{j}{\zeta_j}^{\beta_j}} < \frac{P(\vone)}{\prod_{j}1^{\beta_j}} = 1$. 
Thus, \Cref{th_subcritical} indicates that, in the subcritical case, $\probamalphawinner$ tends to $0$ exponentially fast.

\subsection{Application: Mallows Culture $\mM_{3 \text{ last}}$}\label{sec_mallows-last-3}




In the model \( \mM_{3 \textnormal{ last}} \), defined in \Cref{sec_definitions-and-notations}, candidate \( 3 \) is ranked last among the three options in the reference ranking \( (1,2,3) \), which can be interpreted as representing the hidden truth. We now analyze the probability of the \emph{a priori} undesirable event in which candidate \( 3 \) nevertheless emerges as the Condorcet winner.

To find this probability, the first step is to determine the saddle point. We provide the general formula for an arbitrary number of candidates \( m \). The proof is detailed in \Cref{sec_appendix-saddle-point-Mallows}.

\begin{lemma}\label{th_saddle-point-mallows}
    Under the Mallows culture $\mathcal{M}_{m \; \textnormal{last}}$, the log saddle point $\vtau$ is given by
    
    \[ \vtau = \left(\frac{-m\rho}{2},\frac{(-m+2)\rho}{2}, \dots, \frac{(m-4)\rho}{2} \right),\]
    where $\rho$ is the concentration parameter of the culture, defined in \Cref{sec_definitions-and-notations}.
\end{lemma}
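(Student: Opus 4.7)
The plan is to verify the saddle-point condition via the probabilistic interpretation given after \Cref{def_K-tau-etc}: with $\vbeta = (\tfrac12, \dots, \tfrac12)$, showing that the stated $\vtau$ is the log saddle point reduces to checking that $\nabla K(\vtau) = \vbeta$, i.e.\ that under the tilted law $\tilde{p}(r) \propto p_r \prod_{j \in \mX(r)} \zeta_j$ every candidate $j < m$ is above $m$ with probability exactly $1/2$. Write $q = e^{\rho}$ throughout.

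First I would use the standard insertion representation of $\mM_{m\text{ last}}$: each ranking $r$ is uniquely encoded by a pair $(\sigma, k)$, where $\sigma \in \mathfrak{S}_{m-1}$ is the order induced on $\{1,\dots,m-1\}$ after deleting $m$, and $k \in \{0, \dots, m-1\}$ is the number of candidates above $m$. Under $\mM_{m\text{ last}}$, $\sigma$ is Mallows-distributed on $\mathfrak{S}_{m-1}$ (reference identity, parameter $\rho$) and $k$ is independent of $\sigma$ with $\proba(k) \propto q^k$. Plugging $\zeta_j = q^{j-1-m/2}$ into the tilt factor and collecting exponents yields
\[
    \tilde{p}(\sigma, k) \;\propto\; q^{-d(\sigma) + S_k(\sigma) - km/2},
    \qquad \text{where } S_k(\sigma) := \sigma(1) + \cdots + \sigma(k).
\]

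The crux is to split the inversion count as $d(\sigma) = d_1(\sigma) + d_2(\sigma) + d_3(\sigma)$, counting inversions strictly inside positions $[1,k]$, inside $[k+1,m-1]$, and across the two blocks, respectively. The cross term $d_3$ depends only on the unordered set $U_k := \{\sigma(1), \dots, \sigma(k)\}$, and a direct count gives $d_3 = S_{U_k} - k - \binom{k}{2}$. Since $S_k(\sigma) = S_{U_k}$, this produces the key cancellation
\[
    S_k(\sigma) - \tfrac{km}{2} - d_3(\sigma) \;=\; c(k) := k\bigl(1 - \tfrac{m}{2}\bigr) + \tbinom{k}{2},
\]
a function of $k$ alone. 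Therefore $\tilde{p}(\sigma, k) \propto q^{c(k)}\, q^{-d_1(\sigma)}\, q^{-d_2(\sigma)}$, and the tilted distribution factorizes: conditional on $k$, the set $U_k = \mX(r)$ is uniform over the $\binom{m-1}{k}$ subsets of $\{1, \dots, m-1\}$ of size $k$, and the two induced orderings on $U_k$ and on $U_k^c$ are independent Mallows laws of parameter~$\rho$.

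To finish, the marginal of $k$ under $\tilde p$ is $\tilde{p}(k) \propto \binom{m-1}{k}\, q^{c(k)}\, Z^M_k\, Z^M_{m-1-k}$, where $Z^M_n$ denotes the Mallows normalizer on $\mathfrak{S}_n$. The binomial coefficient and the product of normalizers are manifestly invariant under $k \leftrightarrow m-1-k$, and an elementary polynomial identity gives $c(k) = c(m-1-k)$ (both equal $-k(m-1-k)/2$). Hence $\mean^{\vtau}[k] = (m-1)/2$, and combining this with the uniformity of $U_k$ given $k$ yields $\proba^{\vtau}(j \in \mX(r)) = \mean^{\vtau}[k]/(m-1) = 1/2$ for every $j \in \mA$, establishing the gradient condition. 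The main obstacle is the closed-form identity $d_3 = S_{U_k} - k - \binom{k}{2}$, which is what makes the exponent of $\tilde p$ split cleanly into a pure function of $k$ plus two independent Mallows weights; once this is in hand the symmetry argument is immediate.
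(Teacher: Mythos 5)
Your proof is correct and takes a genuinely different route from the paper's. Both start from the probabilistic interpretation of the saddle-point condition, $\nabla K(\vtau) = \vbeta$, i.e.\ that under the tilted law each adversary is above candidate $m$ with probability exactly $1/2$. The paper verifies this via a detailed-balance pairing at the level of individual rankings: each $r = (\sigma, m, \sigma')$ is matched with its mirror $r' = (\sigma', m, \sigma)$, and the proposed $\vzeta$ is shown to make $\tilde{p}(r) = \tilde{p}(r')$; since the sets of adversaries above $m$ in $r$ and $r'$ are complementary, this involution immediately forces the $1/2$ probabilities. Your argument instead factorizes the tilted law through the insertion coordinates $(\sigma, k)$, using the block decomposition $d(\sigma) = d_1 + d_2 + d_3$ and the identity $d_3 = S_{U_k} - k - \binom{k}{2}$ (which checks out) to absorb the tilt into the pure function $c(k) = -k(m-1-k)/2$ of the insertion level; the tilted measure then splits into a uniform choice of $U_k$, two independent Mallows blocks, and a $k$-marginal symmetric under $k \leftrightarrow m-1-k$, and the $1/2$ follows by averaging over $k$. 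The two arguments are tightly linked --- your factorization is precisely what makes the paper's detailed balance hold, since $d_1$, $d_2$, and $c(k)$ are all invariant under swapping the two blocks of $\sigma$ while $d_3$ is complemented and the tilt compensates --- but your route gives a complete structural description of the tilted law (uniform $U_k$, independent Mallows restrictions on the two blocks, explicit symmetric $k$-marginal) rather than just the single gradient identity, at the cost of a longer computation. The paper's version is shorter and verification-flavored; yours would adapt more readily to computing higher moments or to related questions about the tilted measure.
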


In the case of three candidates, \Cref{th_saddle-point-mallows} yields \( \vtau = (\frac{-3\rho}{2}, \frac{-\rho}{2}) \), ensuring that both coordinates of the saddle point are subcritical and allowing the application of \Cref{th_subcritical}.
Standard algebraic calculations then lead to the following result.

\begin{theorem}\label{th_mallows-3-last}

Under the Mallows culture $\mathcal{M}_{3 \; \textnormal{last}}$, the probability that candidate $3$ is the Condorcet winner has asymptotic behavior
\[  \probacondorcetwinner{3} \underset{n \to +\infty}{\sim} \frac{P(\vzeta)^n}{\prod\limits_{j \in \mathcal{A}}\big( (1-\zeta_j){\zeta_j}^{\lceil n /2 \rceil -1}\big)} \frac{1}{2\pi n \sqrt{ \det(\mathcal{H}_{K}(\vtau))}}  , \]
 where  $\vzeta = (e^{-3\rho/2},e^{-\rho/2})$ , $P(\vzeta) =2 \frac{e^{-2\rho} ( 1 + e^{-\rho/2} + e^{-\rho} )}{(1 + e^{-\rho}) (1 + e^{-\rho} + e^{-2 \rho})} $ and \(\det(\mathcal{H}_{K}(\vtau))= \frac{1}{4}\frac{e^{-\rho/2}(1+e^{-\rho})}{(1+e^{-\rho/2} + e^{-\rho})^{2}}\).



\end{theorem}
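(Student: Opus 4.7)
The plan is to apply \Cref{th_subcritical} directly. This requires three ingredients: verifying the subcriticality hypothesis, computing $P(\vzeta)$, and computing $\det(\hessian_K(\vtau))$. The statement of \Cref{th_mallows-3-last} is then an immediate substitution, observing that with $m=3$ one has $\vbeta = (1/2, 1/2)$, so $\lceil \beta_j n \rceil = \lceil n/2 \rceil$ for both $j\in\mA$, and $\sqrt{(2\pi n)^{m-1}} = 2\pi n$.

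For subcriticality, I would specialize \Cref{th_saddle-point-mallows} to $m=3$ to obtain $\vtau = (-3\rho/2, -\rho/2)$, hence $\vzeta = (e^{-3\rho/2}, e^{-\rho/2})$. For any $\rho > 0$ both coordinates lie strictly in $(0,1)$, so the saddle point is subcritical and \Cref{th_subcritical} applies. (The case $\rho = 0$ is Impartial Culture and would be handled separately.) Next, I would compute $P(\vzeta)$ by listing the six Mallows probabilities with normalizer $\gamma^{-1} = (1+e^{-\rho})(1+e^{-\rho}+e^{-2\rho})$, grouping them according to \Cref{def_def-of-P} into
\[
p_{\emptyset} = \gamma\,e^{-2\rho}(1+e^{-\rho}),\quad p_{\{1\}} = \gamma\,e^{-\rho},\quad p_{\{2\}} = \gamma\,e^{-2\rho},\quad p_{\{1,2\}} = \gamma(1+e^{-\rho}),
\]
and evaluating at $(e^{-3\rho/2}, e^{-\rho/2})$. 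The four resulting monomials pair up — the outer two produce $2\gamma e^{-2\rho}(1+e^{-\rho})$ and the middle two produce $2\gamma e^{-5\rho/2}$ — so factoring $2\gamma e^{-2\rho}$ yields the claimed form.

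The last ingredient, $\det(\hessian_K(\vtau))$, I would obtain from \Cref{eq_hessian-of-K-in-tau}. Since $P$ is multilinear in $x_1, x_2$, one has $\partial_j^2 P \equiv 0$ and $\partial_1\partial_2 P = p_{\{1,2\}}$, so the matrix $\diag(\vzeta)\,\hessian_P(\vzeta)/P(\vzeta)\,\diag(\vzeta)$ has vanishing diagonal and explicit off-diagonal entries $\zeta_1\zeta_2\,p_{\{1,2\}}/P(\vzeta)$. Adding $\diag(\vbeta) - \vbeta\vbeta^\transp = \tfrac{1}{4}\begin{pmatrix} 1 & -1 \\ -1 & 1 \end{pmatrix}$ and expanding the $2\times 2$ determinant gives a rational function in $e^{-\rho/2}$. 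The main obstacle is the algebraic simplification here: the three matrix terms all have complicated entries, and reducing the sum to the clean form $\tfrac{1}{4}\,e^{-\rho/2}(1+e^{-\rho})/(1+e^{-\rho/2}+e^{-\rho})^2$ requires careful bookkeeping — substituting $P(\vzeta)$ from step two, writing every factor in terms of $u := e^{-\rho/2}$, and recognizing the factorizations of $1+u^2 = 1+e^{-\rho}$ and $1+u^2+u^4 = 1+e^{-\rho}+e^{-2\rho}$. This final simplification can be verified symbolically using the accompanying \texttt{Actinvoting} package.
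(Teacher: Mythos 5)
Your proposal is correct and follows the paper's own route: specialize \Cref{th_saddle-point-mallows} to $m=3$ to obtain the subcritical saddle point $\vzeta = (e^{-3\rho/2}, e^{-\rho/2})$, apply \Cref{th_subcritical}, and simplify the constants. Your intermediate values for $p_{\emptyset}$, $p_{\{1\}}$, $p_{\{2\}}$, $p_{\{1,2\}}$, for $P(\vzeta)$, and for the multilinear structure of $\hessian_P(\vzeta)$ are all correct and do reduce to the claimed expressions for $P(\vzeta)$ and $\det(\hessian_K(\vtau))$.
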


As we saw in \Cref{sec_subcritical-theoretical-result}, this probability tends exponentially fast to 0. This can be viewed as a generalization of Condorcet's Jury Theorem \cite{condorcet1785essai}, which gives the same result for 2 candidates.

As mentioned in \Cref{sec_symbolic-method}, the counterpart of \Cref{th_mallows-3-last} for the weak Condorcet winner is obtained by replacing $\lceil n / 2 \rceil - 1$ with $\lfloor n / 2 \rfloor$.  
Notably, for even \( n \), the former simplifies to \( n/2 - 1 \), while the latter becomes \( n/2 \).  
Thus, although the limit remains 0 in both cases, the rates of convergence for even $n$ differ by a factor $\prod_{j \in \mA} \zeta_j$.

\begin{figure}
\begin{center}
\begin{tikzpicture}

\definecolor{darkgrey176}{RGB}{176,176,176}
\definecolor{darkorange25512714}{RGB}{255,127,14}
\definecolor{forestgreen4416044}{RGB}{44,160,44}
\definecolor{lightgrey204}{RGB}{204,204,204}
\definecolor{steelblue31119180}{RGB}{31,119,180}

\begin{axis}[
height=\axisHeight,
legend cell align={left},
legend style={font=\legendFont, fill opacity=1, draw opacity=1, text opacity=1, draw=lightgrey204},
log basis y={10},
tick align=outside,
tick pos=left,
width=\axisWidth,
x grid style={darkgrey176},
xlabel={Number of voters $n$},
xmin=-2.9, xmax=104.9,
xtick style={color=black},
y grid style={darkgrey176},
ylabel={$\mathbb{P}(3 \text{ is CW})$},
ymin=6.81017998419311e-11, ymax=1.03610728324648,
ymode=log,
ymode=log
]
\addplot [semithick, steelblue31119180]
table {%
2 0.318377900147517
3 0.356923675606596
4 0.112538466534534
5 0.151396097922717
6 0.0530393030591472
7 0.0764495125879401
8 0.0281220610612632
9 0.0420357040903004
10 0.0159046899984387
11 0.0243139895202693
12 0.00936983703068005
13 0.0145443270844925
14 0.00567771151715875
15 0.00891115095588899
16 0.00351212084043297
17 0.00555858451360971
18 0.00220701356460382
19 0.00351598981440965
20 0.00140422139765173
21 0.00224889919622051
22 0.000902466264109894
23 0.00145160856714447
24 0.000584831141213638
25 0.000944116145704879
26 0.000381642259523833
27 0.000618001805627908
28 0.000250530064255592
29 0.000406765007373475
30 0.000165304509451117
31 0.000269009714618354
32 0.000109557989084586
33 0.000178650317525268
34 7.28959144313867e-05
35 0.000119079708455935
36 4.86707097271569e-05
37 7.96328314106272e-05
38 3.25967767550739e-05
39 5.34093569966333e-05
40 2.18920453013265e-05
41 3.59158521009081e-05
42 1.47395801767964e-05
43 2.42097135206563e-05
44 9.9464889033124e-06
45 1.63543642730483e-05
46 6.7259358679223e-06
47 1.10697139229621e-05
48 4.55677296524094e-06
49 7.50630574631133e-06
50 3.09254977780783e-06
51 5.09847423261003e-06
52 2.1021874266335e-06
53 3.46834754979806e-06
54 1.4310970706217e-06
55 2.36278314858573e-06
56 9.75580026644568e-07
57 1.61175777297763e-06
58 6.6590305671862e-07
59 1.10080575682392e-06
60 4.55067506474211e-07
61 7.52698314380084e-07
62 3.11331785224552e-07
63 5.15226632218343e-07
64 2.13217698992352e-07
65 3.53031592097754e-07
66 1.46166337643353e-07
67 2.42125315675497e-07
68 1.00292963203401e-07
69 1.66208789537495e-07
70 6.88762367790171e-08
71 1.14191243787684e-07
72 4.73394304634349e-08
73 7.85156746438805e-08
74 3.25620605909007e-08
75 5.40264010540462e-08
76 2.24139336640093e-08
77 3.72018597464788e-08
78 1.5439212565924e-08
79 2.56339988036467e-08
80 1.0641867247809e-08
81 1.76744743654238e-08
82 7.33976316074336e-09
83 1.21938685237068e-08
84 5.06529434549527e-09
85 8.41760901295968e-09
86 3.49762783841792e-09
87 5.81401967726542e-09
88 2.41644798242074e-09
89 4.01785091551573e-09
90 1.67034310305083e-09
91 2.77798890599895e-09
92 1.15517676502139e-09
93 1.92166210300195e-09
94 7.99275486434468e-10
95 1.32991671210002e-09
96 5.53275180793296e-10
97 9.20798046524168e-10
98 3.83154932561016e-10
99 6.37806592166329e-10
100 2.65453570520244e-10
};
\addlegendentry{Theoretical equivalent}
\addplot [semithick, darkorange25512714]
table {%
2 0.0204081632653061
3 0.0709426627793975
4 0.0148549215604609
5 0.0405408681127132
6 0.0100421120829284
7 0.0245072310412306
8 0.00668640809480337
9 0.0152689892503339
10 0.00444048358533242
11 0.00969884782921564
12 0.00295277182264903
13 0.00624550051432034
14 0.00196855464858603
15 0.00406331911971191
16 0.00131622771504663
17 0.00266500544901309
18 0.000882617168925034
19 0.00175933366081399
20 0.00059347720061254
21 0.00116773167632655
22 0.000400072854528966
23 0.000778600722594852
24 0.000270326017924772
25 0.000521169040119847
26 0.0001830479725695
27 0.000350032980162291
28 0.000124191473790703
29 0.000235789233724621
30 8.44106827999181e-05
31 0.000159248214502529
32 5.74669468794744e-05
33 0.000107804519011099
34 3.9182934816963e-05
35 7.31319427532099e-05
36 2.6753649455854e-05
37 4.97046085182039e-05
38 1.82907206581063e-05
39 3.38399645673519e-05
40 1.25198616536082e-05
41 2.3074985142508e-05
42 8.57931357951837e-06
43 1.57570287171806e-05
44 5.88514101315197e-06
45 1.07740434417253e-05
46 4.04094324083337e-06
47 7.37582611049548e-06
48 2.77717482256044e-06
49 5.05511713985725e-06
50 1.91026585918475e-06
51 3.46821252652827e-06
52 1.3150197217302e-06
53 2.38178785001188e-06
54 9.0593754040482e-07
55 1.63717635111279e-06
56 6.24558605635246e-07
57 1.12630941809685e-06
58 4.30863834507065e-07
59 7.75473689490309e-07
60 2.97428873526482e-07
61 5.34321730745216e-07
62 2.05441424630015e-07
63 3.68422528596477e-07
64 1.41984650216692e-07
65 2.54202529899124e-07
66 9.81817881980387e-08
67 1.75504444840158e-07
68 6.79274362216725e-08
69 1.21242931313528e-07
70 4.70190282445318e-08
71 8.38052895063749e-08
72 3.25616461121752e-08
73 5.79589877750009e-08
74 2.25597426751533e-08
75 4.01044991938038e-08
76 1.56368201567465e-08
77 2.77637268431481e-08
78 1.08427946747929e-08
79 1.92293655723161e-08
80 7.52150960077239e-09
81 1.33243344432521e-08
82 5.21954620627371e-09
83 9.23657449487707e-09
84 3.62341289700349e-09
85 6.40550276956675e-09
86 2.51625147845193e-09
87 4.44390665789378e-09
88 1.74797635857953e-09
89 3.08417539134257e-09
90 1.21466593796596e-09
91 2.14125844281652e-09
92 8.44331015749118e-10
93 1.48712965329395e-09
94 5.8708139402276e-10
95 1.03317172094003e-09
96 4.08327786118141e-10
97 7.18016894115643e-10
98 2.84079701088237e-10
99 4.99148929327845e-10
100 1.97691483195951e-10
};
\addlegendentry{Exact results}
\addplot [semithick, forestgreen4416044]
table {%
2 0.0188
3 0.0692
4 0.0147
5 0.0351
6 0.0096
7 0.023
8 0.0059
9 0.0167
10 0.0042
11 0.0111
12 0.0028
13 0.0071
14 0.0022
15 0.0043
16 0.0013
17 0.0037
18 0.0011
19 0.0023
20 0.0004
21 0.0005
22 0.0005
23 0.0003
24 0.0001
25 0.0004
26 0.0001
27 0.0001
28 0.0001
29 0.0002
};
\addlegendentry{Monte-Carlo results}
\end{axis}

\end{tikzpicture}
\caption{Probability that candidate \(3\) is the Condorcet winner as a function of \(n\) in a culture \( \mM_{3 \textnormal{ last}} \) with \(\rho = \log(2)\), shown on a semilog scale. The theoretical equivalent is based on \Cref{th_mallows-3-last}, while exact results rely on \Cref{eq_alpha-winner-coeff-extraction}. Monte Carlo simulations use 10,000 profiles per point, yielding an error of order \(10^{-2}\). For \( n \geq 30 \), they return zero, which is not visible due to the logarithmic scale.}\label{fig_mallows-3-last}
\end{center}
\end{figure}
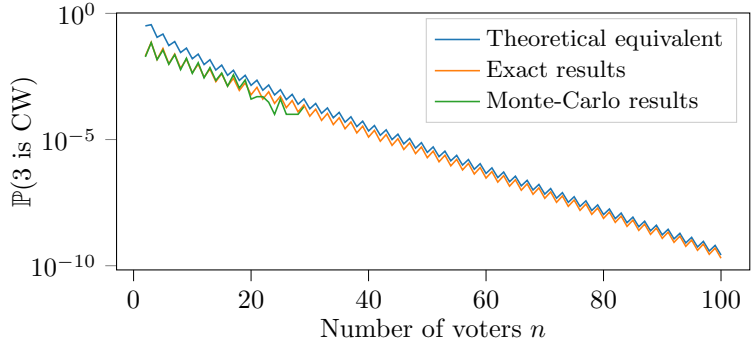

\paragraph{Numerical simulations}  
To validate our results numerically, we developed the Python package \texttt{Actinvoting}, available at \url{https://github.com/francois-durand/actinvoting}.  
This package notably compares our theoretical results with Monte Carlo simulations, which estimate probabilities by generating a large number of random profiles. Ranking samples in the Mallows model are obtained using the algorithms of \cite{doignon2004repeated} and \cite{lu2014effective}.

\Cref{fig_mallows-3-last} illustrates the case of \( \mM_{3 \textnormal{ last}} \), showing the probability that candidate~3 is the Condorcet winner as a function of the number of voters \(n\). We provide three estimations: the theoretical approximation given by \Cref{th_mallows-3-last}, the exact result obtained via coefficient extraction from \Cref{eq_alpha-winner-coeff-extraction}, and a Monte Carlo estimate based on 10,000 profiles per value of \(n\). The concentration parameter is set to \(\rho = \log(2)\). This choice balances two opposing constraints observed in practice. Lower concentration values require larger \(n\) for the theoretical approximation to be accurate, making direct comparison with exact results difficult, as their computational cost grows exponentially in \(n\) due to the algebraic expansion of \(P(\vx)^n\). Conversely, higher concentration values cause probabilities to decay rapidly, rendering Monte Carlo estimation unreliable.

The computational costs of these methods vary significantly. For instance\footnote{All of our numerical simulations were performed on an 11th Gen Intel Core i9-11980HK processor (8 cores, 16 logical processors) with 64 GB of RAM.}, at \(n=100\), Monte Carlo simulations take 46 seconds, the exact computation 17 seconds, and the theoretical approximation only 338~\(\mu\)s. As previously mentioned, the cost of the exact computation grows exponentially in \(n\). Monte Carlo simulations are also computationally expensive, as their error scales as \(\mathcal{O}(1/\sqrt{N})\), where \(N\) is the sample size. In contrast, the complexity of the theoretical approximation is polynomial in $\log(n)$, making it essentially constant time in practice.

A striking feature of \Cref{fig_mallows-3-last} is the non-monotonicity of the curves, which stems from parity effects in the number of voters \(n\). For small \(n\), the exact and Monte Carlo curves are nearly indistinguishable, while the theoretical approximation slightly overestimates the probability. When probabilities drop significantly below the order of magnitude of the Monte Carlo error, \(\frac{1}{\sqrt{N}} = 10^{-2}\), the Monte Carlo results naturally become more unstable. Below \(\frac{1}{N} = 10^{-4}\), they become undetectable due to sample size limitations. As expected, for a large number of voters \(n\), the theoretical approximation converges to the exact result, while being much faster computationally.

\section{Critical Case}\label{sec_critical-case}

In this section, we study the case where all coordinates of the saddle point are critical, which is exemplified by the notion of Condorcet winner in the Impartial Culture.

\subsection{Theoretical Result in the Critical Case}

When \( \vzeta =\vone \), it means that the expected proportion of voters who prefer candidate~$m
$ to candidate~$j$ is exactly \( \alpha_j \), as explained in the probabilistic interpretation of \Cref{sec_saddle-point-method}. In other words, in expectation, candidate $m$ is precisely at the threshold of being an $\valpha$-winner.

As soon as one coordinate \( \zeta_j \) is equal to 1, the term \( \frac{1}{1 - x_j} \) in \Cref{eq_alpha-winner-as-cauchy-integral} prevents integration along a circle of radius \( \zeta_j \). \Cref{th_large-powers-singularity} in \Cref{sec_appendix-Elie-singularity} is based on the idea that this difficulty can be circumvented by choosing an integration path that slightly deforms around the singularity at \( 1 \). Applying this result to the particular case where all coordinates are critical, we obtain the following theorem.

\begin{theorem}\label{th_critical}
Assume that all coordinates of the saddle point are critical, \ie $\zeta_j=1$ for every adversary $j \in \mathcal{A}$. Then:
	\[ \lim_{n \to +\infty}\probamalphawinner =
	\frac{1}{\sqrt{(2 \pi)^{m-1} \det(\mathcal{H}_{K}(\vtau))}}
	\int_{(0, + \infty)^{m-1}}
	e^{- \vu^{\transp} \mathcal{H_{K}(\vtau)}^{-1} \vu / 2}
	d \vu.
	\]
\end{theorem}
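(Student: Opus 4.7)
The plan is to apply the saddle-point method to the Cauchy integral \Cref{eq_alpha-winner-as-cauchy-integral}, adapted to the critical configuration where the saddle $\vzeta = \vone$ coincides with the polar singularities of the factors $(1-x_j)^{-1}$. I would skirt those singularities by deforming each integration circle slightly inside the unit circle (radii a bit below $1$) so that the poles at $x_j=1$ stay outside the contour, and then rescale around the saddle via $x_j = e^{z_j/\sqrt{n}}$ with $z_j = a_j + i u_j$, $a_j < 0$ small, $u_j \in \reals$. Using $K(\vzero) = \log P(\vone) = 0$ and the saddle-point identity $\nabla K(\vzero) = \vbeta$ from \Cref{def_K-tau-etc}, a Taylor expansion yields
\[
    n K(\vz/\sqrt{n}) = \sqrt{n}\, \vbeta^{\transp} \vz + \tfrac{1}{2} \vz^{\transp} \hessian_K(\vtau) \vz + O(n^{-1/2}),
\]
whose linear term cancels against $\log \prod_j x_j^{\lceil \beta_j n \rceil}$ up to an $O(1/\sqrt{n})$ error from the ceilings. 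Combined with $(1-x_j)^{-1} \sim -\sqrt{n}/z_j$ and the Jacobian $dx_j = x_j\, dz_j/\sqrt{n}$, the $\sqrt{n}$ factors cancel and the integrand reduces, to leading order, to a Gaussian in $\vu$ multiplied by $\prod_j (a_j + i u_j)^{-1}$.

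\paragraph{Tail control.} The main technical step is to justify that the contribution from the rest of the contour is negligible. The strict convexity of $K$ (\Cref{th_convexity-cumulant-gf}) together with uniqueness of the saddle point (\Cref{th_conditions-P-satisfied}) ensure an exponential drop-off of $|P(\vx)^n/\prod_j x_j^{\beta_j n}|$ away from $\vone$. Uniformly controlling the cubic Taylor remainder over the full contour of length $O(\sqrt{n})$ is precisely what \Cref{th_large-powers-singularity} in \Cref{sec_appendix-Elie-singularity} is designed for, and I would invoke it as a black box to confine the integral to a shrinking neighborhood of the saddle; this is the step I expect to be the most delicate, but also the one for which the appendix machinery has already been set up.

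\paragraph{Recognising the half-space integral.} To extract the closed form, I would use the Laplace identity $(a_j + i u_j)^{-1} = -\int_{-\infty}^0 e^{-(a_j + i u_j) v_j}\, dv_j$ (valid because $a_j < 0$), interchange orders of integration, and perform the inner Gaussian Fourier transform
\[
    \int_{\reals^{m-1}} e^{-i \vv^{\transp} \vu - \tfrac{1}{2} \vu^{\transp} \hessian_K(\vtau) \vu}\, d\vu = \frac{(2\pi)^{(m-1)/2}}{\sqrt{\det \hessian_K(\vtau)}}\, e^{-\tfrac{1}{2} \vv^{\transp} \hessian_K(\vtau)^{-1} \vv}.
\]
The change of variables $\vv \mapsto -\vv$ then turns $(-\infty, 0)^{m-1}$ into $(0, +\infty)^{m-1}$, and letting $a_j \to 0^-$ removes the regularising offset. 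Tracking the $(2 i \pi)^{-(m-1)}$ prefactor from the Cauchy integral, the $(-i)^{m-1}$ arising from the ratio of Jacobian to simple poles, and the $(-1)^{m-1}$ coming from the Laplace inversion, all signs and powers of $i$ collapse, and the Gaussian normalisation combines with the remaining factor to produce exactly $1/\sqrt{(2\pi)^{m-1} \det \hessian_K(\vtau)}$, which is the announced constant.
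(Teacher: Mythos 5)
Your proposal is correct and reaches the announced formula, and at its core it follows the same route as the paper: a saddle-point analysis in which the contour is deformed off the poles at $x_j=1$, the pole factors are traded for integrals over a half-line, and a Gaussian integral over a positive orthant emerges. The paper itself simply invokes \Cref{th_large-powers-singularity} with $\mS=\emptyset$, $\mC=\mA$; you essentially reprove that result in the special case. Your technical choices differ in detail but not in substance: you take the full torus of radius $e^{a_j/\sqrt{n}}$ (with $a_j<0$ fixed) where the paper keeps the unit torus and dents it by a half-circle of radius $\varepsilon_n$ near $\theta_j=0$, and you use the representation $(a_j+iu_j)^{-1}=-\int_{-\infty}^0 e^{-(a_j+iu_j)v_j}\,dv_j$ where the paper uses $1/t=-i\int_0^{+\infty}e^{itu}\,du$ for $\Im(t)>0$. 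Both are fine. Invoking \Cref{th_large-powers-singularity} ``as a black box'' for tail control while re-deriving the local contribution by hand is slightly circular since that theorem already delivers the whole asymptotic formula, but it is a perfectly legitimate way to expose the mechanics.

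One point deserves more care in the write-up. After the rescaling $\vz=\vect{a}+i\vu$ the quadratic exponent is $\tfrac12\vz^{\transp}\hessian_K(\vtau)\vz=\tfrac12\vect{a}^{\transp}\hessian_K(\vtau)\vect{a}+i\vect{a}^{\transp}\hessian_K(\vtau)\vu-\tfrac12\vu^{\transp}\hessian_K(\vtau)\vu$, not simply $-\tfrac12\vu^{\transp}\hessian_K(\vtau)\vu$, so it is imprecise to say the integrand reduces ``to a Gaussian in $\vu$.'' When you perform the Gaussian Fourier transform with the shifted frequency $\vw=\vv-\hessian_K(\vtau)\vect{a}$ and then expand $(\vv-\hessian_K(\vtau)\vect{a})^{\transp}\hessian_K(\vtau)^{-1}(\vv-\hessian_K(\vtau)\vect{a})$, the $\vect{a}$-dependent pieces cancel exactly against the factor $e^{-\vect{a}^{\transp}\vv}$ coming from the Laplace representation; the result is $\vect{a}$-independent for every $\vect{a}<\vzero$, as it must be since the Cauchy integral itself is contour-independent there. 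So you do not even need to take $a_j\to 0^-$ at the end — the offset was never really there. Stating this cancellation explicitly would tighten the argument, and it also catches sign errors, since any mistake in the $(-1)$ and $i$ bookkeeping leaves a spurious $\vect{a}$-dependence.
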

In this case, \Cref{eq_hessian-of-K-in-tau} simplifies to $\hessian_{K}(\vtau) = \hessian_{P}(\vone) + \diag(\vbeta) - \vbeta \vbeta^{\transp}$ because $\vzeta = \vone = P(\vone)$.

For the critical case, \Cref{th_critical} generalizes to $\valpha$-winners the result of \cite{niemi1968paradox} and \cite[Equation (12)]{krishnamoorthy2005condorcet} on Condorcet winners. While we derive it result via analytic combinatorics, their approach relies on a standard probabilistic method: the Gaussian approximation. This method allows for interpreting the vector \( \vu \) as the standardized outcome of each pairwise comparison between candidate \( m \) and their adversaries. Integrating over the positive orthant corresponds to requiring that each comparison outcome exceeds its expectation~\( \alpha_j n \).

The two approaches differ in the matrix involved. In their case, it is the correlation matrix of the pairwise comparisons against each adversary, denoted \( R_m \) by both \cite{niemi1968paradox} and  \cite{krishnamoorthy2005condorcet}.
\Cref{sec_link-matrices} recalls its definition and establishes the relation \( R_m = 4 D \mH_{K}(\vtau) D \), where $D$ is a diagonal matrix, which is the identity matrix when considering the notion of Condorcet winner. Since the result of \Cref{th_critical} is divided by the square root of the determinant, this rescaling has no impact.
However, our analytic combinatorics  approach makes it easier to deal with non-critical cases and extends naturally to the computation of an asymptotic expansion
(see \cref{sec_error-terms}). For instance, in the subcritical case, their formula only establishes that the limit is zero, whereas \Cref{th_subcritical} additionally provides the convergence rate.

\subsection{Application: Impartial Culture}
\label{sec_impartial_culture}

Computing the limiting probability for a candidate to be the Condorcet Winner in Impartial Culture
corresponds to applying \Cref{th_critical} with $\vbeta = \vone / \vtwo$.
We obtain the following result, proved in \Cref{sec_constants_IC}.

\begin{theorem}\label{th_impartial_culture}
Under Impartial Culture with $m$ candidates, the probability that candidate $m$ is the Condorcet winner has the following limit:
\[
    \lim_{n \to +\infty}
    \probacondorcetwinner{m}
    =
    \frac{1}{\sqrt{(2 \pi)^{m-1} \det(\mH_K(\vzero))}}
    \int_{(0, +\infty)^{m-1}}
    e^{- \vu \mH_K(\vzero)^{-1} \vu / 2}
    d \vu
\]
where,
letting $I$ denote the identity matrix and $J$ the matrix where all elements are $1$s, both of dimension $(m-1) \times (m-1)$,
\begin{equation}
\label{eq:mHKIC}
    \mH_K(\vzero) = \frac{1}{6} \left(I + \frac{1}{2} J \right),
    \qquad
    \mH_K(\vzero)^{-1} = 6 \left(I - \frac{1}{m+1} J \right),
    \qquad
    \det(\mH_K(\vzero)) = \frac{m+1}{2} \frac{1}{6^{m-1}}.
\end{equation}
\end{theorem}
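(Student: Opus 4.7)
The plan is to apply \Cref{th_critical} specialized to $\vbeta = \vone / \vtwo$ and carry out the bookkeeping for Impartial Culture.

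First I would check that we are in the critical regime, \ie that $\vzeta = \vone$ (equivalently $\vtau = \vzero$). Under Impartial Culture, $p_r = 1/m!$ for every ranking, so by symmetry the probability $p_{{}_\mX}$ depends only on $|\mX|$, and for each adversary $j$ we have $\proba(j \succ m) = 1/2$. By the probabilistic interpretation of \Cref{sec_saddle-point-method}, the gradient of $K$ at $\vzero$ equals the mean of $\vX^{(\vzero)}$, whose $j$-th coordinate is $\proba(j \succ m) = 1/2 = \beta_j$. By strict concavity of $\psi$ this identifies $\vtau = \vzero$, hence $\vzeta = \vone$, and the integral representation in the statement follows immediately from \Cref{th_critical}.

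Next I would compute $\mH_P(\vone)$ and plug it into \Cref{eq_hessian-of-K-in-tau}. Because $P(\vx) = \sum_\mX p_{{}_\mX} \prod_{j \in \mX} x_j$ is multilinear, the diagonal entries of $\mH_P$ vanish; for $j \neq k$, $\partial_j \partial_k P(\vone) = \sum_{\mX \supseteq \{j,k\}} p_{{}_\mX}$ is the probability that $m$ is ranked below both $j$ and $k$, which by symmetry equals $1/3$. Hence $\mH_P(\vone) = \tfrac{1}{3}(J - I)$. Since $P(\vone) = 1$, $\vzeta = \vone$, $\vbeta = \vone/2$, and $\vbeta \vbeta^\transp = \tfrac{1}{4} J$, \Cref{eq_hessian-of-K-in-tau} gives
\[
    \mH_K(\vzero) = \tfrac{1}{3}(J - I) + \tfrac{1}{2} I - \tfrac{1}{4} J = \tfrac{1}{6} I + \tfrac{1}{12} J = \tfrac{1}{6}\bigl(I + \tfrac{1}{2} J\bigr),
\]
which is exactly the first identity of \Cref{eq:mHKIC}.

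Finally, since $J = \vone \vone^\transp$ is a rank-one matrix with the single nonzero eigenvalue $m-1$ (of multiplicity $1$) and all remaining eigenvalues $0$, the matrix $I + \tfrac{1}{2} J$ has eigenvalues $1 + \tfrac{m-1}{2} = \tfrac{m+1}{2}$ (multiplicity $1$) and $1$ (multiplicity $m-2$). This directly yields $\det(I + \tfrac{1}{2} J) = \tfrac{m+1}{2}$, hence $\det(\mH_K(\vzero)) = \tfrac{m+1}{2} \cdot 6^{-(m-1)}$. Applying the Sherman--Morrison formula (or equivalently solving $(I + \tfrac{1}{2} J)(I + c J) = I$ for the scalar $c$, using $J^2 = (m-1) J$) gives $(I + \tfrac{1}{2} J)^{-1} = I - \tfrac{1}{m+1} J$, so $\mH_K(\vzero)^{-1} = 6 \bigl(I - \tfrac{1}{m+1} J\bigr)$.

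There is no real obstacle here: the two non-trivial ingredients are the symmetry computation $\proba(j \succ m, k \succ m) = 1/3$ and the observation that $aI + bJ$ matrices are diagonalized by the splitting $\reals^{m-1} = \reals \vone \oplus \vone^\perp$, both of which are elementary. The only care needed is to track that the dimension of $I$ and $J$ is $m-1$ (not $m$), so that the dominant eigenvalue of $J$ is $m-1$.
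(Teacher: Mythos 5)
Your proposal is correct and follows essentially the same path as the paper: identify $\vzeta = \vone$ from the symmetry $\proba(j \succ m) = 1/2 = \beta_j$, compute the Hessian of $K$ at $\vzero$ using the probability $\proba(j \succ m \land k \succ m) = 1/3$, and then apply \Cref{th_critical}. The only cosmetic difference is that you arrive at $\mH_K(\vzero)$ via $\mH_P(\vone)$ and \Cref{eq_hessian-of-K-in-tau}, while the paper computes the same matrix directly as the covariance of the indicator vector $\vX$; the determinant and inverse calculations via the $\reals\vone \oplus \vone^\perp$ eigendecomposition of $aI + bJ$ are exactly what is implicitly behind the displayed identities in \Cref{eq:mHKIC}.
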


We show in \cref{sec_link-matrices}
that this result is the same as Equation~(10) from \cite{niemi1968paradox} and Equation~(12) from \cite{krishnamoorthy2005condorcet}.
For completeness, we recall a simpler univariate integral formulation
introduced by \cite{ruben1954moments} to study moments of Gaussian distributions,
and linked to the voting setting first implicitly by \cite{niemi1968paradox},
then explicitly by \cite{may1971some} (Equation~(7)).

\begin{theorem}[\cite{may1971some}]\label{th:impartial}
Under Impartial Culture,
\begin{equation} \label{eq_int_erf_vote}
\lim_{n \to +\infty}\probacondorcetwinner{m} = 
\frac{1}{\sqrt{\pi}}
\int_{-\infty}^{+\infty}
e^{-u^2}
(1 - \Phi(u))^{m-1}
d u
\end{equation}
where $\Phi(u) =
\frac{1}{2} \left(
1 + \erf \left( \frac{u}{\sqrt{2}} \right)
\right)$ and $\erf$ denotes the error function: \(
\erf(u) = \frac{2}{\sqrt{\pi}} \int_{0}^{u} e^{-v^2} dv.
\)
\end{theorem}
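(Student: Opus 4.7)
The plan is to derive \Cref{th:impartial} directly from \Cref{th_impartial_culture} by giving a probabilistic interpretation to the multivariate Gaussian integral and exploiting the compound-symmetric structure of $\mH_K(\vzero) = \tfrac{1}{6}(I + \tfrac{1}{2}J)$. In the notation of that structure, the $(m-1)\times(m-1)$ covariance matrix $\Sigma := \mH_K(\vzero)$ has diagonal entries $1/4$ and off-diagonal entries $1/12$. The integral on the right-hand side of \Cref{th_impartial_culture} is then, up to the normalizing constant $\big((2\pi)^{m-1}\det \Sigma\big)^{-1/2}$, the density of $\mathcal{N}(\vzero, \Sigma)$ integrated over the positive orthant; that is, it equals $\proba(\vZ > \vzero)$ for $\vZ \sim \mathcal{N}(\vzero, \Sigma)$.

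Second, I would represent $\vZ$ using one shared and $m-1$ individual independent sources of randomness. Specifically, let $W_0, W_1, \dots, W_{m-1}$ be i.i.d.\ $\mathcal{N}(0,1)$ and set $Z_j := a W_0 + b W_j$ for $j \in \mA$, with $a := 1/\sqrt{12}$ and $b := 1/\sqrt{6}$. A direct check gives $\var(Z_j) = a^2 + b^2 = 1/4$ and $\mathrm{Cov}(Z_j, Z_k) = a^2 = 1/12$ for $j \neq k$, matching $\Sigma$ exactly. This decomposition mirrors the voting intuition: $W_0$ captures how favorably candidate $m$ is ranked overall, while $W_j$ captures the idiosyncratic comparison against adversary $j$.

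Third, I would compute $\proba(\vZ > \vzero)$ by conditioning on $W_0$. Given $W_0 = w$, the events $\{Z_j > 0\}$ become $\{W_j > -(a/b)w\} = \{W_j > -w/\sqrt{2}\}$ and are mutually independent, each with probability $1 - \Phi(-w/\sqrt{2})$. Integrating against the density of $W_0$ yields
\[
\proba(\vZ > \vzero)
= \int_{-\infty}^{+\infty} \frac{e^{-w^2/2}}{\sqrt{2\pi}} \bigl(1 - \Phi(-w/\sqrt{2})\bigr)^{m-1} \, dw.
\]
Finally, the change of variables $u = -w/\sqrt{2}$ (so $dw = -\sqrt{2}\, du$ and the orientation flip absorbs the minus sign) transforms this into $\frac{1}{\sqrt{\pi}} \int_{-\infty}^{+\infty} e^{-u^2}\bigl(1 - \Phi(u)\bigr)^{m-1} du$, which is the claimed identity.

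There is essentially no hard step: the work is a careful bookkeeping of constants when matching $(a,b)$ to the variances of $\Sigma$ and when performing the final substitution. The only conceptual input beyond \Cref{th_impartial_culture} is recognizing that compound-symmetric Gaussian vectors admit a one-factor representation, which reduces an $(m-1)$-dimensional orthant probability to a one-dimensional integral.
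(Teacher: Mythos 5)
Your derivation is correct. The paper itself does not prove \Cref{th:impartial}: it merely recalls it with citations to Ruben (1954) and May (1971), so there is no in-paper proof to compare against. What you do is supply the missing link from \Cref{th_impartial_culture} to the univariate formula, and the argument you use — the one-factor (equicorrelated Gaussian) representation — is the classical route for orthant probabilities of compound-symmetric Gaussian vectors, so it is almost certainly the same underlying argument the cited references rely on. Your bookkeeping checks out: with $\Sigma = \mH_K(\vzero)$, diagonal $1/4$ and off-diagonal $1/12$ force $a^2 = 1/12$ and $a^2 + b^2 = 1/4$, giving $b^2 = 1/6$, hence $a/b = 1/\sqrt{2}$; conditioning on the common factor $W_0 = w$ turns the orthant event into $m-1$ i.i.d.\ events $\{W_j > -w/\sqrt{2}\}$, and the substitution $u = -w/\sqrt{2}$ converts $\frac{1}{\sqrt{2\pi}}e^{-w^2/2}\,dw$ into $\frac{1}{\sqrt{\pi}}e^{-u^2}\,du$. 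One minor remark: the sign of the orthant is immaterial here because the Gaussian density is symmetric under $\vu \mapsto -\vu$, so it does not matter whether one interprets $Z_j > 0$ as candidate $m$ beating adversary $j$ or the reverse; the positive- and negative-orthant probabilities coincide, which you implicitly use when writing the integral over $(0,+\infty)^{m-1}$ as $\proba(\vZ > \vzero)$.
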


\begin{figure}
\begin{center}
\begin{tikzpicture}

\definecolor{darkgrey176}{RGB}{176,176,176}
\definecolor{darkorange25512714}{RGB}{255,127,14}
\definecolor{forestgreen4416044}{RGB}{44,160,44}
\definecolor{lightgrey204}{RGB}{204,204,204}
\definecolor{steelblue31119180}{RGB}{31,119,180}

\begin{axis}[
height=\axisHeight,
legend cell align={left},
legend style={font=\legendFont, 
  fill opacity=1,
  draw opacity=1,
  text opacity=1,
  at={(0.97,0.03)},
  anchor=south east,
  draw=lightgrey204
},
tick align=outside,
tick pos=left,
width=\axisWidth,
x grid style={darkgrey176},
xlabel={Number of voters $n$},
xmin=-2.9, xmax=104.9,
xtick style={color=black},
y grid style={darkgrey176},
ylabel={$\mathbb{P}(3 \text{ is CW})$},
ymin=0, ymax=0.4,
ytick={0.0, 0.1, 0.2, 0.3, 0.4},
ytick style={color=black}
]
\addplot [semithick, steelblue31119180]
table {%
2 0.304086723984094
3 0.304086723984094
4 0.304086723984094
5 0.304086723984094
6 0.304086723984094
7 0.304086723984094
8 0.304086723984094
9 0.304086723984094
10 0.304086723984094
11 0.304086723984094
12 0.304086723984094
13 0.304086723984094
14 0.304086723984094
15 0.304086723984094
16 0.304086723984094
17 0.304086723984094
18 0.304086723984094
19 0.304086723984094
20 0.304086723984094
21 0.304086723984094
22 0.304086723984094
23 0.304086723984094
24 0.304086723984094
25 0.304086723984094
26 0.304086723984094
27 0.304086723984094
28 0.304086723984094
29 0.304086723984094
30 0.304086723984094
31 0.304086723984094
32 0.304086723984094
33 0.304086723984094
34 0.304086723984094
35 0.304086723984094
36 0.304086723984094
37 0.304086723984094
38 0.304086723984094
39 0.304086723984094
40 0.304086723984094
41 0.304086723984094
42 0.304086723984094
43 0.304086723984094
44 0.304086723984094
45 0.304086723984094
46 0.304086723984094
47 0.304086723984094
48 0.304086723984094
49 0.304086723984094
50 0.304086723984094
51 0.304086723984094
52 0.304086723984094
53 0.304086723984094
54 0.304086723984094
55 0.304086723984094
56 0.304086723984094
57 0.304086723984094
58 0.304086723984094
59 0.304086723984094
60 0.304086723984094
61 0.304086723984094
62 0.304086723984094
63 0.304086723984094
64 0.304086723984094
65 0.304086723984094
66 0.304086723984094
67 0.304086723984094
68 0.304086723984094
69 0.304086723984094
70 0.304086723984094
71 0.304086723984094
72 0.304086723984094
73 0.304086723984094
74 0.304086723984094
75 0.304086723984094
76 0.304086723984094
77 0.304086723984094
78 0.304086723984094
79 0.304086723984094
80 0.304086723984094
81 0.304086723984094
82 0.304086723984094
83 0.304086723984094
84 0.304086723984094
85 0.304086723984094
86 0.304086723984094
87 0.304086723984094
88 0.304086723984094
89 0.304086723984094
90 0.304086723984094
91 0.304086723984094
92 0.304086723984094
93 0.304086723984094
94 0.304086723984094
95 0.304086723984094
96 0.304086723984094
97 0.304086723984094
98 0.304086723984094
99 0.304086723984094
100 0.304086723984094
};
\addlegendentry{Theoretical equivalent}
\addplot [semithick, darkorange25512714]
table {%
2 0.111111111111111
3 0.314814814814815
4 0.148148148148148
5 0.310185185185185
6 0.169581618655693
7 0.308327617741198
8 0.183965858862978
9 0.30734128118173
10 0.194465480998831
11 0.306728641149723
12 0.202566406713069
13 0.306310528605833
14 0.209065535158771
15 0.306006806614097
16 0.214432326840295
17 0.305776127352619
18 0.218963456454917
19 0.305594952029487
20 0.222856822586211
21 0.305448882480342
22 0.226250199713895
23 0.305328613342689
24 0.229242879851164
25 0.305227862088046
26 0.231908492866705
27 0.305142233320958
28 0.234302962190104
29 0.305068559518588
30 0.236469637388801
31 0.305004500170059
32 0.238442716436943
33 0.304948288365469
34 0.240249593171544
35 0.304898565199271
36 0.24191250760723
37 0.304854268418318
38 0.243449731441944
39 0.304814555654353
40 0.244876436057957
41 0.304778750324587
42 0.246205338932428
43 0.304746302756819
44 0.247447192404867
45 0.304716761764114
46 0.248611158347832
47 0.304689753532354
48 0.249705098964687
49 0.304664965715857
50 0.250735805056098
51 0.30464213530132
52 0.251709177062998
53 0.304621039237945
54 0.252630370023913
55 0.304601487125072
56 0.253503910657579
57 0.304583315448787
58 0.254333792697468
59 0.304566382997766
60 0.255123555100848
61 0.304550567186113
62 0.25587634665643
63 0.304535761080526
64 0.25659497970299
65 0.304521870979254
66 0.257281975065402
67 0.304508814426946
68 0.257939599857595
69 0.304496518576493
70 0.258569899454288
71 0.30448491882911
72 0.25917472466649
73 0.304473957699034
74 0.259755754949281
75 0.304463583860719
76 0.260314518309419
77 0.304453751345188
78 0.260852408453892
79 0.304444418859005
80 0.26137069962064
81 0.304435549204567
82 0.261870559453116
83 0.304427108784563
84 0.262353060216757
85 0.304419067176684
86 0.262819188604172
87 0.304411396767217
88 0.263269854334393
89 0.304404072434249
90 0.263705897717798
91 0.304397071272792
92 0.264128096330704
93 0.304390372355518
94 0.264537170920994
95 0.304383956523825
96 0.264933790647399
97 0.304377806204839
98 0.2653185777396
99 0.304371905250683
100 0.265692111653369
};
\addlegendentry{Exact results}
\addplot [semithick, forestgreen4416044]
table {%
2 0.1136
3 0.3166
4 0.1488
5 0.3064
6 0.1692
7 0.3052
8 0.1903
9 0.3071
10 0.1975
11 0.3077
12 0.208
13 0.3039
14 0.2082
15 0.2997
16 0.2147
17 0.3033
18 0.22
19 0.3039
20 0.2173
21 0.3006
22 0.228
23 0.3086
24 0.225
25 0.2976
26 0.229
27 0.3033
28 0.2299
29 0.3027
30 0.2324
31 0.2977
32 0.2358
33 0.299
34 0.2386
35 0.3003
36 0.2392
37 0.2976
38 0.2403
39 0.3026
40 0.2468
41 0.2984
42 0.2369
43 0.3066
44 0.2511
45 0.3051
46 0.2456
47 0.3004
48 0.2472
49 0.2991
50 0.2515
51 0.301
52 0.2458
53 0.3011
54 0.2493
55 0.304
56 0.2465
57 0.2975
58 0.2515
59 0.2964
60 0.2534
61 0.3027
62 0.2562
63 0.3044
64 0.2499
65 0.2987
66 0.2523
67 0.2983
68 0.2501
69 0.3007
70 0.2578
71 0.2943
72 0.2578
73 0.2967
74 0.2592
75 0.2959
76 0.2564
77 0.2965
78 0.2587
79 0.2989
80 0.2625
81 0.3036
82 0.2564
83 0.2994
84 0.2597
85 0.2964
86 0.2575
87 0.2999
88 0.2597
89 0.295
90 0.2578
91 0.3003
92 0.2583
93 0.302
94 0.2632
95 0.304
96 0.2628
97 0.3033
98 0.2547
99 0.2978
100 0.2621
};
\addlegendentry{Monte-Carlo results}
\end{axis}

\end{tikzpicture}
\caption{Probability that a given candidate is the Condorcet winner as a function of \(n\) in the Impartial Culture with three candidates. The theoretical equivalent is based on \Cref{th_impartial_culture}, while exact results rely on \Cref{eq_alpha-winner-coeff-extraction}. Monte Carlo simulations use 10,000 profiles per point, yielding an error of order \(10^{-2}\). When the exact results curve is not visible, it is overlapped by the Monte Carlo results curve.}\label{fig_ic-3}
\end{center}
\end{figure}
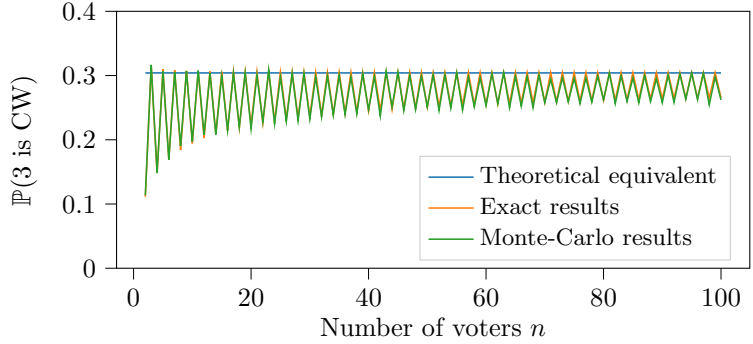

\paragraph{Numerical simulations}
\Cref{fig_ic-3} illustrates our results for the Impartial Culture with three candidates. Computational times vary significantly: for \(n=100\), Monte Carlo simulations take 5 seconds (faster than for the Mallows model due to quicker random profile generation in IC), the exact computation takes 22 seconds, and the theoretical approximation requires only 358~\(\mu\)s.  As before, the Monte Carlo and exact result curves exhibit sawtooth patterns due to parity effects in the number of voters \(n\) and remain nearly indistinguishable. In contrast, the theoretical approximation, which is just a numerical limit here, is simply represented as a horizontal line.
Notably, for odd \(n\), the exact values closely match the limit, whereas for even \(n\), they are significantly lower, although they converge to the same value. In \Cref{sec_error-terms}, we present the next term in the asymptotic expansion of the probability, which improves the fit between the theoretical approximation and the exact curve, and explains why the limit approximation is already very accurate for an odd number of voters~\(n\).

\section{Mixed Case: Subcritical and Critical}\label{sec_mixed-case}

We study the case where some of the coordinates are subcritical and some are critical, combining the ideas developed in \Cref{sec_subcritical-case,sec_critical-case}. 

\subsection{Theoretical Result in the Mixed Case}

As in the critical case, we apply \Cref{th_large-powers-singularity} from \Cref{sec_appendix-Elie-singularity}, now in its more general form.

\begin{theorem}\label{th_mixed-case}
Let $\mathcal{S}:=\{j \in \mathcal{A}, \; \zeta_j < 1\}$ and $\mathcal{C}:=\{j \in \mathcal{A}, \; \zeta_j = 1\}$ respectively denote the subset of subcritical and critical coordinates of the saddle point.
Assume $\mA = \mS \cup \mC$.
Then:
\[  \probamalphawinner
\underset{n \to +\infty}{\sim}
\frac{P(\vzeta)^n}{\prod\limits_{j\in \mathcal{S}}\big( (1-\zeta_j){\zeta_j}^{\lceil\beta_j n \rceil -1}\big)}
 \frac{1}{  \sqrt{(2\pi)^{m-1}n^{|\mathcal{S}|}\det(\mathcal{H}_{K}(\vtau) ) }}
 \int_{(0, + \infty)^{|\mathcal{C}|}}
	e^{- \vu^{\transp} M \vu / 2}
	d \vu,
	\]
where $M$ is the submatrix of  $ {\hessian_{K}(\vtau)}^{-1}$ that corresponds to the rows and columns of $\mathcal{C}$.
\end{theorem}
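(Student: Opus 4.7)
The plan is to obtain the asymptotics by applying the general saddle-point lemma of \Cref{sec_appendix-Elie-singularity} (\Cref{th_large-powers-singularity}) to the multivariate Cauchy integral of \Cref{eq_alpha-winner-as-cauchy-integral}, treating the subcritical and critical coordinates separately. The starting point is the representation
\[
\probamalphawinner = \frac{1}{(2 i \pi)^{m-1}} \oint \frac{P(\vx)^n}{\prod_{j\in\mathcal{A}}(1-x_j)}\,\frac{d\vx}{\prod_{j\in\mathcal{A}} x_j^{\lceil \beta_j n \rceil}}.
\]
For each $j\in\mathcal{S}$, the saddle-point coordinate satisfies $\zeta_j<1$, so a circle of radius $\zeta_j$ around the origin does not enclose the pole at $x_j=1$, and one proceeds as in \Cref{sec_subcritical-case}. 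For each $j\in\mathcal{C}$, the circle of radius $\zeta_j=1$ passes through the singularity, and one replaces it by the indented contour of \Cref{sec_critical-case}, which slightly bypasses the pole and contributes an orthant integral.

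The next step is the local analysis at the saddle point $\vzeta$. Set $\vx = \vzeta\,e^{i\vs/\sqrt{n}}$ on the subcritical coordinates (with $\vs_{\mathcal{S}}$ localized to a window of size $o(\sqrt{n})$) and a similarly scaled coordinate on the critical directions, using the indentation around $x_j=1$ to produce the half-line integration variable $u_j>0$. Expanding $P(\vx)^n/\prod_j x_j^{\lceil \beta_j n\rceil}$ through the cumulant function $K$ yields
\[
\frac{P(\vzeta)^n}{\prod_{j\in\mathcal{A}} \zeta_j^{\lceil\beta_j n\rceil}}\,\exp\!\Bigl(-\tfrac{1}{2}\vs^{\transp}\mathcal{H}_K(\vtau)\vs + O(n^{-1/2})\Bigr),
\]
while $1/(1-x_j)$ contributes $1/(1-\zeta_j)$ for $j\in\mathcal{S}$ (harmless constant) and, for $j\in\mathcal{C}$, the singular factor that, together with the $dx_j$, produces the measure $du_j$ on $(0,+\infty)$ after the contour deformation, as formalized in \Cref{th_large-powers-singularity}.

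The final step is to carry out the resulting Gaussian integral. After the change of variables and the Jacobian $\prod_{j\in\mathcal{S}}(\zeta_j/\sqrt{n})$, the subcritical coordinates are integrated over $\mathbb{R}^{|\mathcal{S}|}$ while the critical coordinates are integrated over $(0,+\infty)^{|\mathcal{C}|}$. Writing
\[
\mathcal{H}_K(\vtau) = \begin{pmatrix} A & B \\ B^{\transp} & C \end{pmatrix},\qquad A \text{ on } \mathcal{S},\ C \text{ on } \mathcal{C},
\]
marginalizing out $\vs_{\mathcal{S}}$ produces a factor $\sqrt{(2\pi)^{|\mathcal{S}|}/\det(A)}$ together with a Gaussian in $\vu$ of precision $C - B^{\transp} A^{-1} B$, i.e.\ the Schur complement. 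The identity $\det(\mathcal{H}_K(\vtau)) = \det(A)\cdot\det(C-B^{\transp}A^{-1}B)$ combined with the standard block-inversion formula $(C - B^{\transp} A^{-1} B)^{-1} = M$ (the submatrix of $\mathcal{H}_K(\vtau)^{-1}$ indexed by $\mathcal{C}$) rewrites the remaining orthant integral as $\int_{(0,+\infty)^{|\mathcal{C}|}} e^{-\vu^{\transp}M\vu/2}\,d\vu$, matching the stated formula.

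The main obstacle is bookkeeping rather than new ideas: one must verify that the hypotheses of \Cref{th_large-powers-singularity} hold uniformly with $\mathcal{S}$ and $\mathcal{C}$ mixed (in particular, positivity of $\mathcal{H}_K(\vtau)$ and negligibility of the non-local part of every contour, subcritical or indented), and track the exponent $\lceil\beta_j n\rceil$ versus $\beta_j n$ so that the factor $\zeta_j^{\lceil\beta_j n\rceil-1}$ appears with the stated integer exponent. The genericity of the culture and the convexity of $K$ (\Cref{th_convexity-cumulant-gf}) ensure that both $A$ and the Schur complement are positive definite, so the Gaussian marginalization and the orthant integral are both well defined.
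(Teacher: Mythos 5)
Your overall plan is sound, but it is worth noting that the paper's proof of this theorem is literally one sentence: it applies Theorem~\ref{th_large-powers-singularity} from the appendix with $\mS$ and $\mC$ both nonempty, and the submatrix $M$ of $\hessian_K(\vtau)^{-1}$ appears there directly. What you do instead is re-derive that lemma in outline, using a marginalize-then-block-invert route: you first integrate out the subcritical Gaussian variables to produce the Schur complement $C - B^\transp A^{-1} B$, then invoke block inversion to identify $(C - B^\transp A^{-1} B)^{-1}$ with the submatrix $M$. The paper's derivation of the lemma is structured differently: after injecting the identity $1/t = -i \int_0^\infty e^{itu}\,du$ for the critical coordinates, it diagonalizes via the change of variable $\vs = \sqrt{\mH}\,\vt$ and integrates the \emph{entire} Gaussian in $\vt$ at once, so that $(\vu_{\mC}, \vzero_{\mS})^\transp \mH^{-1} (\vu_{\mC}, \vzero_{\mS}) = \vu^\transp M \vu$ appears directly without ever introducing a Schur complement. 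Both routes are valid and equivalent; yours is arguably more familiar to readers used to conditional Gaussians, while the paper's avoids the block decomposition.

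One imprecision in your write-up should be flagged. You say that marginalizing out $\vs_{\mS}$ produces ``a Gaussian in $\vu$ of precision $C - B^\transp A^{-1} B$.'' That is not quite the right bookkeeping: marginalizing $\vs_{\mS}$ produces a Gaussian in the \emph{critical contour variables} $\vt_{\mC}$ of precision $C - B^\transp A^{-1} B$, still multiplied by the singular factor $\prod_{j\in\mC} 1/t_j$. It is only after the $1/t$-identity and the Fourier-type Gaussian integral over $\vt_{\mC}$ that the orthant variable $\vu$ acquires a Gaussian weight, and the precision of \emph{that} Gaussian is the \emph{inverse} of the Schur complement, namely $M = (C - B^\transp A^{-1} B)^{-1}$, consistent with the exponent $-\vu^\transp M \vu / 2$ in the statement. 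Your final formula is correct, but as written the sentence would suggest the exponent is $-\vu^\transp (C - B^\transp A^{-1} B) \vu / 2$, which is wrong. Beyond this, your remarks about hypothesis-checking (positivity of $\hessian_K(\vtau)$, tracking $\lceil \beta_j n\rceil$) are the right concerns, and indeed the paper delegates them to Lemma~\ref{th_conditions-P-satisfied} and to the $\vkappa_n = \vlambda n + \bigO(n^\delta)$ flexibility built into Theorem~\ref{th_large-powers-singularity}.
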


As in the subcritical case, this result establishes that the probability converges exponentially fast to zero whenever at least one coordinate is subcritical. The Gaussian integral associated with the critical coordinates is just a multiplicative constant in this asymptotic behavior.

\subsection{Application: Mallows Culture $\mM_{4 \textnormal{ last}}$}\label{sec_mallows-last-4}

In \Cref{sec_mallows-last-3}, we established that for the Mallows culture \( \mM_{3 \textnormal{ last}} \), all coordinates of the saddle point are subcritical. In the case of 4 candidates, \ie for \( \mM_{4 \textnormal{ last}} \), \Cref{th_saddle-point-mallows} yields \( \vzeta = (e^{-2\rho}, e^{-\rho}, 1) \). The first two coordinates are subcritical, while the last one is critical. We then apply \Cref{th_mixed-case}. Since the Gaussian integral is univariate in that case, we compute it explicitly.

\begin{theorem}\label{th_mallows-4-last}
Under the Mallows culture $\mathcal{M}_{4 \; \textnormal{last}}$, the probability that candidate $4$ is the Condorcet Winner has asymptotic behavior
\[\probacondorcetwinner{4} \underset{n\to +\infty}{\sim}
 \frac{P(\vzeta)^n}{\prod\limits_{j\in \mathcal{S}}\big( (1-\zeta_j){\zeta_j}^{\lceil n/2 \rceil -1}\big)}
 \frac{1}{ 4\pi n \sqrt{ \det(\mathcal{H}_{K}(\vtau) )  M_{33} }},
 \]
 where $M_{33}$ is the bottom right coefficient of $\mathcal{H}_{K}(\vtau)^{-1}$.
\end{theorem}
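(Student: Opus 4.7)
The plan is to apply \Cref{th_mixed-case} to the Mallows culture $\mathcal{M}_{4\,\mathrm{last}}$, after verifying that the coordinates of the saddle point split appropriately into subcritical and critical parts.

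First, I would invoke \Cref{th_saddle-point-mallows} with $m=4$ to obtain the log saddle point $\vtau = (-2\rho, -\rho, 0)$ and hence the saddle point $\vzeta = (e^{-2\rho}, e^{-\rho}, 1)$. Since $\rho > 0$, the first two coordinates satisfy $\zeta_1, \zeta_2 < 1$ while $\zeta_3 = 1$, so
$\mathcal{S} = \{1,2\}$ and $\mathcal{C} = \{3\}$, with $\mathcal{A} = \mathcal{S} \cup \mathcal{C}$. All hypotheses of \Cref{th_mixed-case} are therefore met, and I can apply it with $\vbeta = (1/2,1/2,1/2)$, so that $\lceil \beta_j n \rceil = \lceil n/2 \rceil$ for every $j \in \mathcal{A}$.

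Next, I would evaluate the Gaussian integral appearing in \Cref{th_mixed-case}. Since $|\mathcal{C}| = 1$, the matrix $M$ reduces to the single scalar $M_{33}$, namely the bottom-right entry of $\mathcal{H}_K(\vtau)^{-1}$, and the integral becomes univariate:
\[
\int_{0}^{+\infty} e^{-M_{33} u^2 / 2}\, du = \sqrt{\frac{\pi}{2 M_{33}}}.
\]
Plugging this into the formula of \Cref{th_mixed-case} with $m=4$ and $|\mathcal{S}|=2$ yields the prefactor
\[
\frac{1}{\sqrt{(2\pi)^3\, n^{2}\, \det(\mathcal{H}_K(\vtau))}} \cdot \sqrt{\frac{\pi}{2 M_{33}}}
= \frac{1}{\sqrt{16 \pi^{2}\, n^{2}\, \det(\mathcal{H}_K(\vtau))\, M_{33}}}
= \frac{1}{4\pi n \sqrt{\det(\mathcal{H}_K(\vtau))\, M_{33}}},
\]
which is exactly the constant in the statement. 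Combining this with the exponential and polynomial factors over $\mathcal{S}$ from \Cref{th_mixed-case} gives the claimed asymptotic expression.

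The only potentially delicate step is verifying the applicability of \Cref{th_mixed-case}: one must check that $\mathcal{H}_K(\vtau)$ is positive definite (so that $M_{33} > 0$) and that the saddle-point hypotheses of \Cref{th_large-powers-singularity} are satisfied under the Mallows distribution with $\rho > 0$; both follow from \Cref{th_conditions-P-satisfied} together with the strict concavity of $\psi$ on $\reals^{m-1}$ noted in \Cref{def_K-tau-etc}, since $P$ has strictly positive coefficients in the generic Mallows model. The remainder of the argument is then a direct algebraic simplification.
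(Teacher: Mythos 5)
Your proposal is correct and follows essentially the same route as the paper: use \Cref{th_saddle-point-mallows} with $m=4$ to obtain $\vzeta=(e^{-2\rho},e^{-\rho},1)$, identify $\mathcal{S}=\{1,2\}$ and $\mathcal{C}=\{3\}$, apply \Cref{th_mixed-case}, and evaluate the univariate Gaussian integral $\int_0^{\infty} e^{-M_{33}u^2/2}\,du = \sqrt{\pi/(2M_{33})}$, whose combination with $\frac{1}{\sqrt{(2\pi)^3 n^2 \det(\mathcal{H}_K(\vtau))}}$ indeed simplifies to $\frac{1}{4\pi n\sqrt{\det(\mathcal{H}_K(\vtau))\,M_{33}}}$. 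You are also right to flag the implicit assumption $\rho>0$ (otherwise $\vzeta=\vone$ and $\mathcal{S}$ would be empty), and the verification of \Cref{th_mixed-case}'s hypotheses via \Cref{th_conditions-P-satisfied} is exactly what the paper relies on.
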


In this expression, computing \( P(\vzeta) \) by hand remains relatively straightforward and yields the elegant result:  
\(
P(\vzeta) = 8 \gamma e^{-3\rho} (1+e^{-\rho})\left(\frac{1}{2} + e^{-\rho}\right)^2
\).
However, the expressions of \( \mathcal{H}_{K}(\vtau) \) and its inverse being large, we omit them in this paper. More generally, as \( m \) increases, the computation of \( P(\vzeta) \) and \( \mathcal{H}_{K}(\vtau) \) is best left to the computer. The package \texttt{Actinvoting} provides symbolic algorithms to efficiently compute these quantities.

\begin{figure}
\begin{center}
\begin{tikzpicture}

\definecolor{darkgrey176}{RGB}{176,176,176}
\definecolor{darkorange25512714}{RGB}{255,127,14}
\definecolor{forestgreen4416044}{RGB}{44,160,44}
\definecolor{lightgrey204}{RGB}{204,204,204}
\definecolor{steelblue31119180}{RGB}{31,119,180}

\begin{axis}[
height=\axisHeight,
legend cell align={left},
legend style={font=\legendFont, fill opacity=1, draw opacity=1, text opacity=1, draw=lightgrey204},
log basis y={10},
tick align=outside,
tick pos=left,
width=\axisWidth,
x grid style={darkgrey176},
xlabel={Number of voters $n$},
xmin=-0.4, xmax=52.4,
xtick style={color=black},
y grid style={darkgrey176},
ylabel={$\mathbb{P}(4 \text{ is CW})$},
ymin=1.51644497674266e-11, ymax=0.104243331462616,
ymode=log,
ymode=log
]
\addplot [semithick, steelblue31119180]
table {%
2 0.0281915573674291
3 0.0372307551265095
4 0.0069142830949232
5 0.0109574924475735
6 0.00226106987013422
7 0.00383920435092178
8 0.000831827609366385
9 0.00146472078517002
10 0.000326423489266461
11 0.000587844898419254
12 0.000133431461069767
13 0.000243988957384717
14 5.61008623102275e-05
15 0.000103724260982465
16 2.40788462995008e-05
17 4.48932215600777e-05
18 1.04988380579653e-05
19 1.97030825508882e-05
20 4.63491560958989e-06
21 8.74433058317186e-06
22 2.06684177420426e-06
23 3.91630143219821e-06
24 9.29344546212115e-07
25 1.76734779988224e-06
26 4.20797095210057e-07
27 8.02705711074423e-07
28 1.91666465705525e-07
29 3.6658965394712e-07
30 8.77487616114693e-08
31 1.68218824103093e-07
32 4.03524917342539e-08
33 7.75140505002004e-08
34 1.86294272630734e-08
35 3.5849608055634e-08
36 8.63046119857855e-09
37 1.66344642097699e-08
38 4.01061518040067e-09
39 7.74112390375748e-09
40 1.86892848533574e-09
41 3.61195934912737e-09
42 8.73094935639405e-10
43 1.68933717779531e-09
44 4.08804970730554e-10
45 7.91827109444663e-10
46 1.91809051356161e-10
47 3.71880243845136e-10
48 9.01662019799119e-11
49 1.74969749381427e-10
50 4.24593258498929e-11
};
\addlegendentry{Theoretical equivalent}
\addplot [semithick, darkorange25512714]
table {%
2 0.00444444444444444
3 0.0185871936076018
4 0.00194968145988554
5 0.00657844628230301
6 0.000848050684255617
7 0.00256714404209404
8 0.000372482357068678
9 0.00105315543865573
10 0.000165153102248213
11 0.000445481677633842
12 7.38488102494915e-05
13 0.000192435778451525
14 3.3268335336217e-05
15 8.44216602252766e-05
16 1.50845951872274e-05
17 3.74800764416713e-05
18 6.87838802560402e-06
19 1.67988296560971e-05
20 3.1519530659138e-06
21 7.58821219530236e-06
22 1.45061309949923e-06
23 3.45003763483218e-06
24 6.70166394809406e-07
25 1.57726541116229e-06
};
\addlegendentry{Exact results}
\addplot [semithick, forestgreen4416044]
table {%
2 0.0055
3 0.021
4 0.0022
5 0.0074
6 0.0007
7 0.0027
8 0.0006
9 0.0014
10 0.0002
11 0.0003
13 0.0004
};
\addlegendentry{Monte-Carlo results}
\end{axis}

\end{tikzpicture}
\caption{Probability that candidate \(4\) is the Condorcet winner as a function of \(n\) in a culture \( \mM_{4 \textnormal{ last}} \) with \(\rho = \log(2)\), shown on a semilog scale. The theoretical equivalent is based on \Cref{th_mallows-4-last}, while exact results rely on \Cref{eq_alpha-winner-coeff-extraction}. Monte Carlo simulations use 10,000 profiles per point, yielding an error of order \(10^{-2}\). For \( n \geq 14 \), they return zero, which is not visible due to the logarithmic scale. Exact results are not computed for \( n \geq 26 \) due to prohibitive runtime (e.g., 7 minutes for $n = 25$).}\label{fig_mallows-4-last}
\end{center}
\end{figure}
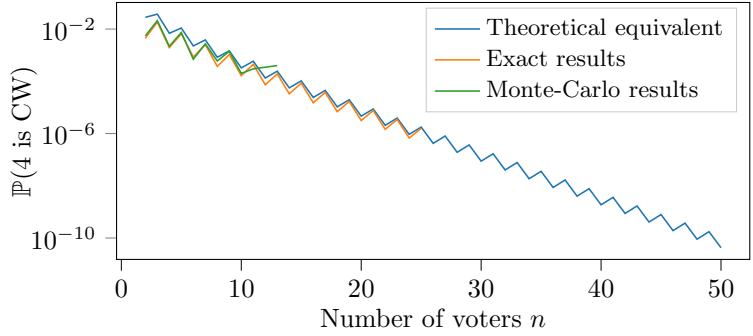

\paragraph{Numerical simulations}
\Cref{fig_mallows-4-last} illustrates our results for \( \mM_{4 \textnormal{ last}} \). 
The exact computation requires eliciting the polynomial \(P(\vx)\), which has \(2^{m-1}\) terms corresponding to all possible subsets of adversaries. This is followed by the algebraic expansion of \(P(\vx)^n\), resulting in a computational cost of \(\bigO(2^{(m-1)n})\). For \(m=4\), the runtime quickly becomes prohibitive: for instance, for \(n = 25\), the exact computation takes over 7 minutes. Therefore, we did not perform this calculation for larger values of \(n\).
In contrast, Monte Carlo simulations remain manageable, taking 24 seconds for \(n = 25\). However, as usual, their limitation lies in the precision of the calculation: in particular, when the exact probabilities are small, Monte Carlo simulations often return an empirical probability of zero. Finally, the theoretical approximation fits the exact curve well, while remaining computationally inexpensive: it takes only 437~\(\mu\)s for \(n = 25\).

\section{Dealing With Supercriticality}\label{sec_supercritical-case}

To compute \( \proba(m \succ_{\valpha} \mathcal{A}) \), our general approach relies on calculating the characteristic polynomial~\( P(\vx) \) and using the saddle point method.  
In \Cref{sec_subcritical-case,sec_critical-case,sec_mixed-case}, the singularity at 1 arising from the terms \( \frac{1}{1-x_j} \) in \Cref{eq_alpha-winner-as-cauchy-integral} was not an issue, as each coordinate \( \zeta_j \) of the saddle point was either subcritical or critical. Specifically, when \( \zeta_j \) is subcritical, the corresponding integration contour is simply a circle that passes through \( \zeta_j \), and hence does not encircle 1. When $\zeta_j$ is critical, the integration contour for that coordinate can be chosen to slightly bypass the singularity at 1.  

We now consider the case where at least one coordinate \( \zeta_j \) is supercritical, \ie \( \zeta_j > 1 \). In this scenario, the integration contour, which must pass sufficiently close to \( \zeta_j \), inevitably encloses the singularity at 1. In general, the contribution from this singularity is non-negligible and can be determined via a residue calculation. However, rather than relying solely on calculus, we provide an equivalent, more intuitive interpretation based on the complement rule.

\subsection{Theoretical Result for Supercriticality}\label{sec_supercritical-theoretical-result}

Assume that the coordinate \( \zeta_j \) is supercritical.
The complement rule gives
\begin{equation}\label{eq_complement-rule-particular}
 \proba\big(m \succ_{\valpha} \mathcal{A}\big) = \proba\big(m \succ_{\valpha} \mathcal{A}\setminus\{j\}\big) - \proba\big(m \succ_{\valpha} \mathcal{A}\setminus\{j\} \land m \preccurlyeq_{\valpha} j\big).   
\end{equation}

In the first term on the right-hand side, the event \( \{ m \succ_{\valpha} \mathcal{A} \setminus \{j\} \} \) means that \( m \) is an \( \valpha \)-winner after removing adversary \( j \), reducing the competition to \( m-2 \) adversaries. As we will elaborate, the associated culture arises by setting \( x_j = 1 \) in \( P(\vx) \), allowing the application of the saddle point method in a lower-dimensional space.

The second term is the probability that \( m \) defeats all candidates in \( \mathcal{A} \setminus \{j\} \) but not \( j \) in the sense of \( \valpha \). As we will see, this probability is expressed by using a modified characteristic polynomial and a new saddle point derived from \( \vzeta \), which is modified solely by inverting its \( j \)-th coordinate, thereby making it subcritical.

At this stage, the saddle point of the new characteristic polynomial may still have supercritical coordinates. Thus, the process might need to be iterated, using the following generalization of \Cref{eq_complement-rule-particular}. Let \( \mathcal{X} \) and \( \mathcal{Y} \) be disjoint sets of adversaries, and let \( j \in \mathcal{X} \). Then,
\begin{equation}\label{eq_complement-rule}
     \proba\big(m \succ_{\valpha} \mathcal{X} \land m \preccurlyeq_{\valpha} \mathcal{Y}  \big) = \proba\big(m \succ_{\valpha} \mathcal{X} \setminus \{j\} \land m \preccurlyeq_{\valpha} \mathcal{Y}  \big) - \proba\big(m \succ_{\valpha} \mathcal{X}\setminus \{j\}  \land m \preccurlyeq_{\valpha} \mathcal{Y}\cup \{j\}  \big).
\end{equation}
Once again, the first term reduces the dimensionality of the problem, while the second term enables the inversion of a saddle point coordinate, which will be used in practice to transform a supercritical coordinate into a subcritical one.  
By iterating this process, we systematically reduce the problem to terms of the form  
\(
\proba\big(m \succ_{\valpha} \mathcal{X} \land m \preccurlyeq_{\valpha} \mathcal{Y}\big),
\) 
for well-chosen sets \( \mathcal{X} \) and \( \mathcal{Y} \).

To compute such terms, we define a new characteristic polynomial 
\( P_{\mathcal{Y}}^{\mathcal{X}}(\vx_{{}_\mathcal{X}},\vy_{{}_\mathcal{Y}}) \), with variables \( \vx_{{}_{\mX}} = (x_j)_{j \in \mX} \) and \( \vy_{{}_{\mY}} = (y_j)_{j \in \mY} \),  
derived from the original polynomial \( P(\vx) \) through the following transformations.
 For each adversary \( j \in \mathcal{X} \), the formal variable \( x_j \) keeps its interpretation and does not lead to a modification in the polynomial. 
  For each adversary \( j \in \mathcal{Y} \), instead of using \( x_j \), which encodes when \( j \) is preferred to \( m \), we introduce a new formal variable \( y_j \) that encodes when \( m \) is preferred to \( j \). This variable will appear in all monomials that do not contain \( x_j \) in \( P(\vx) \). 
  Finally, for each adversary \( j \in \mathcal{A} \setminus (\mathcal{X} \cup \mathcal{Y}) \), we disregard the outcome of the pairwise comparison, hence we eliminate the corresponding variable by setting \( x_j =1 \).  
This corresponds to the algebraic operation:
\[
P_{\mathcal{Y}}^{\mathcal{X}}\big(\vx_{{}_\mathcal{X}},\vy_{{}_\mathcal{Y}}\big) = \Big(\prod_{j \in \mathcal{Y}} y_j\Big) P\Big(\vx_{{}_\mathcal{X}},\frac{\vone}{\vy_{{}_\mathcal{Y}}},\vone_{{}_{\mathcal{A}\setminus (\mathcal{X} \cup \mathcal{Y})}}\Big),
\]
where $\frac{\vone}{\vy_{{}_\mathcal{Y}}} :=({y_j}^{-1})_{j \in \mY}$.
As an example, consider \( m=3 \), \( \mX=\{1\} \), and \( \mY=\{2\} \). From the expression of $P$ in \Cref{eq_polynomial-m-is-3}, we deduce:
\[
P_{\{2\}}^{\{1\}}(x_1,y_2)= p_{\emptyset} y_2 + p_{\{1\}} x_1 y_2 + p_{\{2\}} + p_{\{1,2\}} x_1.
\]
Note that $P_{\mY}^{\mX}\big(\vx_{{}_\mathcal{X}},\vy_{{}_\mathcal{Y}}\big)$ generalizes $P(\vx)$, since $P(\vx)=P_{\emptyset}^{\mA}\big(\vx_{{}_\mathcal{A}},\vy_{{}_\emptyset}\big)$.
As in \Cref{eq_alpha-winner-coeff-extraction}, the probability of interest is expressed as a coefficient extraction:
\[
 \proba\big(m \succ_{\valpha} \mathcal{X} \land m \preccurlyeq_{\valpha} \mathcal{Y} \big) = \Big[ \prod_{j \in \mathcal{X}}{x_j}^{< \beta_j n}\prod_{j \in \mathcal{Y}}{y_j}^{\le \alpha_j n}\Big] P_{\mathcal{Y}}^{\mathcal{X}}\big(\vx_{{}_\mathcal{X}},\vy_{{}_\mathcal{Y}}\big).
\]
Given two disjoint sets of adversaries \( \mathcal{X} \) and \( \mathcal{Y} \), \Cref{def_K-tau-etc} generalizes to the cumulant generating function $\tilde{K}(\vt_{{}_{\mX \cup \mY}})=\log\big(P_{\mathcal{Y}}^{\mathcal{X}}(\exp(\vt_{{}_{\mX \cup \mY}}))\big)$, the log saddle point $\tilde{\vtau} = \argmax_{\vt_{{}_{\mX \cup \mY}} \in \mathbb{R}^{|\mX \cup \mY|}} \big( -K(\vt_{{}_{\mX \cup \mY}}) + (\vbeta_{{}_{\mX}},\valpha_{{}_{\mY}})^{\transp} \vt_{{}_{\mX \cup \mY}} \big) $, and the saddle point $\tilde{\vzeta}= e^{\tilde{\vtau}}$.

As previously mentioned, transferring an adversary from \( \mathcal{X} \) to \( \mathcal{Y} \) results in the inversion of the corresponding coordinate in the saddle point. The following lemma formalizes this process and is proved in \Cref{sec_appendix-flip-flop}.
\begin{lemma}\label{th_supercritical-becomes-subcritical}
Let $\mX$ and $\mY$ be two disjoint sets of adversaries. Let $\tilde{\vzeta}$ be the saddle point associated with $\proba\big(m \succ_{\valpha} \mathcal{X} \land m \preccurlyeq_{\valpha} \mathcal{Y} \big)$ . Let $j \in \mX$, $\mX' = \mX \setminus \{j\}$, and $\mY' = \mY \cup \{j\}$. 
Then the saddle point associated with $\proba\big(m \succ_{\valpha} \mathcal{X}' \land m \preccurlyeq_{\valpha} \mathcal{Y}' \big)$ has its $k$-th coordinate equal to $\tilde{\zeta}_k$ if $k \neq j$ and $\frac{1}{\tilde{\zeta}_j}$ if $k=j$.
\end{lemma}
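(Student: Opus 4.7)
The plan is to reduce the claim to an explicit algebraic identity between the two characteristic polynomials, and then track how that identity propagates through the logarithm and through the concave optimization defining the saddle point.

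First, I would verify the identity
\[
P_{\mathcal{Y}'}^{\mathcal{X}'}\big(\vx_{{}_{\mathcal{X}'}},\vy_{{}_{\mathcal{Y}'}}\big) \;=\; y_j \cdot P_{\mathcal{Y}}^{\mathcal{X}}\!\left(\vx_{{}_{\mathcal{X}'}},\,1/y_j,\,\vy_{{}_{\mathcal{Y}}}\right),
\]
where on the right-hand side the ``$1/y_j$'' is substituted in place of the variable $x_j$ of $P_{\mathcal{Y}}^{\mathcal{X}}$. This follows directly by expanding both sides via the definition $P_{\mathcal{Y}}^{\mathcal{X}}(\cdot)=\big(\prod_{k\in\mathcal{Y}} y_k\big)P(\vx_{{}_\mathcal{X}},\vone/\vy_{{}_\mathcal{Y}},\vone)$: the extra factor $y_j$ supplies the missing term in $\prod_{k\in\mathcal{Y}'} y_k$, and substituting $x_j=1/y_j$ produces the correct argument of $P$.

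Second, I would pass to log coordinates. Writing $\vx=e^{\vt}$ for the $\mathcal{X}$-variables of the original problem and $\vy=e^{\vt'}$ for the $\mathcal{Y}$-variables of the primed problem, the identity above becomes
\[
\tilde{K}'(\vt') \;=\; t'_j \;+\; \tilde{K}(\vt),\qquad\text{with } t_j=-t'_j \text{ and } t_k=t'_k \text{ for all } k\neq j.
\]
Plugging this into the objective defining $\tilde{\vtau}'$, and using the crucial identity $\alpha_j-1=-\beta_j$ together with $-\beta_j t'_j=\beta_j t_j$, a direct computation gives
\[
-\tilde{K}'(\vt')+\big(\vbeta_{{}_{\mathcal{X}'}},\valpha_{{}_{\mathcal{Y}'}}\big)^{\transp}\vt' \;=\; -\tilde{K}(\vt)+\big(\vbeta_{{}_{\mathcal{X}}},\valpha_{{}_{\mathcal{Y}}}\big)^{\transp}\vt,
\]
so the two optimization problems have the same value, and their maximizers are in bijection via $t_j=-t'_j$, $t_k=t'_k$ for $k\neq j$.

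Third, since Theorem~\ref{th_conditions-P-satisfied} (invoked via \Cref{def_K-tau-etc}) guarantees a \emph{unique} saddle point, the relation above forces $\tilde{\tau}'_j=-\tilde{\tau}_j$ and $\tilde{\tau}'_k=\tilde{\tau}_k$ for $k\neq j$; exponentiating yields $\tilde{\zeta}'_j=1/\tilde{\zeta}_j$ and $\tilde{\zeta}'_k=\tilde{\zeta}_k$ otherwise, as claimed. There is essentially no hard step here: everything is bookkeeping. The only subtle point is ensuring that signs and index sets line up, which hinges on the combinatorial meaning of $\vbeta=\vone-\valpha$ encoding the transition between ``$m$ beats $j$'' (threshold $\beta_j n$ for $x_j$) and ``$m$ does not beat $j$'' (threshold $\alpha_j n$ for $y_j$). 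This is precisely the algebraic shadow of the probabilistic swap $m\succ_\valpha j \leftrightsquigarrow m\preccurlyeq_\valpha j$ produced by the complement rule in \Cref{eq_complement-rule}.
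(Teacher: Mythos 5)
Your proof is correct and follows essentially the same route as the paper: you start from the same algebraic identity relating $P_{\mathcal{Y}'}^{\mathcal{X}'}$ and $P_{\mathcal{Y}}^{\mathcal{X}}$, and then track the coordinate inversion through the saddle-point optimization, which the paper does directly in multiplicative form via $\argmin P/(\text{monomial})$ and a substitution $x_j=1/y_j$ while you pass to log coordinates and flip the sign of $t_j$. The two formulations are equivalent; your version makes the appeal to uniqueness of the maximizer slightly more explicit, but the substance is the same.
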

This lemma, combined with the iterated application of \Cref{eq_complement-rule}, allows expressing the probability $\probamalphawinner$ as a sum of terms of the form $\proba\big(m \succ_{\valpha} \mathcal{X} \land m \preccurlyeq_{\valpha} \mathcal{Y} \big)$, where all the coordinates of all saddle points are either subcritical or critical.
The following theorem provides the asymptotic behavior of such terms. It directly follows from \Cref{th_large-powers-singularity} in \Cref{sec_appendix-Elie-singularity}.

\begin{theorem}\label{th_proba-X-Y}
Let $\mX$ and $\mY$ be disjoint sets of adversaries.
Let $\tilde{P}\big(\vx_{{}_\mathcal{X}},\vy_{{}_\mathcal{Y}}\big) :=P_{\mathcal{Y}}^{\mathcal{X}}\big(\vx_{{}_\mathcal{X}},\vy_{{}_\mathcal{Y}}\big)$ and $\tilde{K}$ its cumulant generating function. Let $\tilde{\vtau}$ and $\tilde{\vzeta}$ respectively be the log saddle point and saddle point associated with $\proba\big(m \succ_{\valpha} \mathcal{X} \land m \preccurlyeq_{\valpha} \mathcal{Y}\big)$.
We assume that (1) $\mathcal{X}=\mathcal{S} \cup \mathcal{C}$, where $  \tilde{\zeta}_j <1, \  \forall j \in \mathcal{S}$ and $\tilde{\zeta}_j =1, \ \forall j \in \mC$, and (2) $\tilde{\zeta}_j < 1, \ \forall j \in \mY$.
Then    
\begin{align*}
    & \proba\big(m \succ_{\valpha} \mathcal{X} \land m \preccurlyeq_{\valpha} \mathcal{Y}\big) \underset{n \to +\infty}{\sim} \\
    & \frac{\tilde{P}\big(\tilde{\vzeta}\big)^n}{\prod\limits_{j \in \mathcal{S}} {\tilde{\zeta}_{j}}^{\lceil \beta_j n \rceil -1} \prod\limits_{j \in \mathcal{Y}} {\tilde{\zeta}_{j}}^{\lfloor \alpha_j n \rfloor}}
 \frac{1}{ \prod\limits_{j\in \mathcal{S}\cup\mathcal{Y}}\big(1-\tilde{\zeta}_j\big)}
 \frac{1}{  \sqrt{(2\pi)^{|\mathcal{X} \cup \mathcal{Y}|}n^{|\mathcal{S} \cup \mathcal{Y}|}\det\big(\mathcal{H}_{\tilde{K}}(\tilde{\vtau}) \big) }}
 \int_{(0, + \infty)^{|\mathcal{C}|}}
	e^{- \vu^{\transp} \tilde{M} \vu / 2}
	d \vu,
\end{align*}
where $\tilde{M}$ is the submatrix of  $\mathcal{H}_{\tilde{K}}(\tilde{\vtau})^{-1}$ that corresponds to the rows and columns of $\mathcal{C}$.
\end{theorem}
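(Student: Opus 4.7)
The plan is to express the probability as a multivariate Cauchy integral and then invoke Theorem~\ref{th_large-powers-singularity} from the appendix, which is specifically designed to handle the mixed subcritical/critical saddle point behavior that occurs here. First, I would generalize \Cref{eq_alpha-winner-as-cauchy-integral} to the joint event. Iterating the partial-sum identity \Cref{eq_coeff-extraction-lower-than-L} for each adversary in $\mX$ (to handle the strict bound $x_j < \beta_j n$, equivalent to $x_j \leq \lceil \beta_j n \rceil - 1$) and each adversary in $\mY$ (to handle $y_j \leq \alpha_j n$, equivalent to $y_j \leq \lfloor \alpha_j n \rfloor$), and then applying Cauchy's formula in each variable, yields
\[
\proba\big(m \succ_{\valpha} \mathcal{X} \land m \preccurlyeq_{\valpha} \mathcal{Y} \big) = \frac{1}{(2i\pi)^{|\mX\cup\mY|}}\oint \frac{\tilde{P}(\vx_{{}_\mX}, \vy_{{}_\mY})^n}{\prod_{j\in\mX}(1-x_j)\prod_{j\in\mY}(1-y_j)}\,\frac{d\vx_{{}_\mX}\, d\vy_{{}_\mY}}{\prod_{j\in\mX}x_j^{\lceil\beta_j n\rceil}\prod_{j\in\mY}y_j^{\lfloor\alpha_j n\rfloor+1}}.
\]

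Next, I would check that $\tilde{P}$ satisfies the hypotheses of \Cref{th_conditions-P-satisfied}: by construction its coefficients are non-negative and, under the genericity assumption on the GIC, sufficiently spread out to make the saddle point $\tilde{\vzeta}$ exist and be unique. I would then choose the integration contour as a product of circles of radius $\tilde{\zeta}_j$ for the subcritical coordinates $j\in\mS\cup\mY$, and for the critical coordinates $j\in\mC$ a contour passing near $\tilde{\zeta}_j=1$ but slightly deformed to bypass the singularity of $1/(1-x_j)$. This setup matches exactly the hypotheses of \Cref{th_large-powers-singularity}, which then produces the claimed asymptotic.

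Finally, I would just match factor by factor the output of the appendix theorem against the formula in the statement. The exponential part yields $\tilde{P}(\tilde{\vzeta})^n$ together with the monomial factors $\tilde{\zeta}_j^{-(\lceil\beta_j n\rceil-1)}$ for $j\in\mS$ and $\tilde{\zeta}_j^{-\lfloor\alpha_j n\rfloor}$ for $j\in\mY$; the $\pm 1$ shifts come from the $L+1$ exponent in Cauchy's formula combined with the extra $1/(1-\cdot)$ factor. Each subcritical coordinate then contributes a $(1-\tilde{\zeta}_j)^{-1}$ prefactor from evaluating the residual singularity factor at the saddle point, and a $(2\pi n)^{-1/2}$ factor from the one-dimensional Gaussian approximation around the saddle point, producing together the $(2\pi n)^{-|\mS\cup\mY|/2}$ term. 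The overall $\det(\mathcal{H}_{\tilde{K}}(\tilde{\vtau}))^{-1/2}$ comes from the standard multivariate change of variables used in the saddle-point method.

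The main obstacle will be the appearance of the submatrix $\tilde{M}$ of $\mathcal{H}_{\tilde{K}}(\tilde{\vtau})^{-1}$, rather than of $\mathcal{H}_{\tilde{K}}(\tilde{\vtau})$ itself, in the Gaussian integral over $(0,+\infty)^{|\mC|}$. This is a Schur complement phenomenon: the subcritical directions are integrated out along full circles, giving complete Gaussian integrals, while the critical directions are integrated over half-lines because of the contour deformation around the singularity at~$1$. Marginalizing a multivariate Gaussian over the subcritical coordinates leaves, on the remaining critical coordinates, a Gaussian whose covariance is the corresponding diagonal block of the inverse Hessian, which is exactly~$\tilde{M}$. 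All of this bookkeeping is carried out inside \Cref{th_large-powers-singularity} in \Cref{sec_appendix-Elie-singularity}, so the present theorem is essentially a direct specialization.
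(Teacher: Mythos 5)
Your proposal is correct and takes essentially the same route as the paper: the paper states that \Cref{th_proba-X-Y} ``directly follows from \Cref{th_large-powers-singularity},'' and what you do is spell out the steps the paper leaves implicit (write $\proba(m \succ_{\valpha} \mX \land m \preccurlyeq_{\valpha} \mY)$ as a Cauchy integral with the correct exponents $\lceil\beta_j n\rceil$ and $\lfloor\alpha_j n\rfloor+1$, verify the hypotheses via \Cref{th_conditions-P-satisfied}, and apply \Cref{th_large-powers-singularity} with the appendix's ``$\mS$'' replaced by $\mS\cup\mY$ and its ``$\mC$'' by $\mC$). One small caveat on your closing heuristic: the exponent in the half-orthant integral contains $\tilde M = (\mH_{\tilde K}(\tilde\vtau)^{-1})_{\mC\mC}$ directly, which is the \emph{covariance}, not the precision, of the marginal Gaussian you describe, so ``marginalizing out the subcritical coordinates'' does not literally produce a density of the form $e^{-\vu^T\tilde M\vu/2}$; in the appendix proof $\tilde M$ instead arises from the Fourier identity $1/t_j = -i\int_0^\infty e^{i t_j u_j}\,du_j$ applied to the critical singularities followed by a full Gaussian integration over all $\vt$, which is why the inverse Hessian (rather than a Schur complement) appears in the exponent — but since you defer the rigorous bookkeeping to \Cref{th_large-powers-singularity} this does not affect the validity of the argument.
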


In summary, to handle supercriticality, the process consists in applying \Cref{eq_complement-rule} to one of the supercritical coordinates of the original saddle point, and then iterating this step on each term whose saddle point still contains supercritical coordinates, until every term satisfies the assumptions of \Cref{th_proba-X-Y}. The procedure is guaranteed to terminate, as each iteration either reduces the dimension by one or decreases the number of supercritical coordinates. Once \Cref{th_proba-X-Y} is used to compute the asymptotics of each term, some may be negligible compared to others, leading to a simplified sum with fewer terms, as illustrated in the following application.

%
%
%
%
%
%
%
%
%
%

\subsection{Application:  Mallows Culture $\mM_{3 \textnormal{ first}}$}

Analogously to \Cref{th_saddle-point-mallows} for the culture \( \mM_{m \textnormal{ last}} \), \Cref{th_saddle-point-mallows-m-first} provides the expression of the saddle point for \( \mM_{m \textnormal{ first}} \). The proof is similar and is provided in \Cref{sec_appendix-saddle-point-Mallows}.
\begin{lemma}\label{th_saddle-point-mallows-m-first}
     Under the culture $\mathcal{M}_{m \; \textnormal{first}}$, the log saddle point $\vtau$ is given by
    \[ \vtau = \left(\frac{m\rho}{2},\frac{(m-2)\rho}{2}, \dots, \frac{(-m+4)\rho}{2} \right),\]
    where $\rho$ is the concentration parameter of the culture, defined in \Cref{sec_definitions-and-notations}.
\end{lemma}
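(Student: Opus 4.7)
The plan is to prove the formula by exploiting a symmetry between the cultures $\mathcal{M}_{m \text{ last}}$ and $\mathcal{M}_{m \text{ first}}$, which will let me reuse \Cref{th_saddle-point-mallows} directly instead of redoing the computation.

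First I would set up the combinatorial symmetry. Since $r_0^{\text{first}} = (m, m-1, \dots, 1)$ is the reverse of $r_0^{\text{last}} = (1, 2, \dots, m)$, and the Kendall-tau distance is invariant under simultaneous reversal of both arguments (a sequence of adjacent swaps transforming $r$ into $r_0^{\text{first}}$ can be mirrored into a sequence transforming $\bar r$ into $r_0^{\text{last}}$), we have $d(r, r_0^{\text{first}}) = d(\bar r, r_0^{\text{last}})$ for every ranking $r$, where $\bar r$ denotes the reverse of $r$. Consequently, the map $r \mapsto \bar r$ is a probability-preserving bijection from $\mathcal{M}_{m \text{ first}}$ onto $\mathcal{M}_{m \text{ last}}$.

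Next I would translate this into a relation between characteristic polynomials. In $r$, the set of adversaries preferred to candidate $m$ is exactly the complement in $\mathcal{A}$ of the corresponding set in $\bar r$. Combining this with the previous bijection gives
\[
P^{\text{first}}(\vx) = \Big(\prod_{j \in \mathcal{A}} x_j\Big)\, P^{\text{last}}(\vone/\vx).
\]
Taking logarithms at $\vx = e^{\vt}$, the cumulant generating functions satisfy $K^{\text{first}}(\vt) = \vone^{\transp} \vt + K^{\text{last}}(-\vt)$, hence $\nabla K^{\text{first}}(\vt) = \vone - \nabla K^{\text{last}}(-\vt)$.

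Then I would apply the saddle-point equation. For the standard Condorcet winner we have $\vbeta = \vone/\vtwo$, so the log saddle point is defined by $\nabla K^{\text{first}}(\vtau^{\text{first}}) = \vone/\vtwo$. Substituting the gradient identity yields $\nabla K^{\text{last}}(-\vtau^{\text{first}}) = \vone/\vtwo$, and the uniqueness of the saddle point (guaranteed by \Cref{th_conditions-P-satisfied}) forces $-\vtau^{\text{first}} = \vtau^{\text{last}}$. Plugging in the explicit formula for $\vtau^{\text{last}}$ from \Cref{th_saddle-point-mallows} and reading the coordinates off with a sign flip, the $j$-th coordinate becomes $\frac{(m-2(j-1))\rho}{2}$, which matches the claimed vector $\bigl(\frac{m\rho}{2}, \frac{(m-2)\rho}{2}, \dots, \frac{(-m+4)\rho}{2}\bigr)$.

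The main obstacle is the translation step: one must carefully verify the identity relating $P^{\text{first}}$ and $P^{\text{last}}$, in particular that the complementation of the "adversaries above $m$" set under reversal, combined with the probability-preserving bijection $r \mapsto \bar r$, produces exactly the substitution $\vx \mapsto \vone/\vx$ together with the prefactor $\prod_j x_j$. An alternative, more computational route would mirror the direct proof of \Cref{th_saddle-point-mallows}: write $P^{\text{first}}$ explicitly using the Mallows weights, and verify by hand that the proposed $\vtau$ zeroes the gradient of $-K^{\text{first}}(\vt) + \vbeta^{\transp}\vt$; this avoids the symmetry argument at the cost of a less conceptual derivation.
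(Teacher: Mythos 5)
Your proof is correct, and it takes a genuinely different route from the paper. The paper's appendix (\Cref{sec_appendix-saddle-point-Mallows}) proves \Cref{th_saddle-point-mallows} directly by writing the equilibrium equations
\(
p_r \prod_{k > m \text{ in } r} \zeta_k = p_{r'} \prod_{k > m \text{ in } r'} \zeta_k
\)
for paired rankings in $\mM_{m \text{ last}}$ and solving a system of $m-1$ of them; it then asserts that "the proof follows the exact same steps" for $\mM_{m \text{ first}}$, so its intended proof is a second direct computation mirroring the first. You instead leverage the reversal bijection $r \mapsto \bar r$, which sends $\mM_{m \text{ first}}$ to $\mM_{m \text{ last}}$ probability-preservingly while complementing the set of adversaries above $m$, giving $P^{\text{first}}(\vx) = \big(\prod_j x_j\big) P^{\text{last}}(\vone / \vx)$ and hence $\nabla K^{\text{first}}(\vt) = \vone - \nabla K^{\text{last}}(-\vt)$; combined with $\vbeta = \vone/\vtwo$ and the uniqueness from \Cref{th_conditions-P-satisfied}, this immediately yields $\vtau^{\text{first}} = -\vtau^{\text{last}}$. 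Each step you flag as needing verification is indeed correct: reversal preserves Kendall-tau (it flips every pair's order in both arguments simultaneously), so $d(r, r_0^{\text{first}}) = d(\bar r, r_0^{\text{last}})$, and the complementation of the set $\{j : j \succ_r m\}$ does produce exactly $\vx \mapsto \vone/\vx$ with prefactor $\prod_j x_j$ at the level of $P$. Your approach buys a cleaner, redundancy-free derivation that exposes the duality between the two cultures; it is also consonant with the paper's own machinery in \Cref{sec_supercritical-case}, since your polynomial identity is precisely the specialization of \Cref{eq_polynomials-flip-flap} (equivalently the definition of $P_{\mA}^{\emptyset}$) to flipping all adversaries at once, which is what \Cref{th_supercritical-becomes-subcritical} would give coordinate-by-coordinate. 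The paper's direct computation is, by contrast, self-contained and does not presuppose \Cref{th_saddle-point-mallows}, but in the end the paper does present the two lemmas in tandem, so the symmetry route is both shorter and arguably more informative.
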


In the case of three candidates, \Cref{th_saddle-point-mallows-m-first} gives \( \vzeta = \left(e^{3 \rho/2}, e^{\rho/2}\right) \), hence both coordinates are supercritical. Consequently, we apply the procedure outlined in \Cref{sec_supercritical-theoretical-result}. In this specific case, we show in \Cref{sec_appendix-mallows-3-first} that this approach effectively reduces to applying the inclusion-exclusion principle and identifying the dominant term in the sum. Unsurprisingly, the dominant term that may prevent candidate~3 from being the Condorcet winner corresponds to the event of not winning the pairwise comparison against the strongest adversary, candidate~2.

\begin{theorem}\label{th_mallows-3-first}
Under $\mathcal{M}_{3 \; \textnormal{first}}$,
the probability that candidate $3$ is the Condorcet Winner has asymptotic behavior
\[ 1-\probacondorcetwinner{3} \underset{n \to +\infty}{\sim}  \sqrt{\frac{2}{\pi n}} \frac{e^{-\rho \lceil n/2 \rceil} }{1-e^{-\rho}}\left(\frac{2}{1+e^{-\rho}} \right)^n .
\]
\end{theorem}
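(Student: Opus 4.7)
The plan is to apply the supercriticality-reduction procedure of \Cref{sec_supercritical-case} to the Mallows culture $\mM_{3 \textnormal{ first}}$. By \Cref{th_saddle-point-mallows-m-first}, the saddle point is $\vzeta = (e^{3\rho/2}, e^{\rho/2})$, so both coordinates are supercritical and the results of \Cref{sec_subcritical-case,sec_critical-case,sec_mixed-case} do not apply directly. Iterating \Cref{eq_complement-rule} twice, so as to flip both supercritical coordinates of the saddle point, yields the standard inclusion--exclusion identity
\[
1 - \probacondorcetwinner{3}
=
\proba(3 \preccurlyeq_{\valpha} 1)
+
\proba(3 \preccurlyeq_{\valpha} 2)
-
\proba\big(3 \preccurlyeq_{\valpha} \{1, 2\}\big),
\]
with $\valpha = (1/2, 1/2)$. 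By \Cref{th_supercritical-becomes-subcritical}, the saddle points associated with these three probabilities are obtained by flipping the supercritical coordinates, hence are all subcritical, so \Cref{th_proba-X-Y} applies to each.

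Next, I would compute the asymptotics of the candidate dominant term $\proba(3 \preccurlyeq_{\valpha} 2)$. Here $\mathcal{X} = \emptyset$ and $\mathcal{Y} = \{2\}$, so the reduced characteristic polynomial is
\[
P_{\{2\}}^{\emptyset}(y_2) = y_2 \, \proba(3 \succ 2) + \proba(2 \succ 3),
\]
and the Mallows marginals $\proba(3 \succ 2) = 1/(1+e^{-\rho})$ and $\proba(2 \succ 3) = e^{-\rho}/(1+e^{-\rho})$ give the subcritical saddle point $\tilde{\zeta}_2 = e^{-\rho}$, the value $P_{\{2\}}^{\emptyset}(\tilde{\zeta}_2) = 2 e^{-\rho}/(1+e^{-\rho})$, and, via \Cref{eq_hessian-of-K-in-tau}, the Hessian $\mH_{\tilde{K}}(\tilde{\vtau}) = 1/4$. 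Substituting these into \Cref{th_proba-X-Y} and regrouping the exponential factors using the identity $n - \lfloor n/2 \rfloor = \lceil n/2 \rceil$ produces exactly the right-hand side of \Cref{th_mallows-3-first}.

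It then remains to show that the two remaining terms are exponentially negligible. The same one-dimensional saddle-point computation applied to $\proba(3 \preccurlyeq_{\valpha} 1)$ yields an exponential rate of the form $\big(2 \sqrt{\proba(1 \succ 3) \proba(3 \succ 1)}\big)^n$, while the one for $\proba(3 \preccurlyeq_{\valpha} 2)$ takes the analogous form $\big(2 \sqrt{\proba(2 \succ 3) \proba(3 \succ 2)}\big)^n$. A direct check on the explicit Mallows marginals gives $\proba(1 \succ 3) < \proba(2 \succ 3) < 1/2$, reflecting the adjacency of $2$ to $3$ in the reference ranking while $1$ is two swaps further away, and since $p \mapsto p(1-p)$ is strictly increasing on $(0, 1/2)$, this makes $\proba(3 \preccurlyeq_{\valpha} 1)$ exponentially smaller than $\proba(3 \preccurlyeq_{\valpha} 2)$. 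The joint term $\proba\big(3 \preccurlyeq_{\valpha} \{1, 2\}\big)$ is bounded above by either marginal and is therefore also negligible, leaving $\proba(3 \preccurlyeq_{\valpha} 2)$ as the unique asymptotic contribution.

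The main obstacle is the bookkeeping in this rate comparison: verifying that the one-dimensional saddle-point calculation indeed produces the symmetric form $\big(2\sqrt{p(1-p)}\big)^n$ with $p = \proba(j \succ 3)$, and checking the strict inequality $\proba(1 \succ 3) < \proba(2 \succ 3)$ from the explicit Mallows probabilities. Every other step reduces to a direct algebraic substitution into \Cref{th_proba-X-Y}.
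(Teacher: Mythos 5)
Your proof is correct and follows the same path as the paper's appendix argument: reduce via iterated applications of the complement rule to the inclusion--exclusion sum $\proba(3 \preccurlyeq_{\valpha} \{1\}) + \proba(3 \preccurlyeq_{\valpha} \{2\}) - \proba(3 \preccurlyeq_{\valpha} \{1,2\})$, apply the one-dimensional saddle-point formula of \Cref{th_proba-X-Y} to the dominant term $\proba(3 \preccurlyeq_{\valpha} \{2\})$ with saddle point $e^{-\rho}$ and Hessian $1/4$, and bound the other two terms. Your comparison of exponential rates via the monotonicity of $p \mapsto p(1-p)$ on $(0,1/2)$ combined with $\proba(1 \succ 3) < \proba(2 \succ 3) < 1/2$ is a clean rephrasing of the paper's ``reduced Mallows concentration'' argument, which in fact contains a small verbal slip (the reduced concentration for the adversary-$1$ term is larger, not smaller, than $\rho$) that your formulation sidesteps.
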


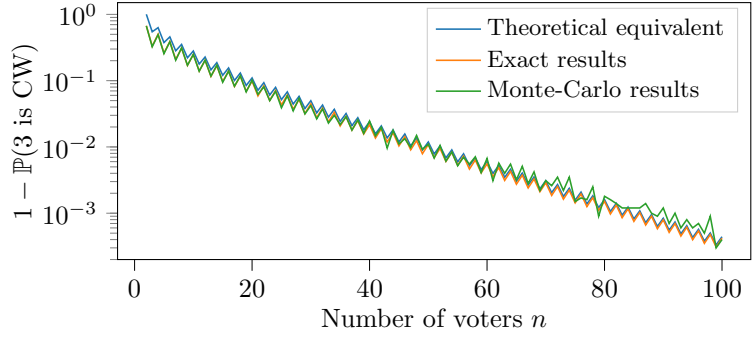
\begin{figure}
\begin{center}
\begin{tikzpicture}

\definecolor{darkgrey176}{RGB}{176,176,176}
\definecolor{darkorange25512714}{RGB}{255,127,14}
\definecolor{forestgreen4416044}{RGB}{44,160,44}
\definecolor{lightgrey204}{RGB}{204,204,204}
\definecolor{steelblue31119180}{RGB}{31,119,180}

\begin{axis}[
height=\axisHeight,
legend cell align={left},
legend style={font=\legendFont, fill opacity=1, draw opacity=1, text opacity=1, draw=lightgrey204},
log basis y={10},
tick align=outside,
tick pos=left,
width=\axisWidth,
x grid style={darkgrey176},
xlabel={Number of voters $n$},
xmin=-2.9, xmax=104.9,
xtick style={color=black},
y grid style={darkgrey176},
ylabel={$1 - \mathbb{P}(3 \text{ is CW})$},
ymin=0.000199945756126065, ymax=1.50491371787716,
ymode=log,
ymode=log
]
\addplot [semithick, steelblue31119180]
table {%
2 1.0030037040849
3 0.54596606336211
4 0.630427307301029
5 0.375914217065934
6 0.457548214164234
7 0.282404888858856
8 0.352220779486604
9 0.221384623688702
10 0.280031859918226
11 0.17799993481926
12 0.227229283699761
13 0.145543222757181
14 0.186998611649919
15 0.120438548108148
16 0.15548549702142
17 0.100562059211133
18 0.13030500660149
19 0.0845530534053881
20 0.109882829702396
21 0.0714897743482394
22 0.093128148730381
23 0.0607207508388358
24 0.0792563680796717
25 0.051770042843041
26 0.0676862714554361
27 0.0442806639576121
28 0.0579769984024744
29 0.0379790839039169
30 0.0497876460126507
31 0.0326520245579622
32 0.0428503830786369
33 0.0281307602249866
34 0.0369519802418614
35 0.0242801795681305
36 0.0319207732162323
37 0.0209909711630102
38 0.0276172437052266
39 0.0181739181622974
40 0.02392707509025
41 0.0157556534852648
42 0.0207559416231289
43 0.0136754491470288
44 0.0180255375554556
45 0.0118827523864761
46 0.0156705097876791
47 0.0103352706224549
48 0.0136360598814602
49 0.00899746610656791
50 0.0118760493552406
51 0.00783936072624078
52 0.0103514884456004
53 0.00683557857149762
54 0.00902932054934311
55 0.00596457284912256
56 0.00788143712216856
57 0.00520799719594536
58 0.00688387394988034
59 0.00455019114877953
60 0.00601615142130029
61 0.00397775661979774
62 0.00526073004543612
63 0.00347920747313564
64 0.00460255886675716
65 0.00304467822650364
66 0.00402869925851936
67 0.00266568087404519
68 0.00352801024478961
69 0.00233490109862255
70 0.00309088432071467
71 0.00204602689436732
72 0.00270902492469328
73 0.0017936039840374
74 0.00237525842242425
75 0.00157291348526937
76 0.00208337480610324
77 0.00137986812487742
78 0.00182799237711265
79 0.00121092397264078
80 0.00160444253062673
81 0.00106300520428834
82 0.00140867144322078
83 0.000933439836911192
84 0.00123715601594497
85 0.000819904731128475
86 0.00108683187302494
87 0.000720378440132164
88 0.000955031581683426
89 0.000633100719502288
90 0.000839431558069237
91 0.000556537703740399
92 0.00073800637085741
93 0.00048935191391483
94 0.000648989357908807
95 0.000430376392033983
96 0.000570838640493053
97 0.000378592366833856
98 0.000502207760384769
99 0.000333109946628292
100 0.000441920282760544
};
\addlegendentry{Theoretical equivalent}
\addplot [semithick, darkorange25512714]
table {%
2 0.673469387755102
3 0.331389698736637
4 0.50242954324587
5 0.260696050367021
6 0.389037636434554
7 0.208879919246569
8 0.308371752058135
9 0.169813258899794
10 0.248553942361773
11 0.139656794391601
12 0.202930755039498
13 0.115940040947108
14 0.167401154871251
15 0.0970035811682791
16 0.139273613761614
17 0.0816924651121835
18 0.116703380507902
19 0.0691800092325987
20 0.0983852490119846
21 0.0588609308161647
22 0.0833726824698777
23 0.0502835505815715
24 0.0709652832800871
25 0.0431052262880784
26 0.0606358406630061
27 0.0370622211402186
28 0.0519815192048608
29 0.0319488610153653
30 0.0446903732361986
31 0.0276028389130062
32 0.0385179001543008
33 0.0238946813981454
34 0.0332703362326289
35 0.020720086975191
36 0.0287925747013836
37 0.0179942786421006
38 0.0249593077052145
39 0.0156477893345395
40 0.0216684505521267
41 0.0136232798954782
42 0.0188362031636876
43 0.0118731099190063
44 0.0163933001585521
45 0.0103574636850584
46 0.0142821335727698
47 0.00904288971173228
48 0.0124545230038901
49 0.00790115166346217
50 0.0108699709479579
51 0.00690831596104236
52 0.00949428529841079
53 0.00604402107362301
54 0.00829848231787023
55 0.00529088756729734
56 0.00725790583091801
57 0.004634038192597
58 0.00635151459201877
59 0.0040607047490997
60 0.00556130158794477
61 0.00355990395657702
62 0.0048718177042264
63 0.00312216863949299
64 0.00426977860401578
65 0.00273932358379647
66 0.00374373845752718
67 0.00240429772808626
68 0.00328381776243347
69 0.00211096610315342
70 0.0028814752258236
71 0.00185401627721604
72 0.00252931576372484
73 0.00162883510234724
74 0.00222092827933651
75 0.0014314123662682
76 0.00195074812604523
77 0.00125825858843098
78 0.0017139401341012
79 0.00110633470130306
80 0.00150629884561559
81 0.000972991757656527
82 0.00132416320968065
83 0.000855919125455906
84 0.00116434347412664
85 0.000753099891005693
86 0.00102405839998976
87 0.000662772401563871
88 0.000900881239843443
89 0.000583397050764711
90 0.000792693177532167
91 0.000513627551723128
92 0.000697643136658654
93 0.000452286059657303
94 0.000614113037795017
95 0.000398341603004559
96 0.000540687727116018
97 0.000350891363029682
98 0.000476128917711205
99 0.000309144409797235
100 0.000419352583739818
};
\addlegendentry{Exact results}
\addplot [semithick, forestgreen4416044]
table {%
2 0.6706
3 0.325
4 0.4938
5 0.2559
6 0.3858
7 0.2037
8 0.3124
9 0.1707
10 0.2486
11 0.1388
12 0.2035
13 0.117
14 0.1712
15 0.0946
16 0.1391
17 0.0824
18 0.1184
19 0.0677
20 0.1016
21 0.0621
22 0.0807
23 0.0499000000000001
24 0.0689
25 0.039
26 0.0590000000000001
27 0.035
28 0.0538999999999999
29 0.0315
30 0.042
31 0.0265
32 0.0374
33 0.0229
34 0.03
35 0.0217000000000001
36 0.0286
37 0.0179
38 0.0256
39 0.016
40 0.0246
41 0.0147
42 0.0197000000000001
43 0.00960000000000005
44 0.0178
45 0.0111
46 0.0133
47 0.00980000000000003
48 0.0149
49 0.00919999999999999
50 0.011
51 0.00680000000000003
52 0.0105
53 0.00609999999999999
54 0.00839999999999996
55 0.00519999999999998
56 0.00700000000000001
57 0.00549999999999995
58 0.0071
59 0.00409999999999999
60 0.00670000000000004
61 0.00309999999999999
62 0.00570000000000004
63 0.004
64 0.00549999999999995
65 0.00319999999999998
66 0.00509999999999999
67 0.00280000000000002
68 0.00419999999999998
69 0.00219999999999998
70 0.00309999999999999
71 0.00260000000000005
72 0.00349999999999995
73 0.00219999999999998
74 0.00349999999999995
75 0.00149999999999995
76 0.00170000000000003
77 0.00160000000000005
78 0.00249999999999995
79 0.000900000000000012
80 0.00180000000000002
81 0.00160000000000005
82 0.00139999999999996
83 0.00119999999999998
84 0.00119999999999998
85 0.00119999999999998
86 0.00119999999999998
87 0.00139999999999996
88 0.001
89 0.000900000000000012
90 0.00119999999999998
91 0.000700000000000034
92 0.001
93 0.000600000000000045
94 0.000800000000000023
95 0.000600000000000045
96 0.000700000000000034
97 0.000499999999999945
98 0.000900000000000012
99 0.000299999999999967
100 0.000399999999999956
};
\addlegendentry{Monte-Carlo results}
\end{axis}

\end{tikzpicture}
\caption{Probability that candidate \(3\) fails to be the Condorcet winner as a function of \(n\) in a culture \( \mM_{3 \textnormal{ first}} \) with \(\rho = \log(2)\), shown on a semilog scale. The theoretical equivalent is based on \Cref{th_mallows-3-first}, while exact results rely on \Cref{eq_alpha-winner-coeff-extraction}. Monte Carlo simulations use 10,000 profiles per point, yielding an error of order \(10^{-2}\). When the exact results curve is not visible, it is overlapped by the Monte Carlo results curve.}\label{fig_mallows-3-first}
\end{center}
\end{figure}

\paragraph{Numerical simulations}  
\Cref{fig_mallows-3-first} illustrates these results. Since the probability that candidate~3 is the Condorcet winner tends towards~1, we now represent the complement probability on a semi-logarithmic scale to visualize the difference with the limit. The computation times are similar to those in \Cref{fig_mallows-3-last} as they involve the same culture and parameters. As in the previous cases, Monte Carlo simulations become unreliable when probabilities are small. We observe good agreement between the theoretical approximation, the exact results, and the Monte Carlo estimate when the latter is relevant.




\section{Asymptotic Expansion}\label{sec_error-terms}

We now explain how to improve our asymptotic results.
%
Both \Cref{th_subcritical} for the subcritical case and \Cref{th_critical} for the critical case are encompassed by \Cref{th_mixed-case}, which addresses the \emph{mixed case}. As detailed in \Cref{sec_supercritical-theoretical-result}, the supercritical case also reduces to the mixed case. Therefore, we focus on this scenario, involving a set $\mS$ of subcritical coordinates and a set $\mC$ of critical coordinates. Essentially, \cref{th_mixed-case} provides a constant $a_0$ such that
\[
    \probacondorcetwinner{m}
    \underset{n \to +\infty}{\sim}
    \frac{P(\vzeta)^n}
    {\prod_{j \in \mS} \zeta_j^{\lceil \beta_j n \rceil - 1}}
    \frac{1}{n^{|\mS|/2}}
    a_0.
\]
\cref{th:asympt_expansion} from \Cref{sec_asymptotic_expansion} extends this result, establishing the existence of coefficients $(a_{k, n})_{k, n \geq 1}$ such that, for any $r \geq 0$, we have
\[
    \probacondorcetwinner{m} =
    \frac{P(\vzeta)^n}
    {\prod_{j \in \mS} \zeta_j^{\lceil \beta_j n \rceil - 1}}
    \frac{1}{n^{|\mS|/2}}
    \left(
    a_0 + a_{1, n} n^{-1/2} + \cdots + a_{r-1, n} n^{-(r-1)/2} + \bigO(n^{-r/2}) 
    \right).
\]
In general, the coefficients $a_{k, n}$ may depend on $n$, but only through the bounded vector $\lceil \vbeta n \rceil - \vone - \vbeta n$. For example, in the case of the Condorcet winner, they depend only on the parity of~$n$.

Such an approximation is called an \emph{asymptotic expansion}.  
In the subcritical case, it is well known \cite[Section~VIII.3]{flajolet2009analytic, analytic2013pemantle}[Chapter~5] and proven in \Cref{th:asympt_expansion}, that $a_{k,n} = 0$ for odd $k$, leading to an expansion in powers of $n^{-1}$ instead of $n^{-1/2}$. However, this does not hold in general, as the following result illustrates.  

For the Impartial Culture, \Cref{th_impartial_culture} gives the limiting probability but not the convergence rate, which we now address by computing \( a_{1,n} \).

\begin{theorem}
\label{th_IC_error_term}
Under Impartial Culture with $m$ candidates, the probability that candidate $m$ is the Condorcet winner has asymptotic behavior
\[
    \probacondorcetwinner{m} =
    a_0 + a_{1,n} n^{-1/2} + \bigO(n^{-1}),
\]
where $a_0$ and $a_{1,n}$ are given by
\begin{align*}
    a_0 &=
    \frac{1}{\sqrt{(2 \pi)^{m-1} \det(\mH_K(\vzero))}}
    \int_{(0,+\infty)^{m-1}}
    e^{-\vu \mH_K(\vzero)^{-1} \vu / 2}
    d \vu,
\\
    a_{1,n} &=
    \begin{cases}
        0 & \text{if $n$ is odd},
        \\
        - \frac{6}{m+1}
        \frac{1}{ \sqrt{(2 \pi)^{m-1} \det(\mH_K(\vzero))} }
        \int_{(0, +\infty)^{m-1}}
        \vone^T \vu
        e^{- \vu^T \mH_K(\vzero)^{-1} \vu / 2}
        d \vu.
    & \text{if $n$ is even,}
    \end{cases}
\end{align*}
and $\mH_K(\vzero)$, its inverse and determinant are given in \Cref{eq:mHKIC}.
\end{theorem}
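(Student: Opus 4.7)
The plan is to specialize the general asymptotic expansion theorem (\Cref{th:asympt_expansion}) to the purely critical Impartial Culture setting, then pin down $a_{1,n}$ using the symmetries of IC. I would first establish the central symmetry: the involution that reverses every voter's ranking preserves the IC measure, so $p_{\mX} = p_{\mA \setminus \mX}$ for every $\mX \subseteq \mA$. This translates to the functional equation $P(\vx) = \big(\prod_j x_j\big) P(1/\vx)$, equivalently $K(\vt) = \vone^{\transp} \vt + K(-\vt)$, so that $K(\vt) - \tfrac{1}{2} \vone^{\transp} \vt$ is an even function of $\vt$. Differentiating yields $\nabla K(\vzero) = \vbeta = \vone/2$, hence $\vtau = \vzero$ and the problem is purely critical ($\vzeta = \vone$); the Hessian $\mH_K(\vzero)$ matches \Cref{eq:mHKIC}, and all partial derivatives of $K$ at $\vzero$ of odd order $\geq 3$ vanish.

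Applying \Cref{th:asympt_expansion} with $\mS = \emptyset$ and $\mC = \mA$ gives $\probacondorcetwinner{m} = a_0 + a_{1,n} n^{-1/2} + \bigO(n^{-1})$, where $a_0$ matches \Cref{th_impartial_culture} and $a_{1,n}$ depends on $n$ only through the lattice offset $\vect{\delta}_n := \lceil \vbeta n \rceil - \vone - \vbeta n$, equal to $-\vone/2$ for odd $n$ and $-\vone$ for even $n$. After rescaling $\vtheta = \vs/\sqrt{n}$ in \Cref{eq_alpha-winner-as-cauchy-integral} (with the contour deformation of \Cref{th_large-powers-singularity}), the exponent near the saddle point reads $-\tfrac{1}{2} \vs^{\transp} \mH_K(\vzero) \vs - i \vect{\delta}_n^{\transp} \vs / \sqrt{n} + \bigO(n^{-1})$, with no third-order Taylor term thanks to the parity established above. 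Equivalently, this says that the classical multivariate continuity correction is exact at order $n^{-1/2}$:
\[
    \probacondorcetwinner{m} = \Phi_{\mH_K(\vzero)}\!\bigl( n^{-1/2}(\vect{\delta}_n + \tfrac{1}{2} \vone) \bigr) + \bigO(n^{-1}),
\]
where $\Phi_{\Sigma}$ denotes the CDF of $\mN(\vzero, \Sigma)$. For odd $n$ the shift $\vect{\delta}_n + \vone/2$ vanishes, yielding $a_{1,n} = 0$; for even $n$ the shift is $-\vone/(2\sqrt{n})$, and a first-order Taylor expansion yields $a_{1,n} = -\tfrac{1}{2} \sum_{j \in \mA} \partial_j \Phi_{\mH_K(\vzero)}(\vzero)$, a sum of Gaussian boundary densities over the $m-1$ coordinate hyperplanes bordering the positive orthant.

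To match the closed form in the theorem, I would rewrite this boundary sum as a bulk integral. The identity $\sum_j \partial_{u_j} e^{-\vu^{\transp} \mH_K(\vzero)^{-1} \vu /2} = -(\vone^{\transp} \mH_K(\vzero)^{-1} \vu) \, e^{-\vu^{\transp} \mH_K(\vzero)^{-1} \vu /2}$ combined with the divergence theorem on $(0, +\infty)^{m-1}$ converts $\sum_j \partial_j \Phi_{\mH_K(\vzero)}(\vzero)$ into
\[
\frac{1}{\sqrt{(2\pi)^{m-1} \det \mH_K(\vzero)}} \int_{(0,+\infty)^{m-1}} (\vone^{\transp} \mH_K(\vzero)^{-1} \vu) \, e^{-\vu^{\transp} \mH_K(\vzero)^{-1} \vu / 2} d\vu.
\]
The explicit value $\mH_K(\vzero)^{-1} \vone = \tfrac{12}{m+1} \vone$ from \Cref{eq:mHKIC} reduces $\vone^{\transp} \mH_K(\vzero)^{-1} \vu$ to $\tfrac{12}{m+1} \vone^{\transp} \vu$, and multiplying by $-\tfrac{1}{2}$ produces the claimed prefactor $-\tfrac{6}{m+1}$.

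The principal obstacle is to justify, entirely within the analytic combinatorics framework of \Cref{th:asympt_expansion}, that after the contour deformation around $\vx = \vone$ the only surviving $n^{-1/2}$ contribution is the linear-in-$\vect{\delta}_n$ one. The vanishing of the would-be third-order contribution relies crucially on the functional equation $K(\vt) = \vone^{\transp} \vt + K(-\vt)$; one must carefully verify that this parity cancellation is preserved both by the deformation used to bypass the critical singularities and by the non-Gaussian factor $1/\prod_j (1 - x_j)$ in the integrand, so that nothing beyond the stated offset term survives at order $n^{-1/2}$.
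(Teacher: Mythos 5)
Your proposal is correct and arrives at the right answer, but it follows a noticeably different route from the paper's. The paper's proof in \Cref{sec_constants_IC} computes the second and third central moments of $\vX$ by direct counting (finding $\mean(X_j X_k X_\ell) = 1/4$ in every configuration, hence $\phi_3 = 0$), and then reads $a_{1,n}$ off Lemma~\ref{th:compute_a1} with $A \equiv 1$, $\vb = -\vone/2$, $\phi_3^\star = 0$, $\mH = \mH_K(\vzero)$, and $\vkappa_n - \vlambda n = (\lceil n/2 \rceil - 1 - n/2)\vone$; the factor $-6/(m+1)$ drops out from $\mH^{-1}\vone = \tfrac{12}{m+1}\vone$. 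You instead (a) derive $\phi_3 = 0$ from the ranking-reversal involution (the functional equation $P(\vx) = (\prod_j x_j)\,P(\vone/\vx)$, equivalently the evenness of $K(\vt) - \tfrac{1}{2}\vone^\transp\vt$), which is slicker than the paper's moment count and explains \emph{why} the skewness vanishes; and (b) derive $a_{1,n}$ via a multivariate continuity-correction interpretation, a first-order Taylor expansion of the orthant CDF $\Phi_{\mH_K(\vzero)}$, and a divergence-theorem conversion of the resulting boundary sum $\sum_j \partial_j \Phi_{\mH_K(\vzero)}(\vzero)$ into the bulk integral $\int (\vone^\transp \mH^{-1}\vu)\, e^{-\vu^\transp\mH^{-1}\vu/2}\,d\vu$ (normalized). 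I checked that this conversion and the final constant agree with the paper's direct substitution into Lemma~\ref{th:compute_a1}.

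The one thing you should be careful to make explicit rather than defer: your closing paragraph flags as ``the principal obstacle'' the justification that, after deforming around the critical singularities and accounting for $1/\prod_j(1 - x_j)$, the only surviving $n^{-1/2}$ contribution is the one linear in the lattice offset. This is not an additional thing to verify --- it is precisely what Lemma~\ref{th:compute_a1} already establishes. Its term $\vb$ has coordinates $-1/2$ on the critical block (from the Taylor expansion of $-it_j y/(1-e^{it_j y})$), and this is exactly the ``$+\tfrac{1}{2}\vone$'' of your continuity correction; the $\phi_3^\star(\vu)$ term covers the third-cumulant contribution, which you kill with the symmetry argument. So instead of treating the continuity-correction identity $\probacondorcetwinner{m} = \Phi_{\mH_K(\vzero)}\bigl(n^{-1/2}(\vect{\delta}_n + \tfrac{1}{2}\vone)\bigr) + \bigO(n^{-1})$ as something to be proved separately, cite Lemma~\ref{th:compute_a1} directly and note that your CDF formula is a repackaging of its output in the special case $\phi_3 = 0$, $A \equiv 1$, $\mC = \mA$. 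With that substitution your argument is complete and is a legitimate alternative exposition; the divergence-theorem step is a tidy way of deriving the bulk-integral form from the boundary-density form, though in the paper that form already appears directly in the lemma's statement.
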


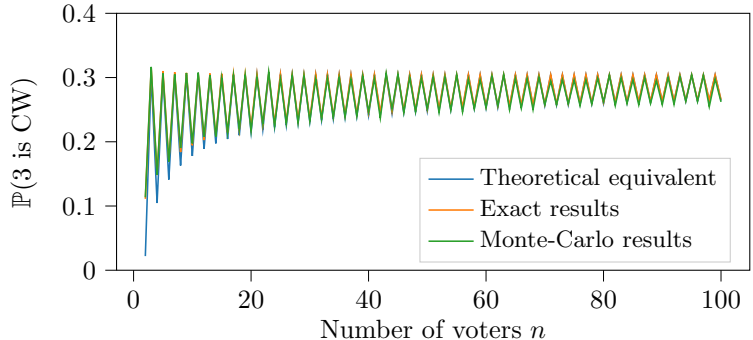
\begin{figure}
\begin{center}
\begin{tikzpicture}

\definecolor{darkgrey176}{RGB}{176,176,176}
\definecolor{darkorange25512714}{RGB}{255,127,14}
\definecolor{forestgreen4416044}{RGB}{44,160,44}
\definecolor{lightgrey204}{RGB}{204,204,204}
\definecolor{steelblue31119180}{RGB}{31,119,180}

\begin{axis}[
height=\axisHeight,
legend cell align={left},
legend style={font=\legendFont, 
  fill opacity=1,
  draw opacity=1,
  text opacity=1,
  at={(0.97,0.03)},
  anchor=south east,
  draw=lightgrey204
},
tick align=outside,
tick pos=left,
width=\axisWidth,
x grid style={darkgrey176},
xlabel={Number of voters $n$},
xmin=-2.9, xmax=104.9,
xtick style={color=black},
y grid style={darkgrey176},
ylabel={$\mathbb{P}(3 \text{ is CW})$},
ymin=0, ymax=0.4,
ytick={0.0, 0.1, 0.2, 0.3, 0.4},
ytick style={color=black}
]
\addplot [semithick, steelblue31119180]
table {%
2 0.021991932217439
3 0.304086723984094
4 0.104615583788485
5 0.304086723984094
6 0.141219220020624
7 0.304086723984094
8 0.163039328100767
9 0.304086723984094
10 0.177930097886316
11 0.304086723984094
12 0.188922007496598
13 0.304086723984094
14 0.197464914675363
15 0.304086723984094
16 0.20435115388629
17 0.304086723984094
18 0.210055126728543
19 0.304086723984094
20 0.21488051817874
21 0.304086723984094
22 0.219031944020146
23 0.304086723984094
24 0.222652972002359
25 0.304086723984094
26 0.225847705810555
27 0.304086723984094
28 0.228693719599511
29 0.304086723984094
30 0.231250161946487
31 0.304086723984094
32 0.23356302604243
33 0.304086723984094
34 0.235668687343474
35 0.304086723984094
36 0.237596343918891
37 0.304086723984094
38 0.239369740309884
39 0.304086723984094
40 0.241008410935205
41 0.304086723984094
42 0.242528593678213
43 0.304086723984094
44 0.243943912299257
45 0.304086723984094
46 0.245265893770318
47 0.304086723984094
48 0.246504365740346
49 0.304086723984094
50 0.247667765630763
51 0.304086723984094
52 0.248763383680207
53 0.304086723984094
54 0.249797555996271
55 0.304086723984094
56 0.250775819329729
57 0.304086723984094
58 0.251703036229
59 0.304086723984094
60 0.252583497048988
61 0.304086723984094
62 0.253421003708664
63 0.304086723984094
64 0.254218938935192
65 0.304086723984094
66 0.254980323876044
67 0.304086723984094
68 0.255707866320042
69 0.304086723984094
70 0.256404001284426
71 0.304086723984094
72 0.257070925356318
73 0.304086723984094
74 0.257710625893541
75 0.304086723984094
76 0.258324905970121
77 0.304086723984094
78 0.258915405780469
79 0.304086723984094
80 0.259483621081417
81 0.304086723984094
82 0.260030919144713
83 0.304086723984094
84 0.260558552607641
85 0.304086723984094
86 0.261067671541406
87 0.304086723984094
88 0.26155933400212
89 0.304086723984094
90 0.262034515284835
91 0.304086723984094
92 0.26249411606491
93 0.304086723984094
94 0.262938969581451
95 0.304086723984094
96 0.263369847993227
97 0.304086723984094
98 0.263787468017429
99 0.304086723984094
100 0.264192495944973
};
\addlegendentry{Theoretical equivalent}
\addplot [semithick, darkorange25512714]
table {%
2 0.111111111111111
3 0.314814814814815
4 0.148148148148148
5 0.310185185185185
6 0.169581618655693
7 0.308327617741198
8 0.183965858862978
9 0.30734128118173
10 0.194465480998831
11 0.306728641149723
12 0.202566406713069
13 0.306310528605833
14 0.209065535158771
15 0.306006806614097
16 0.214432326840295
17 0.305776127352619
18 0.218963456454917
19 0.305594952029487
20 0.222856822586211
21 0.305448882480342
22 0.226250199713895
23 0.305328613342689
24 0.229242879851164
25 0.305227862088046
26 0.231908492866705
27 0.305142233320958
28 0.234302962190104
29 0.305068559518588
30 0.236469637388801
31 0.305004500170059
32 0.238442716436943
33 0.304948288365469
34 0.240249593171544
35 0.304898565199271
36 0.24191250760723
37 0.304854268418318
38 0.243449731441944
39 0.304814555654353
40 0.244876436057957
41 0.304778750324587
42 0.246205338932428
43 0.304746302756819
44 0.247447192404867
45 0.304716761764114
46 0.248611158347832
47 0.304689753532354
48 0.249705098964687
49 0.304664965715857
50 0.250735805056098
51 0.30464213530132
52 0.251709177062998
53 0.304621039237945
54 0.252630370023913
55 0.304601487125072
56 0.253503910657579
57 0.304583315448787
58 0.254333792697468
59 0.304566382997766
60 0.255123555100848
61 0.304550567186113
62 0.25587634665643
63 0.304535761080526
64 0.25659497970299
65 0.304521870979254
66 0.257281975065402
67 0.304508814426946
68 0.257939599857595
69 0.304496518576493
70 0.258569899454288
71 0.30448491882911
72 0.25917472466649
73 0.304473957699034
74 0.259755754949281
75 0.304463583860719
76 0.260314518309419
77 0.304453751345188
78 0.260852408453892
79 0.304444418859005
80 0.26137069962064
81 0.304435549204567
82 0.261870559453116
83 0.304427108784563
84 0.262353060216757
85 0.304419067176684
86 0.262819188604172
87 0.304411396767217
88 0.263269854334393
89 0.304404072434249
90 0.263705897717798
91 0.304397071272792
92 0.264128096330704
93 0.304390372355518
94 0.264537170920994
95 0.304383956523825
96 0.264933790647399
97 0.304377806204839
98 0.2653185777396
99 0.304371905250683
100 0.265692111653369
};
\addlegendentry{Exact results}
\addplot [semithick, forestgreen4416044]
table {%
2 0.1136
3 0.3166
4 0.1488
5 0.3064
6 0.1692
7 0.3052
8 0.1903
9 0.3071
10 0.1975
11 0.3077
12 0.208
13 0.3039
14 0.2082
15 0.2997
16 0.2147
17 0.3033
18 0.22
19 0.3039
20 0.2173
21 0.3006
22 0.228
23 0.3086
24 0.225
25 0.2976
26 0.229
27 0.3033
28 0.2299
29 0.3027
30 0.2324
31 0.2977
32 0.2358
33 0.299
34 0.2386
35 0.3003
36 0.2392
37 0.2976
38 0.2403
39 0.3026
40 0.2468
41 0.2984
42 0.2369
43 0.3066
44 0.2511
45 0.3051
46 0.2456
47 0.3004
48 0.2472
49 0.2991
50 0.2515
51 0.301
52 0.2458
53 0.3011
54 0.2493
55 0.304
56 0.2465
57 0.2975
58 0.2515
59 0.2964
60 0.2534
61 0.3027
62 0.2562
63 0.3044
64 0.2499
65 0.2987
66 0.2523
67 0.2983
68 0.2501
69 0.3007
70 0.2578
71 0.2943
72 0.2578
73 0.2967
74 0.2592
75 0.2959
76 0.2564
77 0.2965
78 0.2587
79 0.2989
80 0.2625
81 0.3036
82 0.2564
83 0.2994
84 0.2597
85 0.2964
86 0.2575
87 0.2999
88 0.2597
89 0.295
90 0.2578
91 0.3003
92 0.2583
93 0.302
94 0.2632
95 0.304
96 0.2628
97 0.3033
98 0.2547
99 0.2978
100 0.2621
};
\addlegendentry{Monte-Carlo results}
\end{axis}

\end{tikzpicture}
\caption{Probability that a given candidate is the Condorcet winner as a function of \(n\) in the Impartial Culture with three candidates. The theoretical equivalent is based on \Cref{th_IC_error_term}, while exact results rely on \Cref{eq_alpha-winner-coeff-extraction}. Monte Carlo simulations use 10,000 profiles per point, yielding an error of order \(10^{-2}\). Not all curves being visible mean that they overlap.}\label{fig_ic-3-with-error-term}
\end{center}
\end{figure}

\paragraph{Numerical simulations}  
In \Cref{fig_ic-3}, we observed that the approximation by the limit value was highly accurate for odd values of \(n\). \Cref{th_IC_error_term} provides an explanation for this: the term of the asymptotic expansion in \( n^{-1/2} \) vanishes in this case. In contrast, for even \(n\), the term is non-zero and negative, which is consistent with the previous observation that the exact values are significantly lower than the limit in these cases. To illustrate this, \Cref{fig_ic-3-with-error-term} shows the same curves as \Cref{fig_ic-3} for the Monte Carlo simulations and exact values. However, this time, the theoretical approximation includes the term in \( n^{-1/2} \). As before, this approximation is computationally inexpensive: for instance, it takes only 55~\(\mu\)s for \(n = 100\). Now, the three curves become nearly indistinguishable visually: the addition of this term in \( n^{-1/2} \) significantly improves the accuracy of the approximation, the error term being $\bigO(n^{-1})$.


\section{Conclusion}\label{sec_conclusion}

\paragraph{Summary of the Contributions}
In this paper, we present a method for calculating the probability that a candidate is an $\valpha$-winner under the GIC model, assuming each ranking has a strictly positive probability. Our approach, based on analytic combinatorics, first computes the characteristic polynomial, the cumulant generating function, and the saddle point. The asymptotic analysis then splits into cases based on whether each coordinate of the saddle point is subcritical, critical, or supercritical. In all cases, we derive the limiting probability and rate of convergence. Additionally, our method computes higher-order terms in the asymptotic expansion. We apply this approach to the Impartial Culture and Mallows models, providing explicit formulas supported by numerical simulations.

\paragraph{Future Work}
A natural direction for future work is to investigate cases where the saddle point has null or infinite coordinates, which may arise in non-generic cultures. Additionally, it would be valuable to perform a finer analysis of the complexity of the algorithm presented in \Cref{sec_supercritical-case} for handling supercritical coordinates, particularly to identify conditions under which it runs in polynomial time. Another avenue is to examine the limiting behavior of the probability as the number of candidates \( m \) tends to infinity. 
Finally, it would be insightful to extend our method to other events, such as the transitivity of the majority relation, different kinds of monotonicity failures, or the manipulability, \ie the susceptibility to strategic voting.

\bibliographystyle{apalike}
\bibliography{biblio}

\newpage

\appendix

\section{Multivariate Coefficient Extraction of Large Powers}\label{sec_appendix-Elie}

This Appendix presents the main technical contributions of the paper in analytic combinatorics. We assume here that the reader is familiar with the main concepts of this field. For an introduction, less familiar readers may refer to \cite{flajolet2009analytic} or \Cref{sec_preliminaries,sec_step-by-step} of this paper.

The \emph{large power theorem}
\cite[Section~VIII.8]{flajolet2009analytic}
provides sufficient conditions
to derive the asymptotics
of coefficients extraction of the form
\[
    [z^{\lambda n}]
    A(z)
    B(z)^n
\]
as $n$ tends to infinity.
%
Our goal here is to extend this result to the multivariate setting
\[
    [\vz^{\vlambda n}] A(\vz) B(\vz)^n.
\]
We also cover variants,
including a case where $A(\vz)$ has a singularity
at the \emph{saddle point} (defined below).

To this end, we rely on the following multivariate version of the Laplace method, presented by \cite[Theorem 5.1.2, p. 98]{analytic2013pemantle}.

\begin{theorem}[Multivariate Laplace method]
\label{th:multivariate_laplace_method}
Consider a compact neighborhood $C \subset \reals^d$ of $\vzero$
and two functions $A(\vx)$ and $\phi(\vx)$
analytic on $C$.
Suppose that the real part of $\phi(\vx)$ is strictly positive
except at the origin,
that its gradient vanishes at the origin
and that its Hessian matrix $\mH$
is non-singular there.
Assume $A(\vzero) \neq 0$, then
\[
    \int_C
    A(\vx)
    e^{- n \phi(\vx)}
    d \vx
    \sim
    \left( \frac{2 \pi}{n} \right)^{d/2}
    \frac{A(\vzero)}
        {\sqrt{\det(\mH)}}
\]
and the choice of sign of the square root is defined by taking the product
of the principal square roots of the eigenvalues of $\mH$.
\end{theorem}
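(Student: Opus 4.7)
The plan is to follow the standard three-step strategy of the Laplace method: localize the integral near the origin, perform a local Gaussian analysis via Taylor expansion, and reduce to an explicit multivariate Gaussian integral. First, I would establish \emph{localization}: since $C$ is compact and $\Real(\phi)$ is strictly positive on $C \setminus \{\vzero\}$ and continuous, for every $\varepsilon > 0$ there exists $\eta(\varepsilon) > 0$ such that $\Real(\phi(\vx)) \geq \eta(\varepsilon)$ whenever $\vx \in C$ and $\norm{\vx} \geq \varepsilon$. The contribution from this outer region is bounded by $(\sup_C \abs{A}) \cdot \operatorname{vol}(C) \cdot e^{-n \eta(\varepsilon)}$, which decays exponentially in $n$ and is therefore negligible compared to the target, which is only polynomially small in $n$. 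Thus it suffices to study the integral over a small ball $\norm{\vx} \leq \varepsilon$.

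Inside this ball, I would use analyticity of $\phi$ and $A$ to write
\[
    \phi(\vx) = \tfrac{1}{2} \vx^{\transp} \mH \vx + R(\vx),
    \qquad
    A(\vx) = A(\vzero) + S(\vx),
\]
where the vanishing gradient at $\vzero$ gives $R(\vx) = \bigO(\norm{\vx}^3)$ and $S(\vx) = \bigO(\norm{\vx})$. The key step is then the rescaling $\vx = \vy / \sqrt{n}$, whose Jacobian contributes a factor $n^{-d/2}$, transforming the localized integral into
\[
    \frac{1}{n^{d/2}}
    \int_{\norm{\vy} \leq \varepsilon \sqrt{n}}
    \big(A(\vzero) + S(\vy / \sqrt{n})\big)
    \,
    e^{-\frac{1}{2} \vy^{\transp} \mH \vy - n R(\vy / \sqrt{n})}
    \, d \vy.
\]
Here $n R(\vy / \sqrt{n}) = \bigO(\norm{\vy}^3 / \sqrt{n})$ tends pointwise to zero, and $S(\vy / \sqrt{n}) \to 0$.

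Applying dominated convergence, which requires a uniform Gaussian majorant of the integrand, the limit evaluates to $A(\vzero) \int_{\reals^d} e^{-\frac{1}{2} \vy^{\transp} \mH \vy} d \vy$. This last integral is a classical multivariate Gaussian: simultaneous diagonalization of $\mH$ (after noting that the positivity assumption on $\Real(\phi)$ near $\vzero$ forces $\Real(\mH)$ to be positive definite, ensuring convergence) yields a product of univariate integrals $\int_{\reals} e^{-\lambda_j y^2 / 2} d y = \sqrt{2 \pi / \lambda_j}$, where the branch of each square root is the principal one. Multiplying gives exactly $(2 \pi)^{d/2} / \sqrt{\det(\mH)}$ with the specified sign convention.

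The main obstacles will be, first, constructing a uniform dominating function valid on the expanding domains $\norm{\vy} \leq \varepsilon \sqrt{n}$ to legitimately pass to the limit under the integral sign — this requires choosing $\varepsilon$ small enough that $\Real(R(\vx)) \leq \tfrac{1}{4} \Real(\vx^{\transp} \mH \vx)$ uniformly on $\norm{\vx} \leq \varepsilon$, so that $\Real(n \phi(\vy / \sqrt{n})) \geq \tfrac{1}{4} \vy^{\transp} \Real(\mH) \vy$; and second, rigorously justifying the branch choice of $\sqrt{\det(\mH)}$ when $\mH$ is complex-valued, which requires a deformation argument connecting $\mH$ to a positive-definite matrix while maintaining non-singularity along the path.
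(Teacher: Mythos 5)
First, a bibliographic observation: the paper does not prove this theorem at all. It is cited verbatim from Pemantle and Wilson \cite[Theorem 5.1.2, p.~98]{analytic2013pemantle}, so there is no "paper proof" to compare against. Your task was therefore to reconstruct a proof of a classical result, which you do via the standard three steps (localization, rescaling, dominated convergence to a Gaussian). That outline is correct and, importantly, it is fully rigorous in the case that actually arises in this paper — in \Cref{th:multivariate_large_powers,th_large-powers-limit,th_large-powers-singularity} the Hessian $\mH$ is the (real, symmetric, positive-definite) Hessian of the cumulant generating function $K$, so $\Real(\mH) = \mH \succ 0$ and your Gaussian-majorant argument goes through exactly as you describe.

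However, for the theorem as stated your argument has a genuine gap, located at the sentence "the positivity assumption on $\Real(\phi)$ near $\vzero$ forces $\Real(\mH)$ to be positive definite." This implication is false. The hypotheses only give $\Real(\mH) \succeq 0$; strictness can be spoiled by higher-order terms while $\mH$ itself remains non-singular. A concrete witness in $d=2$ is
\[
    \phi(x_1, x_2) = x_1^2 + i\, x_1 x_2 + x_2^4,
    \qquad
    \mH = \begin{pmatrix} 2 & i \\ i & 0 \end{pmatrix},
\]
for which $\Real(\phi) = x_1^2 + x_2^4 > 0$ away from the origin, the gradient vanishes at $\vzero$, and $\det(\mH) = 1 \neq 0$, so all hypotheses hold; yet $\Real(\mH) = \diag(2, 0)$ is singular. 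After your rescaling $\vx = \vy/\sqrt{n}$, the modulus of the integrand converges pointwise to $e^{-y_1^2}$, which is not integrable over $\reals^2$, so there is no integrable majorant and dominated convergence cannot be invoked. The conclusion of the theorem is nonetheless correct for this $\phi$ (complete the square in $x_1$ to pick up $e^{-n x_2^2/4}$ and then do the remaining one-dimensional Laplace integral in $x_2$; both factors give $\sqrt{\pi/n}$ up to constants and the product matches $(2\pi/n)/\sqrt{\det \mH}$), which shows that the missing decay is supplied by cancellation from the oscillatory factor $e^{-i n x_1 x_2}$, not by modulus decay. A correct proof in the general complex-Hessian case therefore needs either a contour deformation that restores genuine Gaussian decay, or an iterated one-variable saddle-point / stationary-phase argument as in Pemantle and Wilson's treatment. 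Your plan should either add the hypothesis $\Real(\mH) \succ 0$ (which suffices for every use in this paper) or replace the dominated-convergence step by such a deformation argument.
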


Variants of this theorem exist for the case where $A(\vzero) = 0$
and where the Hessian of $\phi(\vx)$ is singular at $\vzero$.

If the assumptions of the theorem are satisfied,
except $\phi(\vzero) \neq 0$,
then we rewrite
\[
\int_C A(\vx) e^{- n \phi(\vx)} d \vx =
e^{- n \phi(\vzero)}
\int_C A(\vx) e^{- n (\phi(\vx) - \phi(\vzero))} d \vx.
\]
and the theorem is appicable with $\phi(\vx)$ replaced by $\phi(\vx) - \phi(\vzero)$.

If the point where the gradient of $\phi(\vx)$ vanishes
is some other point than $\vzero$,
as long as the Hessian of $\phi$ is non-singular at this point
and $A$ does not vanish there,
the theorem is applicable after a translation
sending this point to $\vzero$.

\paragraph{Road map.}
To reduce the multivariate coefficient extraction
of large powers to a Laplace integral,
we first establish a few lemmas in \cref{sec_appendix-Elie-preliminaries}
ensuring that the assumptions of the Laplace method
are satisfied.
We will derive the multivariate large powers theorem
in \cref{sec_appendix-large-powers}.
It is used in this article for the subcritical case from \cref{sec_subcritical-case}.
Then we consider in \cref{sec_appendix-Elie-singularity}
the case where the saddle point meets a singularity,
which is applied both in \cref{sec_critical-case}
and \cref{sec_mixed-case}.

\paragraph{Notation.}
We use bold letters to denote vectors, \eg $\vx = (x_1, \ldots, x_d)$ with $d \in \mathbb{N}_{>0}$.
Given two vectors $\vx$ and $\vy$ of the same dimension $d$ and a scalar $z$, we define the notations
\begin{align*}
    \vx^{\vy}
&:=
    {x_1}^{y_1} \cdots {x_d}^{y_d},
\\
    \vx z^{\vy}
&:=
    (x_1 z^{y_1}, \ldots, x_d z^{y_d}),
\\
    \log(\vx)
&:=
    (\log(x_1), \ldots, \log(x_d)).
\end{align*}
In particular, we write $\vzeta e^{i \vtheta}$
for the vector $(\zeta_1 e^{i \theta_1}, \ldots, \zeta_d e^{i \theta_d})$.
The \emph{support} of a multivariate power series $B(\vz)$
is the set
\[
    S_B :=
    \{ \vn,\ [\vz^{\vn}] B(\vz) \neq 0 \}
\]
and we define
\[
    \Delta(S_B) :=
    \{ \vn - \vm,\ \vn, \vm \in S_B \}.
\]
The \emph{cumulant generating function}
associated to the generating function $B(\vz)$
is
\[
    K(\vt) := \log\big(B(e^{\vt})\big)
    = \log\big(B(e^{t_1}, \ldots, e^{t_d})\big).
\]
The gradient of a function $B$ at $\vx$ is denoted by $\nabla_B(\vx)$.

We write $r + q \integers_{\geq 0}$ for the set $\{r + q \ell,\ \ell \in \integers_{\geq 0}\}$.

A \emph{polydisc} is a Cartesian product of discs.
A \emph{torus} is a Cartesian product of circles.

\subsection{Rank and Periodicity of a Multivariate Generating Function}\label{sec_appendix-Elie-preliminaries}

This section presents general results on multivariate formal power series,
defining their \emph{rank} and \emph{period}.
We will use them to prove the existence and uniqueness of the saddle point,
defined in the following section.
In particular, we prove a multivariate version
of the Daffodil lemma \cite[Lemma~IV.1, page~266]{flajolet2009analytic}.

\begin{lemma}
\label{th:smithnormalform}
The kernel of any integer matrix
has a basis composed of integer vectors.
\end{lemma}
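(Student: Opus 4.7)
The plan is to reduce the statement to a standard structural result for integer matrices, namely the Smith Normal Form theorem. Concretely, for any integer matrix $M$ of size $p \times q$, there exist unimodular integer matrices $U \in GL_p(\integers)$ and $V \in GL_q(\integers)$ (so with integer entries and determinant $\pm 1$) and an integer diagonal matrix $D$ such that $M = U D V$. This is the only nontrivial input needed.

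From this decomposition, the key observation is that $\ker(M) = \ker(D V)$ since $U$ is invertible over $\integers$ (hence injective). I would then perform the integer change of variable $\vw := V \vv$, which is a bijection of $\integers^q$ onto itself because $V$ is unimodular. In these new coordinates, $M \vv = 0$ reduces to $D \vw = 0$, i.e.\ $w_i = 0$ for every index $i$ with $D_{ii} \neq 0$. The kernel of a diagonal integer matrix is visibly spanned by the standard basis vectors $\ve_i$ corresponding to zero diagonal entries, which are integer vectors.

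To finish, I would transport this basis back via $\vv = V^{-1} \vw$. Since $V$ is unimodular, $V^{-1}$ has integer entries, so the family $\{V^{-1} \ve_i : D_{ii} = 0\}$ consists of integer vectors. It is linearly independent because $V^{-1}$ is invertible, and it spans $\ker(M)$ over $\integers$ because every element of $\ker(M)$ corresponds to an element of $\ker(D)$ via the integer bijection $\vw = V \vv$. This gives the desired integer basis.

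There is no real obstacle once Smith Normal Form is accepted; the entire difficulty is concentrated in that classical theorem, whose proof proceeds by iteratively reducing the matrix to diagonal form via integer row and column operations (swaps, sign changes, and additions of integer multiples of one row/column to another), a procedure that is textbook material and thus need not be reproduced here.
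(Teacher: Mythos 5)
Your proof is correct and follows exactly the same route as the paper: apply the Smith Normal Form $M = UDV$, note that the kernel of the diagonal factor $D$ is spanned by the standard basis vectors $\ve_j$ with $D_{jj}=0$, and pull these back by the unimodular matrix $V^{-1}$ to obtain an integer basis of $\ker M$. The only cosmetic difference is that you spell out the bijection $\vw = V\vv$ between $\ker M$ and $\ker D$, whereas the paper argues by matching the dimension of $\ker M$ with the count of zero diagonal entries; both close the spanning argument in the same way.
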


\begin{proof}
Let $M$ denote our matrix and let its Smith normal form be
\[
    M = U D V,
\]
where $U$ and $V$ are unimodular integer matrices
and $D$ is an integer diagonal matrix.
The dimension of the kernel of $M$
is equal to the number of zero elements
on the diagonal of $D$.
Consider such an element of index $j$
and let $\ve^{(j)}$, denote the vector with a $1$
in position $j$ and $0$'s everywhere else.
Set $\vp^{(j)} = V^{-1} \ve^{(j)}$.
Since $V$ is unimodular,
$\vp^{(j)}$ has integer coefficients and is nonzero.
We have
\[
    M \vp^{(j)} = U D V \vp^{(j)} = U D \ve^{(j)} = \vzero,
\]
so $\vp^{(j)}$ belongs to the kernel of $M$.
The invertibility of $V$ ensures
that the vectors $\vp^{(j)}$ are linearly independent,
hence they form a basis of the kernel.
\end{proof}

\begin{definition}
\label{def:rank}
We define the \emph{rank}
of a formal power series $B(\vz)$
as the dimension of the real vector space generated by $\Delta(S_B)$.
We say that $B(\vz)$ has \emph{full rank}
if its rank is equal to its number of variables.
\end{definition}

In particular, a formal power series
on at least one variable
and that has full rank cannot be a monomial.

\begin{lemma}
\label{th:rank_characterization}
If the formal power series $B(z_1, \ldots, z_d)$
has rank $k$,
then there exist $d-k$ integer vectors
$\vp^{(1)}, \ldots, \vp^{(d-k)}$
generating a real vector space of dimension $d-k$
and integers $r_1, \ldots, r_{d-k}$
such that for all $j \in [1, d-k]$
\[
    B(\vz y^{\vp^{(j)}})
    =
    y^{r_j}
    B(\vz).
\]
Reciprocally, if such vectors $\vp^{(j)}$
and integers $r_j$ exist,
then $B(\vz)$ has rank at most $k$.
\end{lemma}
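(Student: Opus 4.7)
The plan is to reduce the functional identity $B(\vz y^{\vp}) = y^r B(\vz)$ to a purely linear condition on the support $S_B$, and then read off both implications as statements about the orthogonal complement of $\mathrm{span}_{\reals}(\Delta(S_B))$. Writing $b_{\vn} := [\vz^{\vn}] B(\vz)$, the substitution $\vz \mapsto \vz y^{\vp}$ turns the monomial $b_{\vn} \vz^{\vn}$ into $b_{\vn} \vz^{\vn} y^{\vp \cdot \vn}$. Comparing coefficients of $\vz^{\vn}$ on both sides, the identity $B(\vz y^{\vp}) = y^r B(\vz)$ is therefore equivalent to $\vp \cdot \vn = r$ for every $\vn \in S_B$, which in turn is equivalent to $\vp$ being orthogonal to every element of $\Delta(S_B)$ together with $r = \vp \cdot \vn_0$ for any fixed $\vn_0 \in S_B$.

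For the direct implication, let $V := \mathrm{span}_{\reals}(\Delta(S_B))$, of real dimension $k$ by definition of rank. Since $\Delta(S_B) \subseteq \integers^d$ and spans $V$ over $\reals$, I can extract integer vectors $\vd^{(1)}, \ldots, \vd^{(k)} \in \Delta(S_B)$ forming a real basis of $V$. Assembling them as the rows of a $k \times d$ integer matrix $M$, the subspace $V^{\perp} = \ker M$ has real dimension $d-k$. Applying \Cref{th:smithnormalform} to $M$ produces an integer basis $\vp^{(1)}, \ldots, \vp^{(d-k)}$ of $\ker M$, which by construction generates a real space of dimension $d-k$. Fixing any $\vn_0 \in S_B$ (the case $S_B = \emptyset$, i.e.\ $B \equiv 0$, being trivial) and setting $r_j := \vp^{(j)} \cdot \vn_0 \in \integers$, the reformulation gives $\vp^{(j)} \cdot \vn = r_j$ for all $\vn \in S_B$, hence $B(\vz y^{\vp^{(j)}}) = y^{r_j} B(\vz)$.

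For the converse, suppose such integer vectors $\vp^{(j)}$ and integers $r_j$ exist, with $\mathrm{span}_{\reals}(\vp^{(1)}, \ldots, \vp^{(d-k)})$ of dimension $d-k$. The reformulation applied in reverse yields $\vp^{(j)} \cdot (\vn - \vm) = 0$ for all $\vn, \vm \in S_B$ and all $j$, so $\Delta(S_B) \subseteq \mathrm{span}_{\reals}(\vp^{(1)}, \ldots, \vp^{(d-k)})^{\perp}$. The right-hand side has real dimension $k$, so the rank of $B$ is at most $k$.

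The main obstacle, and indeed the only nontrivial ingredient, is ensuring that the orthogonal complement of the real span of an integer set admits an integer basis; this is exactly what \Cref{th:smithnormalform} provides. The rest of the argument is a bookkeeping translation between the monomial identity and a linear constraint on the exponents in the support.
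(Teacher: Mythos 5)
Your proof is correct and follows essentially the same route as the paper's: reduce the functional identity to the linear condition $\vp\cdot\vn = r$ on the support, identify the solution set with the orthogonal complement of $\operatorname{span}_{\reals}(\Delta(S_B))$, and invoke \Cref{th:smithnormalform} to obtain an integer basis of that complement. You are slightly more explicit than the paper in two places that are worth preserving: you make the equivalence between the monomial identity and the orthogonality condition a standalone reformulation (which then serves both directions), and you spell out how \Cref{th:smithnormalform} is actually applied, namely by assembling a maximal linearly independent integer subset of $\Delta(S_B)$ into a matrix $M$ and taking $\ker M$ — the paper applies the lemma to ``the orthogonal of $\Delta(S_B)$'' without naming the matrix, which is a small gap you close. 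The handling of the degenerate case $S_B=\emptyset$ is also a reasonable addition, though the paper omits it.
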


\begin{proof}
If $B(z_1, \ldots, z_d)$ has rank $k$,
by \cref{th:smithnormalform},
the vector space orthogonal to $\Delta(S_B)$
has a basis of $d-k$ integer vectors
$(\vp^{(1)}, \ldots, \vp^{(d-k)})$.
Fix $\vm$ in the support $S_B$ of $B(\vz)$
and define $r_j = \vm^T \vp^{(j)}$,
then for all $\vn$ in $S_B$,
we have
\[
    \vn^T \vp^{(j)} = r_j,
\]
so
\[
    B(\vz y^{\vp^{(j)}})
    =
    y^{r_j}
    B(\vz).
\]

Reciprocally, if there are $d-k$ linearly independent
integer vectors $(\vp^{(1)}, \ldots, \vp^{(d-k)})$
and integers $r_1, \ldots, r_{d-k}$ such that
\[
    B(\vz y^{\vp^{(j)}})
    =
    y^{r_j}
    B(\vz),
\]
then each $\vp^{(j)}$ is in the vector space orthogonal to
$\Delta(S_B)$,
so this vector space has dimension at most $k$.
\end{proof}

In the current article, we will consider a polynomial $P$
having full rank.
It is however worth noticing that coefficient extraction
in a power series that does not have full rank
can be simplified as follows.
Say we are interested in the coefficient extraction
$[\vz^{\vn}] B(\vz)$ of a power series $B(\vz)$
that does not have full rank.
Let $\vp$ and $r$ denote
a nonzero integer vector and an integer such that
\[
    B(\vz y^{\vp}) = y^r B(\vz).
\]
Since $\vp$ is nonzero, there exists $j$
such that $p_j \neq 0$.
Then for any $\vn \in S_B$, we have
\[
    n_j = \frac{r - \sum_{\ell \neq j} p_{\ell} n_{\ell}}{p_j},
\]
so $n_j$ is uniquely determined by $(n_{\ell})_{\ell \neq j}$
and
\[
    [\vz^{\vn}] B(\vz) =
    \begin{cases}
    0 & \text{if $\vn^T \vp \neq r$},
    \\
    [z_1^{n_1} \cdots z_{j-1}^{n_{j-1}}
    z_{j+1}^{n_{j+1}} \cdots z_d^{n_d}]
    B(z_1, \ldots, z_{j-1}, 1, z_{j+1}, \ldots, z_d)
    & \text{otherwise}.
    \end{cases}
\]
Thus, coefficient extraction in a power series
that does not have full rank
reduces to a coefficient extraction
in a power series with fewer variables.

\begin{lemma}[Convexity of the cumulant generating function]
\label{th_convexity-cumulant-gf}
Consider a series $B(\vz)$
with nonnegative coefficients,
full rank,
and analytic on a neighborhood $\Omega$ of $\vzero$.
Then its cumulant generating function
\[
    K : \vt \mapsto
    \log\big(B(e^{\vt})\big)
\]
is strictly convex
on $\{\vt \in \reals^d,\ e^{\vt} \in \Omega\}$.
\end{lemma}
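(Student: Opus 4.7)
The plan is to identify the Hessian of $K$ with a covariance matrix and then use the full rank assumption to show it is positive definite.

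First, for any $\vt$ in the open set $\{\vt \in \reals^d : e^{\vt} \in \Omega\}$, the series $B(e^{\vt}) = \sum_{\vn \in S_B} ([\vz^{\vn}] B(\vz)) e^{\vn^T \vt}$ converges absolutely by analyticity and the nonnegativity of the coefficients; moreover, the same holds after termwise differentiation any finite number of times (each derivative just introduces bounded polynomial factors in $\vn$, which are absorbed by the exponential decay on any compact subset of the domain). Consequently, one may define a discrete probability distribution on $S_B$ by
\[
    \proba(\vN = \vn) := \frac{[\vz^{\vn}] B(\vz)\, e^{\vn^T \vt}}{B(e^{\vt})}, \qquad \vn \in S_B,
\]
and a direct computation gives $\nabla K(\vt) = \mean[\vN]$ and $\mH_K(\vt) = \mathrm{Cov}(\vN)$.

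Next, I would show that $\mH_K(\vt)$ is positive definite. For any $\vu \in \reals^d$, the standard identity $\vu^T \mathrm{Cov}(\vN) \vu = \var(\vu^T \vN) \geq 0$ shows that $\mH_K(\vt)$ is positive semidefinite. Equality $\var(\vu^T \vN) = 0$ holds if and only if $\vu^T \vN$ is almost-surely constant, which, since every $\vn \in S_B$ has positive probability under our measure, is equivalent to $\vu^T (\vn - \vm) = 0$ for all $\vn, \vm \in S_B$, i.e. $\vu \perp \Delta(S_B)$. The full rank assumption means that $\Delta(S_B)$ spans $\reals^d$, so its orthogonal complement is reduced to $\{\vzero\}$; hence $\vu = \vzero$ and $\mH_K(\vt)$ is positive definite.

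Since the domain $\{\vt \in \reals^d : e^{\vt} \in \Omega\}$ is open and convex (it is the preimage of the open set $\Omega \cap (0,\infty)^d$ under the coordinate-wise logarithm, and $\Omega$ being a neighborhood of $\vzero$ on which $B$ is analytic can be taken to be a polydisc, making its positive part a product of intervals and hence convex in log-coordinates), having a positive definite Hessian at every point of the domain yields strict convexity of $K$ by the standard second-order characterization. The main (minor) subtlety is the absolute convergence of the differentiated series and justifying that $\Omega$ may be replaced by a polydisc-shaped neighborhood so that the log-domain is convex; both are routine consequences of analyticity together with the nonnegativity of coefficients.
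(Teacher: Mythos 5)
Your proof is correct and takes essentially the same route as the paper's: both identify the Hessian of $K$ with the covariance matrix of the Boltzmann (tilted) distribution supported on $S_B$, use the variance identity to show it is positive semidefinite, and use the full-rank hypothesis to exclude a nonzero vector in the degenerate direction. The only substantive addition in your sketch is the remark that the log-domain must be convex (and that one can shrink $\Omega$ to a polydisc to ensure this); the paper leaves this implicit, and it is harmless in the paper's application since $B$ is taken to be the polynomial $P$, for which the log-domain is all of $\reals^d$.
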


\begin{proof}
Let $\vX$ denote the vector of random variables
parametrized by $\vx \in \reals_{> 0}^d \cap \Omega$
with distribution
\[
    \proba(\vX = \vn) =
    \frac{[\vz^{\vn}] B(\vz) \vx^{\vn}}{B(\vx)}.
\]
By construction, those probabilities sum to $1$.
This is the so-called \emph{Boltzmann distribution}
associated to the generating function $B(\vz)$
(see \cite{DFLS04}).
We now follow the classical proof of convexity
of the cumulant generating function
of a multivariate random variable,
then prove that the assumption that $B(\vz)$ has full rank
implies strict convexity.

The moment generating function of $\vX$ is
\[
    \mean(e^{\vt^T \vX}) =
    \frac{B(\vx e^{\vt})}{B(\vx)}.
\]
Observe that the cumulant generating function of $\vX$
is equal to
\[
    \log\big(\mean(e^{\vt^T \vX})\big)
    =
    \log\big(B(\vx e^{\vt})\big)
    - \log\big(B(\vx)\big)
    =
    K\big(\vt + \log(\vx)\big) - \log\big(B(\vx)\big).
\]
The vector of the means is
\[
    \mean(\vX) =
    \frac{\diag(\vx) \nabla_B(\vx)}{B(\vx)}.
\]
It is equal to the gradient of $K(\vt)$ at $\vt = \log(\vx)$.
The covariance matrix is
\[
    \var(\vX)
    =
    \mean\big((\vX - \mean(\vX))^T (\vX - \mean(\vX))\big)
    =
    \mean\big(\vX \vX^T\big)
    - \mean(\vX) \mean(\vX)^T.
\]
It is equal to the Hessian of $K(\vt)$ at $\vt = \log(\vx)$.

For any vector $\vu \in \reals^d$,
\[
    \vu^T \var(\vX) \vu =
    \mean\Big(\big((\vX - \mean(\vX))^T \vu\big)^2\Big),
\]
so $\var(\vX)$ is positive semi-definite.
Consider a vector $\vu$ such that
\[
    0
    =
    \vu^T \var(\vX) \vu
    =
    \mean\Big(\big((\vX - \mean(\vX))^T \vu\big)^2\Big).
\]
This implies $(\vX - \mean(\vX))^T \vu = \vzero$
so, for any $\vn$, $\vm$ in the support of $B(\vx)$,
\[
    (\vn - \vm)^T \vu = 0.
\]
Our assumption that $B(\vz)$ has full rank
implies $\vu = \vzero$,
which proves that $\var(\vX)$ is positive definite.
Since $\var(\vX)$ is equal to the Hessian of $K(\vt)$
at $\log(\vx)$,
we deduce that $K(\vt)$ is strictly convex
on $\{\vt \in \reals^d,\ e^{\vt} \in \Omega\}$.
\end{proof}

We recall below the definition of $q$-periodicity
from \cite[Section 4.2]{flajolet1991automatic}
for univariate formal power series.

\begin{definition}
\label{def:periodicity_univariate}
A univariate generating function $B(z)$
is said to be \emph{$q$-periodic}
if there exist an integer $r$ and a formal power series $C(z)$
such that
\[
    B(z) = z^r C(z^q).
\]
Equivalently, the support $S_B$ of $B(z)$ satisfies
\[
    S_B \subseteq r + q \integers_{\geq 0}. 
\]
The series $B(z)$ is said to be \emph{aperiodic}
if it is not $q$-periodic for any $q \geq 2$.
\end{definition}

Let us now extend this definition
to multivariate formal power series.

\begin{definition}
\label{def:periodicity}
A multivariate generating function $B(z_1, \ldots, z_d)$
is said to be \emph{$q$-periodic} if there exist
nonnegative integers $p_1, \ldots, p_d$,
not all divisible by $q$,
such that $B(y^{p_1}, \ldots, y^{p_d})$
is a $q$-periodic univariate formal power series.
An equivalent formulation on the support $S_B$ of $B(\vz)$
is that there exists a nonnegative integer $r$ such that
\begin{equation}
\label{eq:multivariate_periodicity_support}
    \{\vp^T \vn,\ \vn \in S_B\} \subseteq r + q \integers_{\geq 0}.
\end{equation}
Yet another equivalent formulation is that
there exists a formal power series $E(\vz, y)$,
integers $p_1, \ldots, p_d$ not all divisible by $q$
and an integer $r$ such that
\[
    B(z_1 y^{p_1}, \ldots, z_d y^{p_d}) =
    y^r E(\vz, y^q).
\]
We say that $B(\vz)$ is \emph{aperiodic}
if it is not $q$-periodic for any $q \geq 2$.
\end{definition}

Our next result presents a simple sufficient condition
for a formal multivariate power series to have full rank
and be aperiodic.

\begin{proposition}
\label{th_full_rank_aperiodic}
Let $\ve_j$ denote the vector with a $1$ at position $j$,
and $0$s everywhere else.
Consider a formal power series $B(z_1, \ldots, z_d)$
whose support contains $\vzero, \ve_1, \ldots, \ve_d$.
They $B(\vz)$ has full rank and is aperiodic.
\end{proposition}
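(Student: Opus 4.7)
The plan is to verify each of the two claims directly from the relevant definitions, exploiting the fact that $\vzero, \ve_1, \ldots, \ve_d$ all lie in the support $S_B$.

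For full rank, I first observe that for each $j \in \{1, \ldots, d\}$, the difference $\ve_j - \vzero = \ve_j$ belongs to $\Delta(S_B)$. Since $\{\ve_1, \ldots, \ve_d\}$ is the canonical basis of $\reals^d$, the real vector space generated by $\Delta(S_B)$ contains a basis of $\reals^d$, hence has dimension $d$. By \Cref{def:rank}, this means $B(\vz)$ has full rank.

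For aperiodicity, I argue by contradiction using the support formulation of \Cref{def:periodicity}. Suppose $B(\vz)$ were $q$-periodic for some $q \geq 2$. Then there would exist a nonnegative integer $r$ and nonnegative integers $p_1, \ldots, p_d$, not all divisible by $q$, such that $\{\vp^T \vn : \vn \in S_B\} \subseteq r + q \integers_{\geq 0}$. Applying this with $\vn = \vzero \in S_B$ gives $0 \in r + q\integers_{\geq 0}$, and since every element of this set is at least $r \geq 0$, I conclude $r = 0$. Then applying it with $\vn = \ve_j \in S_B$ for each $j$ yields $p_j \in q \integers_{\geq 0}$, i.e.\ $q \mid p_j$ for all $j$, contradicting the hypothesis that not all $p_j$ are divisible by $q$. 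Hence $B(\vz)$ is aperiodic.

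Neither step involves any real obstacle; both reduce to unpacking the definitions and using that $\vzero$ forces the ``phase offset'' $r$ to vanish while the $\ve_j$ then force each $p_j$ to be a multiple of $q$. The only mild care needed is to use the support-based equivalent formulation of \Cref{def:periodicity} (rather than the formal-power-series one), which makes the argument completely elementary.
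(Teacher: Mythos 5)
Your proof is correct and follows essentially the same approach as the paper: you use $\ve_j - \vzero \in \Delta(S_B)$ to get full rank, and you plug $\vn = \vzero$ and $\vn = \ve_j$ into the support characterization of periodicity to conclude $q \mid p_j$ for all $j$, a contradiction. The only (harmless) difference is that you deduce $r = 0$ outright from the nonnegativity of $r + q\integers_{\geq 0}$, whereas the paper settles for the weaker $q \mid r$ before combining it with $q \mid (p_j - r)$; both variants close the argument identically.
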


\begin{proof}
Since the support of $B(\vz)$ contains $\vzero$,
we have $S_B \subseteq \Delta(S_B)$.
Thus, $\Delta(S_B)$ contains $d$ independent vectors
$\ve_1, \ldots, \ve_d$, so it has rank $d$ and $B(\vz)$ has full rank.
Consider an integer vector $\vp$ and integers $r$ and $q \geq 2$.
Assume
\[
    \{ \vp^T \vn,\ \vn \in S_B\}
    \subseteq
    r + q \integers_{\geq 0}.
\]
For $\vn = \vzero$, we deduce that $q$ divides $r$.
For $\vn = \ve_j$, we deduce that $q$ divides $p_j - r$,
so $p_j$ is a multiple of $q$.
Thus, $\vp$ is necessarily a multiple of $q$
and $B(\vz)$ cannot be $q$-periodic, so $B(\vz)$ is aperiodic.
\end{proof}

The Daffodil lemma \cite[Lemma~IV.1, page~266]{flajolet2009analytic}
states that the absolute value of a univariate generating function
with nonnegative coefficients
reaches its maximum at a unique point
on any disk contained in its domain of convergence
if and only if it is aperiodic.
Our following lemma extends this result to the multivariate setting.

\begin{lemma}[Multivariate daffodil lemma]
\label{th:multivariate_daffodil}
Consider a multivariate power series $B(z_1, \ldots, z_d)$
with nonnegative coefficients
and a vector $\vzeta \in \reals_{> 0}^d$
such that the domain of convergence of $B(\vz)$ contains
the closed polydisc of radius $\vzeta$
\[
    D(\vzeta) :=
    \{ \vz,
    |z_1| \leq \zeta_1 \land \ldots \land |z_d| \leq \zeta_d \}.
\]
Then for all $\vz \in D(\vzeta)$, we have
$|B(\vz)| \leq B(\vzeta)$.

Assume furthermore that $B(\vz)$ has full rank.
Then $|B(\vz)|$ has a unique maximum on $D(\vzeta)$
if and only if it is aperiodic.
\end{lemma}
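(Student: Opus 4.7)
The plan is to derive the inequality from the triangle inequality, and then analyze its equality case: first on the polydisc (showing that under full rank the maximum must be on the boundary torus), and then on the torus itself (where the equality case becomes a phase-coherence condition which must be matched with aperiodicity). For the inequality, writing $B(\vz) = \sum_{\vn} b_{\vn} \vz^{\vn}$ with $b_{\vn} \ge 0$, two applications of the triangle inequality yield, for any $\vz \in D(\vzeta)$,
\[
    |B(\vz)| \le \sum_{\vn} b_{\vn} |z_1|^{n_1} \cdots |z_d|^{n_d} \le B(\vzeta).
\]

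For the second assertion, I first argue that under full rank every maximizer must lie on the torus $T(\vzeta) = \{\vz : |z_j| = \zeta_j \text{ for all } j\}$: if $|z_j| < \zeta_j$ for some $j$, the second inequality above is strict unless $S_B \subseteq \{n_j = 0\}$, which would confine $\Delta(S_B)$ to a hyperplane and contradict full rank. Parametrizing the torus by $\vz = \vzeta e^{i\vtheta}$ with $\vtheta \in [0, 2\pi)^d$, the equality case of the triangle inequality applied to
\[
    B(\vzeta e^{i\vtheta}) = \sum_{\vn \in S_B} b_{\vn} \vzeta^{\vn} e^{i \vtheta^T \vn}
\]
shows that $|B(\vzeta e^{i\vtheta})| = B(\vzeta)$ iff all nonzero terms share a common argument, equivalently iff there exists $\phi$ with $\vtheta^T \vn \equiv \phi \pmod{2\pi}$ for every $\vn \in S_B$. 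After fixing a reference $\vm \in S_B$, this is equivalent to $\vtheta^T(\vn - \vm) \in 2\pi \integers$ for all $\vn \in S_B$, so uniqueness of the maximum amounts to saying that the only such $\vtheta \in [0, 2\pi)^d$ is $\vzero$.

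It remains to identify this condition with aperiodicity. One direction is direct: a $q$-periodic $B$, witnessed by an integer vector $\vp$ not entirely divisible by $q$, gives the nonzero counterexample $\vtheta := 2\pi \vp/q$ reduced coordinatewise modulo $2\pi$. For the converse, starting from a nonzero $\vtheta$ with $\vtheta^T(\vn - \vm) \in 2\pi \integers$ for all $\vn \in S_B$, I will use full rank to pick $d$ linearly independent vectors in $\Delta(S_B) \subseteq \integers^d$, form them into a matrix $D$, and observe that $D^T \vtheta / (2\pi) \in \integers^d$; by Cramer's rule, $\vtheta/(2\pi)$ is rational. Writing $\vtheta/(2\pi) = \vp/q$ in lowest terms then exhibits $B$ as $q$-periodic, with $q \ge 2$ from the nontriviality of $\vtheta$ and coprimality preventing $q$ from dividing every $p_j$.

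The main obstacle I expect is this last rationality step: passing from a continuous $\vtheta$ in the equality case of the triangle inequality to rational coordinates of $\vtheta/(2\pi)$, which is what makes the argument compatible with the discrete modular definition of periodicity. The full-rank hypothesis is precisely what supplies the $d$ independent integer linear relations needed to force this rationality via Cramer's rule.
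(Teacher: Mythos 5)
Your proof is correct and follows the same overall strategy as the paper's: triangle inequality, reduction to the boundary torus, phase-alignment characterization of the equality case, and rationality of $\vtheta / (2\pi)$ deduced from full rank. The only meaningful difference is in the rationality step. The paper appends an extra coordinate to form $\vt' = (\vt, 1)$, observes that $\vt'$ spans the one-dimensional orthogonal complement of an augmented integer difference set, and then invokes its standalone Smith-normal-form lemma (that the kernel of an integer matrix has an integer basis) to produce an integer vector proportional to $\vt'$, whence $\vt$ is rational. You instead select $d$ linearly independent integer vectors from $\Delta(S_B)$, assemble them into an invertible integer matrix $D$, note that $D^T \vtheta \in 2\pi \integers^d$, and conclude rationality of $\vtheta/(2\pi)$ by Cramer's rule. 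Your route is more direct and bypasses the auxiliary lemma, with the same full-rank hypothesis doing the work. You also make explicit an observation left implicit in the paper, namely that full rank is what forces any maximizer onto the boundary torus: if $|x_j| < \zeta_j$, equality in the triangle inequality would require $S_B \subseteq \{n_j = 0\}$, confining $\Delta(S_B)$ to a hyperplane and contradicting full rank.
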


\begin{proof}
Using the positivity of the coefficients of $B(\vz)$,
triangular inequality yields for any $\vx \in D(\vzeta)$
\[
    |B(\vx)|
    =
    \left|
    \sum_{\vn}
    b_{\vn}
    \vx^{\vn}
    \right|
    \leq
    \sum_{\vn}
    b_{\vn}
    |\vx|^{\vn}
    \leq
    \sum_{\vn}
    b_{\vn}
    \vzeta^{\vn}
    =
    B(\vzeta).
\]
Assume $|B(\vx)| = B(\vzeta)$,
then $(|x_1|, \ldots, |x_d|) = \vzeta$,
so there exists a vector $\vt \in [0, 1)^d$
such that for all $j$, we have $x_j = \zeta_j e^{2 i \pi t_j}$.
The case $\vt = \vzero$ corresponds to $\vx = \vzeta$.
Let us assume first $\vt \neq \vzero$, and prove that there exists $q \geq $
such that $B(\vz)$ is $q$-periodic.
The equality $|B(\vx)| = B(\vzeta)$ implies
\[
    \left|
    \sum_{\vn}
    b_{\vn}
    \vzeta^{\vn}
    e^{2 i \pi \vn^T \vt}
    \right|
    =
    \sum_{\vn}
    b_{\vn}
    \vzeta^{\vn},
\]
so by the triangular inequality,
for all $\vn$ in the support $S_B$ of $B(\vz)$,
the complex numbers $e^{2 i \pi \vn^T \vt}$ are aligned.
This is equivalent with the existence of $t_0 \in [0,1)$ such that
for all $\vn \in S$, we have
\[
    \vn^T \vt = t_0 \mod 1.
\]
Then for all $\vn \in \Delta(S_B)$, we have
\[
    \vn^T \vt = 0 \mod 1.
\]
Thus, for all $\vn \in \Delta(S_B)$, there exist integers $k_{\vn}$ such that
\[
    \vn^T \vt = k_{\vn}.
\]
Let $\Delta(S_B)'$ denote the set
of vectors $\vn$ from $\Delta(S_B)$
with an additional element $- k_{\vn}$ at the bottom,
and let $\vt'$ denote the vector $\vt$
with an additional element $1$ at the bottom,
then the previous equality is equivalent with
\[
    \vn^T \vt' = 0
\]
for all $\vn \in \Delta(S_B)'$.
Since $B(\vz)$ has full rank,
$\Delta(S_B)$ generates a vector space of dimension $d$,
so the vector space orthogonal to $\Delta'(S_B)$ has dimension at most $1$.
It contains a nonzero vector, so the dimension is exactly $1$.
By \cref{th:smithnormalform},
there exists a nonzero integer vector $\vu$
orthogonal to $\Delta(S_B)'$.
We have $t'_{d+1} = 1$ and $\vt'$ proportional to the integer vector $\vu$,
so $\vt$ is a vector of rational values.
Recall that $\vt \in [0,1)^d$ and is nonzero.
Thus, there exist
$q \in \integers_{\geq 2}$
and $\vp \in \integers_{\geq 0}^d$
with $q$ that does not divide all $p_j$,
such that $\vt = \vp / q$, so for all $\vn \in \Delta(S_B)$,
\[
    \vn^T \vp = q\, k_{\vn}.
\]
Fix $r = \min_{\vn \in S_B} (\vn^T \vp)$, then
for all $\vn \in S_B$,
\[
    \vn^T \vp - r = q\, k_{\vn}.
\]
Thus, $q$ divides $\vn^T \vp - r$.
This implies the existence of a formal univariate power series $C(y)$
such that
\begin{equation}
\label{eq:BpyrC}
    B(y^{p_1}, \ldots, y^{p_d}) = y^r C(y^q),
\end{equation}
so $B(\vz)$ is $q$-periodic.

Reciprocally, if $B(\vz)$ is $q$-periodic for some $q \geq 2$,
there exist nonnegative integers $r, p_1, \ldots, p_d$,
with $q$ not dividing all $p_j$,
and a formal power series $C(y)$
such that \cref{eq:BpyrC} holds.
Then for all $\vn \in S_B$, the integer $q$ divides $\vn^T \vp - r$, so $e^{2 i \pi \vn^T \vp / q} = e^{2 i \pi r / q}$.
For any $\vzeta \in \reals_{> 0}^d$ such that the domain of convergence of $B(\vz)$ contains the polydisc of radius $\vzeta$, setting $\vx = \vzeta e^{2 i \pi \vp / q}$, we have
\[
    |B(\vx)|
    =
    \left|
    \sum_{\vn \in S_B}
    b_{\vn}
    \vzeta^{\vn}
    e^{2 i \pi \vp^T \vn / q}
    \right|
    =
    \left|
    \sum_{\vn \in S_B}
    b_{\vn}
    \vzeta^{\vn}
    e^{2 i \pi r}
    \right|
    =
    \left|
    \sum_{\vn \in S_B}
    b_{\vn}
    \vzeta^{\vn}
    \right|
    =
    B(\vzeta),
\]
where the last equality comes from the fact
that the coefficients of $B(\vz)$ are nonnegative.
Thus, on the torus of radius $\vzeta$,
at the point $\vx$, which is distinct from $\vzeta$,
we have $|B(\vx)| = B(\vzeta)$.
\end{proof}

In the current article, we will consider a polynomial $P$
that is aperiodic.
It is however worth noticing that coefficient extraction
in a periodic power series
can be simplified as follows.
Suppose we are interested in the coefficient extraction
$[\vz^{\vn}] B(\vz)$.
If $B(\vz)$ does not have full rank,
we saw after \cref{th:rank_characterization}
how to simplify the coefficient extraction.
Assume now $B(\vz)$ has full rank,
is analytic at $\vzero$,
and is $q$-periodic for some $q \geq 2$.
Then there exist nonnegative integers
$r, p_1, \ldots, p_d$,
with $q$ not dividing all $p_j$,
and a power series $E(\vz, y)$ such that
\[
    B(\vz y^{\vp}) =
    y^r E(\vz, y^q).
\]
We have
\[
    [\vz^{\vn}] B(\vz)
    =
    [\vz^{\vn} y^{\vp^T \vn}]
    B(z_1 y^{p_1}, \ldots, z_d y^{p_d})
    =
    [\vz^{\vn} y^{\vp^T \vn}]
    y^r E(\vz,y^q).
\]
Let $j$ be such that $p_j \neq 0$
(which must exist, otherwise $q$ would divide all $p_i$).
Let $\vn_{\setminus j}$ denote the vector
$(n_1, \ldots, n_{j-1}, n_{j+1}, \ldots, n_d)$
and similarly for $\vz_{\setminus j}$.
Then $B(\vz)$ having full rank implies
\[
    [\vz^{\vn} y^{\vp^T \vn}]
    B(\vz y^{\vp})
    =
    [\vz_{\setminus j}^{\vn_{\setminus j}} y^{\vp^T \vn}]
    B(z_1 y^{p_1}, \ldots, z_{j-1} y^{p_{j-1}}, y^{p_j}, z_{j+1} y^{p_{j+1}}, \ldots z_d y^{p_d}).
\]
Define
\[
    D(\vz_{\setminus j}, y) :=
    E(z_1, \ldots, z_{j-1}, 1, z_{j+1}, \ldots, z_d, y),
\]
then
\[
    [\vz^{\vn}] B(\vz)
    =
    \begin{cases}
    0 & \text{if $q$ does not divide $\vp^T \vn - r$},
    \\
    [\vz_{\setminus j}^{\vn_{\setminus j}} y^{(\vp^T \vn - r) / q}]
    D(\vz_{\setminus j}, y)
    & \text{otherwise.}
    \end{cases}
\]
Thus, when $|B(\vz)|$ reaches its maximum on a polydisc
at several points, the coefficient extraction
$[\vz^{\vn}] B(\vz)$ can be simplified.

\begin{corollary}
Consider a power series $B(\vz)$
and two complex torii $T_1$, $T_2$
centered at the origin
and contained in the domain of convergence of $B(\vz)$.
If $|B(\vz)|$ reaches its maximum on $T_1$ at a unique point,
then it reaches its maximum on $T_2$ at a unique point as well.
\end{corollary}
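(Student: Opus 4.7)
The plan is to reduce the statement to the multivariate daffodil lemma (\cref{th:multivariate_daffodil}) by observing that both properties appearing there—full rank and aperiodicity—depend only on $B$, not on the torus considered. Concretely, I will show that uniqueness of the maximum of $|B|$ on $T_1$ forces $B$ to have full rank and to be aperiodic, after which the lemma applied to $T_2$ yields the conclusion (the coefficients of $B$ being nonnegative, as in the ambient setting of \cref{th:multivariate_daffodil}).

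First I would check that uniqueness of the maximum on $T_1$ implies that $B$ has full rank. Assume it does not. Then by \cref{th:rank_characterization} there exist a nonzero integer vector $\vp$ and an integer $r$ such that $B(\vz y^{\vp}) = y^{r} B(\vz)$. Setting $y = e^{i\theta}$ for $\theta \in \reals$ gives $|B(\vz e^{i\theta\vp})| = |B(\vz)|$, and since $|z_j e^{i\theta p_j}| = |z_j|$, multiplication by $e^{i\theta\vp}$ maps $T_1$ into itself. The locus of points of $T_1$ where $|B|$ attains its maximum is therefore invariant under this one-parameter family of rotations; choosing a coordinate $j$ with $p_j \neq 0$ and a small $\theta \neq 0$ with $\theta p_j \notin 2\pi\integers$ produces a second maximum point, contradicting uniqueness.

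Next, with full rank established, the forward direction of \cref{th:multivariate_daffodil} applied to the closed polydisc whose distinguished torus is $T_1$ (which lies in the domain of convergence of $B$ because $T_1$ does, and because the maximum of $|B|$ on the polydisc is attained on this torus by nonnegativity of the coefficients) yields that $B$ is aperiodic. Aperiodicity is an intrinsic property of the support of $B$, see \cref{def:periodicity}, and in particular does not depend on any torus.

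Finally, I would apply the reverse direction of \cref{th:multivariate_daffodil} to the polydisc bounded by $T_2$: full rank together with aperiodicity of $B$ forces $|B|$ to attain its maximum at a unique point on $T_2$, completing the proof. The only substantive step is the first one, where one must rule out the absence of full rank by exploiting the continuous rotational symmetries of $|B|$ that such a degeneracy would create; the remaining steps are direct applications of the daffodil lemma.
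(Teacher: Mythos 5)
The paper states this corollary without proof, so there is no written argument to compare against; what you have supplied is exactly the reasoning the word ``Corollary'' is gesturing at, and it is correct. The essential observation is the one you open with: full rank and aperiodicity are intrinsic properties of the support $S_B$ of $B$, making no reference to any particular torus, so the daffodil lemma (\cref{th:multivariate_daffodil}) transfers uniqueness of the maximum from $T_1$ to $T_2$ once those two hypotheses are verified. The step you add---deducing full rank from uniqueness of the maximum on $T_1$ by exhibiting, via \cref{th:rank_characterization}, a one-parameter rotation group $\vz \mapsto \vz e^{i\theta\vp}$ under which $|B|$ is invariant, and noting that a nontrivial such rotation would produce a second maximizer on $T_1$---is necessary to discharge the full-rank hypothesis of the lemma, and the argument is sound (and in fact does not even need nonnegativity for this particular step). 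The one point worth flagging is that the corollary's terse statement leaves implicit the hypothesis, inherited from the lemma, that $B$ has nonnegative coefficients; you correctly read this in from the ambient setting, and it is what makes the identification between ``unique maximum on the torus'' and ``unique maximum on the closed polydisc'' valid so that both directions of \cref{th:multivariate_daffodil} apply.
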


    \subsection{Large Powers}\label{sec_appendix-large-powers}

\Cref{th_large-powers-limit} is the key result used in \Cref{sec_subcritical-case}, which corresponds to the case where the saddle point has all coordinates subcritical. Consequently, the contour of integration can pass through the saddle point without intersecting the singularity at \(1\).  

We first present a preliminary theorem, which corresponds to extracting the coefficient of index \(\vlambda n\) for some fixed vector \(\vlambda\) of rational values. Then, we state the theorem in its more general form, considering a sequence of vectors of real values \(\vkappa_n = \vlambda n + \bigO(n^{\delta})\).

\begin{theorem}[Multivariate large powers theorem]
\label{th:multivariate_large_powers}
Consider a power series $B(z_1, \ldots, z_d)$
with nonnegative coefficients,
full rank (\cref{def:rank}), aperiodic (\cref{def:periodicity})
and convergent on a neighborhood $\Omega$ of $0$.
Let $K(\vt) = \log\big(B(e^{\vt})\big)$ denote
its cumulant generating function.
Let $\vlambda$ denote a vector of positive rational numbers
such that the function $K(\vt) - \vlambda^T \vt$
does not tend to its infimum on the boundary
of $\Psi = \{ \vt \in \reals^d,\ e^{\vt} \in \Omega \}$.
Let $\vtau$ denote the unique point of $\Psi$
where $K(\vt) - \vlambda^T \vt$ reaches its minimum
and $\mH$ denote the Hessian matrix of $K(\vt)$ at $\vtau$.
Set $\vzeta = e^{\vtau}$.
Consider a power series $A(\vz)$
analytic on the torus of radius $\vzeta$
and satisfying $A(\vzeta) \neq 0$.
Then, considering large values of $n$ such that $\vlambda n$ is a vector of integers,
\[
    [\vz^{\vlambda n}]
    A(\vz)
    B(\vz)^n
    \sim
    \frac{A(\vzeta)}
        {\sqrt{(2 \pi n)^d \det(\mH)}}
    \frac{B(\vzeta)^n}{\vzeta^{\vlambda n}}.
\]
\end{theorem}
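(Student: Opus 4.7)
The plan is to express the coefficient extraction as a multivariate Cauchy integral over the torus passing through the saddle point $\vzeta$, then apply the multivariate Laplace method stated as \Cref{th:multivariate_laplace_method}. Concretely, I would write
\[
    [\vz^{\vlambda n}] A(\vz) B(\vz)^n
    =
    \frac{B(\vzeta)^n}{\vzeta^{\vlambda n}}
    \frac{1}{(2\pi)^d}
    \int_{[-\pi,\pi]^d}
    A(\vzeta e^{i\vtheta})\,
    e^{-n \phi(\vtheta)}\,
    d\vtheta,
    \qquad
    \phi(\vtheta) := -\log\!\left(\frac{B(\vzeta e^{i\vtheta})}{B(\vzeta)}\right) + i\,\vlambda^T \vtheta,
\]
using the parametrization $\vz = \vzeta e^{i\vtheta}$. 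The assumption that $\vlambda n$ is an integer vector ensures that the factor $\vzeta^{\vlambda n}$ in the denominator is well defined; by strict convexity of $K$ (\Cref{th_convexity-cumulant-gf}) and the boundary hypothesis, $\vtau$ is the unique interior minimizer of $K(\vt)-\vlambda^T\vt$, so $\vzeta \in \Omega$ and the torus lies in the domain of convergence of both $A$ and $B$.

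Next I would verify that $\phi$ satisfies the hypotheses of the Laplace method at $\vtheta=\vzero$. Using $\log(B(\vzeta e^{i\vtheta})) = K(\vtau + i\vtheta)$, the gradient of $\phi$ at $\vzero$ equals $-i\,\nabla K(\vtau)+i\vlambda = \vzero$ by the saddle-point equation, and a direct computation gives $\mathcal{H}_\phi(\vzero) = -i^2 \mathcal{H}_K(\vtau) = \mH$. In particular $\mH$ is real, symmetric and (by the strict convexity of $K$) positive definite, so it is non-singular and its eigenvalues are positive; this fixes the branch of $\sqrt{\det(\mH)}$ appearing in the final formula. Since $A(\vzeta)\neq 0$ and $\phi(\vzero)=0$, the local part of the Laplace method will yield exactly the claimed constant $A(\vzeta)/\sqrt{(2\pi n)^d \det(\mH)}$.

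The main obstacle is the global step: showing that the contribution from $\vtheta$ away from the origin is exponentially negligible. For this I would fix a small ball $U$ around $\vzero$ and split the integral into $U$ and its complement. On the complement, I would invoke the multivariate Daffodil lemma (\Cref{th:multivariate_daffodil}), which uses precisely the full-rank and aperiodicity hypotheses on $B$ to guarantee that $|B(\vzeta e^{i\vtheta})|$ attains its maximum on the torus only at $\vtheta=\vzero$. By continuity and compactness, there exists $c<1$ such that $|B(\vzeta e^{i\vtheta})/B(\vzeta)| \leq c$ uniformly outside $U$, while $A$ remains bounded on the torus. The complement of $U$ therefore contributes $O(c^n)$, which is negligible compared to the $n^{-d/2}$ contribution from $U$.

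Finally, on $U$ the real part of $\phi(\vtheta)$ is $\tfrac12 \vtheta^T \mH \vtheta + O(\|\vtheta\|^3)$, hence strictly positive for $\vtheta\neq \vzero$ after possibly shrinking $U$, so the assumptions of \Cref{th:multivariate_laplace_method} hold on $U$ (after making $\phi$ analytic on a compact neighborhood by shrinking $U$). Applying that theorem to $\int_U A(\vzeta e^{i\vtheta}) e^{-n\phi(\vtheta)} d\vtheta$ and combining with the negligible tail contribution yields the stated asymptotic equivalent. The whole argument is a multivariate analogue of the classical univariate large-powers theorem of \cite[Section~VIII.8]{flajolet2009analytic}, the novelty lying in replacing the univariate Daffodil lemma by its multivariate counterpart proved in \Cref{sec_appendix-Elie-preliminaries}.
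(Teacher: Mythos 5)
Your proof is correct and follows essentially the same route as the paper: express the coefficient extraction as a Cauchy integral over the torus of radius $\vzeta$, parametrize by $\vz = \vzeta e^{i\vtheta}$, verify via the saddle-point condition that the gradient of $\phi$ vanishes at $\vzero$ with Hessian $\mH$, use the multivariate Daffodil lemma to ensure the real part of $\phi$ is positive away from $\vtheta = \vzero$, and conclude by the multivariate Laplace method. The only cosmetic difference is that you explicitly split the domain into a small ball and its complement with an $O(c^n)$ tail bound, whereas the paper applies the Laplace method (\Cref{th:multivariate_laplace_method}) directly on the whole compact torus $[-\pi,\pi]^d$ after noting that $\Real(\phi)$ is positive except at the origin.
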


\begin{proof}
\cref{th_convexity-cumulant-gf} ensures
that $K(\vt)$ is strictly convex,
so $K(\vt) - \vlambda^T \vt$ is strictly convex as well.
Since it does not reach its minimum on the boundary of $\Psi$,
there exists a unique point $\vtau$ in $\Psi$
where $K(\vt) - \vlambda^T \vt$ reaches its minimum.
We write the coefficient extraction as a Cauchy integral.
\[
    [\vz^{\vlambda n}]
    A(\vz)
    B(\vz)^n
    =
    \frac{1}{(2 i \pi)^d}
    \oint
    A(\vz)
    B(\vz)^n
    \frac{d \vz}{\vz^{\vlambda n + \vone}}.
\]
The analycity of $B(\vz)$ at $\vzeta$
and its nonnegative coefficients
ensure that $B(\vz)$ is analytic on the torus of radius~$\vzeta$.
We choose for the integration domain
the torus of radius $\vzeta$
\[
    [\vz^{\vlambda n}]
    A(\vz)
    B(\vz)^n
    =
    \frac{1}{(2 \pi)^d}
    \int_{\vtheta \in [-\pi, \pi]^d}
    A(\vzeta e^{i \vtheta})
    B(\vzeta e^{i \vtheta})^n
    \frac{d \vtheta}
    {\vzeta^{\vlambda n} e^{i n \vlambda^T \vtheta}}.
\]
Introducing
\[
    \phi(\vtheta) :=
    i \vlambda^T \vtheta -
    \log \left(
    \frac{B(\vzeta e^{i \vtheta})}
    {B(\vzeta)}
    \right),
\]
we have
\[
    [\vz^{\vlambda n}]
    A(\vz)
    B(\vz)^n
    =
    \frac{B(\vzeta)^n}{\vzeta^{\vlambda n}}
    \frac{1}{(2 \pi)^d}
    \int_{\vtheta \in [-\pi, \pi]^d}
    A(\vzeta e^{i \vtheta})
    e^{- n \phi(\vtheta)}
    d \vtheta.
\]
Since $|B(\vzeta e^{i \vtheta})|$
reaches its maximal value $B(\vzeta)$ only at $\vtheta = 0$,
the real part of $\phi(\vtheta)$
is positive except at $\vtheta = \vzero$.
The Hessian matrix of $\phi(\vtheta)$ at $\vtheta = \vzero$
is equal to the Hessian matrix $\mH$
of $K(\vt)$ at $\vt = \log(\vzeta)$,
and is nonsingular
according to \cref{th_convexity-cumulant-gf}.
The result of the theorem
follows by application of the Laplace method,
recalled in \cref{th:multivariate_laplace_method}.
\end{proof}

Applications often require more flexibility
than provided by our previous result.
Instead of extracting the coefficient of index $\vlambda n$
for some fixed vector $\vlambda$ of rational values,
we might consider an index $\vkappa_n \sim \vlambda n$,
which allows us to cover the case where $\vlambda$
contains irrational values.
To achieve this result,
we follow and modify slightly
the proof of \cref{th:multivariate_laplace_method}.

\begin{theorem}
\label{th_large-powers-limit}
Consider $\delta < 1/2$
and a sequence of vectors of real values
\[
    \vkappa_n = \vlambda n + \bigO(n^{\delta}).
\]
Under the assumptions of \cref{th:multivariate_large_powers},
with $\vtau$ defined as the minimum
of $K(\vt) - \vlambda^T \vt$
and $\vzeta = e^{\vtau}$,
we have
\[
    [\vz^{\vkappa_n}]
    A(\vz)
    B(\vz)^n
    \sim
    \frac{A(\vzeta)}
        {\sqrt{(2 \pi n)^d \det(\mH)}}
    \frac{B(\vzeta)^n}{\vzeta^{\vkappa_n}}.
\]
\end{theorem}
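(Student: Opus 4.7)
The plan is to adapt the proof of Theorem~\ref{th:multivariate_large_powers} by absorbing the deviation $\vkappa_n - \vlambda n$ as a small perturbation of the phase function, rather than shifting the integration contour. First, I would write the coefficient extraction as a Cauchy integral on the torus of radius $\vzeta$, exactly as in the proof of Theorem~\ref{th:multivariate_large_powers}:
\[
    [\vz^{\vkappa_n}] A(\vz) B(\vz)^n
    =
    \frac{B(\vzeta)^n}{\vzeta^{\vkappa_n}}
    \cdot
    \frac{1}{(2\pi)^d}
    \int_{[-\pi,\pi]^d}
    A(\vzeta e^{i \vtheta})
    e^{-n \phi_n(\vtheta)}
    d\vtheta,
\]
where $\phi_n(\vtheta) := i (\vkappa_n / n)^T \vtheta - \log(B(\vzeta e^{i\vtheta}) / B(\vzeta))$. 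Setting $\vect{\epsilon}_n := \vkappa_n / n - \vlambda = O(n^{\delta - 1})$, I can decompose $\phi_n(\vtheta) = \phi(\vtheta) + i \vect{\epsilon}_n^T \vtheta$, where $\phi$ is the phase function of the proof of Theorem~\ref{th:multivariate_large_powers}: strictly positive real part off the origin, vanishing gradient at the origin, and Hessian equal to $\mH$ there.

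Next, I would carry out the saddle-point estimation by splitting $[-\pi,\pi]^d$ into a central region $\{|\vtheta| \leq n^{-1/2 + \varepsilon}\}$ and its complement, for some small $\varepsilon > 0$ to be chosen. On the complement, the strict convexity of $K$ (Lemma~\ref{th_convexity-cumulant-gf}), aperiodicity of $B$ (Lemma~\ref{th:multivariate_daffodil}), and the Taylor expansion of $\phi$ at the origin yield $\mathrm{Re}(\phi(\vtheta)) \geq c \min(|\vtheta|^2, 1)$ for some $c > 0$; since $|e^{-i n \vect{\epsilon}_n^T \vtheta}| = 1$, the extra perturbation is harmless and the tail contribution is $O(e^{-c n^{2\varepsilon}})$, super-polynomially smaller than the target order $n^{-d/2}$. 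In the central region, the substitution $\vtheta = \vu / \sqrt{n}$ combined with the Taylor expansion gives
\[
    n \phi(\vtheta) = \tfrac{1}{2} \vu^T \mH \vu + O(n^{-1/2 + 3\varepsilon}),
    \qquad
    n \vect{\epsilon}_n^T \vtheta = \sqrt{n}\, \vect{\epsilon}_n^T \vu = O(n^{\delta - 1/2 + \varepsilon}),
\]
together with $A(\vzeta e^{i \vu / \sqrt{n}}) = A(\vzeta) + O(n^{-1/2 + \varepsilon})$. Choosing $\varepsilon > 0$ small enough that $\delta - 1/2 + \varepsilon < 0$, all three error terms vanish, and dominated convergence (the bound $|e^{-i n \vect{\epsilon}_n^T \vtheta}| = 1$ providing the majorant) reduces the central integral to the standard Gaussian $n^{-d/2} A(\vzeta) \int_{\mathbb{R}^d} e^{-\vu^T \mH \vu / 2} d\vu$, which evaluates to the claimed asymptotic.

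The main obstacle is the linear perturbation $i \vect{\epsilon}_n^T \vtheta$: it destroys the saddle-point structure, since $\phi_n$ no longer has a critical point at the origin of the integration torus, so Theorem~\ref{th:multivariate_laplace_method} cannot be applied directly. The resolution is quantitative rather than structural: on the Gaussian scale $|\vu| = O(1)$ that controls the integral, the perturbation has magnitude $O(n^{\delta - 1/2})$ and therefore contributes only a factor $1 + o(1)$ to the leading asymptotic. The condition $\delta < 1/2$ is used precisely here, to leave room for a window width $n^{-1/2 + \varepsilon}$ in which the perturbation is negligible while the Taylor expansion of $\phi$ is still accurate; for $\delta \geq 1/2$, one would have to relocate the effective saddle point, altering the leading order.
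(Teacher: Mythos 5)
Your proof is correct and follows essentially the same route as the paper's: a Cauchy integral on the torus of radius $\vzeta$, a central window of width $n^{-1/2+\varepsilon}$ with $0 < \varepsilon < \min(1/6,\, 1/2-\delta)$ (matching the paper's $\varepsilon_n = n^{-\beta}$ with $\beta \in (\max(\delta,1/3),1/2)$), a Gaussian reduction of the central part using the quadratic Taylor expansion of $\phi$ together with the negligibility of the linear phase perturbation $\sqrt{n}\,(\vkappa_n/n - \vlambda)^T\vu = \bigO(n^{\delta-1/2+\varepsilon})$, and a super-polynomially small tail. The only cosmetic difference is the tail bound: you invoke a direct global lower bound $\Real(\phi(\vtheta)) \geq c \min(\|\vtheta\|^2, 1)$ obtained from the Taylor expansion near the origin, the daffodil lemma, and compactness, whereas the paper instead locates the minimizer of $\Real(\phi)$ on the tail at the boundary of the central region by a sequential-compactness argument; both routes yield the same exponentially small estimate.
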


\begin{proof}
We write again the coefficient extraction
as a Cauchy integral on the torus of radius $\vzeta$
\[
    [\vz^{\vkappa_n}]
    A(\vz)
    B(\vz)^n
    =
    \frac{1}{(2 \pi)^d}
    \int_{(-\pi, \pi]^d}
    A(\vzeta e^{i \vtheta})
    B(\vzeta e^{i \vtheta})^n
    \frac{d \theta}
    {\vzeta^{\vkappa_n} e^{i \vkappa_n^T \vtheta}}
\]
and introduce
\begin{equation}
\label{eq:phi}
    \phi(\vtheta) :=
    i \vlambda^T \vtheta
    - \log \left(
    \frac{B(\vzeta e^{i \vtheta})}{B(\vzeta)}
    \right)
\end{equation}
to rewrite it as
\begin{equation}
\label{eq:central_part_large_powers_limit}
    [\vz^{\vkappa_n}]
    A(\vz)
    B(\vz)^n
    =
    \frac{B(\vzeta)^n}
    {\vzeta^{\vkappa_n}}
    \frac{1}{(2 \pi)^d}
    \int_{(-\pi, \pi]^d}
    A(\vzeta e^{i \vtheta})
    e^{- n \phi(\vtheta)}
    \frac{d \theta}
    {e^{i (\vkappa_n - \vlambda n)^T \vtheta}}.
\end{equation}

\paragraphproof{Central part.}
We introduce a positive sequence $\varepsilon_n$
converging to $0$ and specified later.
The central part of the integral is defined as
\[
    C_n :=
    \frac{1}{(2 \pi)^d}
    \int_{(-\varepsilon_n, \varepsilon_n)^d}
    A(\vzeta e^{i \vtheta})
    e^{- n \phi(\vtheta)}
    \frac{d \theta}
    {e^{i (\vkappa_n - \vlambda n)^T \vtheta}}.
\]
Since $\phi(\vzero) = \vzero$
and, by definition of $\vzeta$,
the gradient of $\phi(\vtheta)$ also vanishes at $\vzero$,
the Taylor expansion of $\phi(\vtheta)$ is
\begin{equation}
\label{eq:phi_gaussian}
    \phi(\vtheta) =
    - \frac{1}{2}
    \vtheta^T \mH \vtheta
    + \bigO(\| \vtheta \|^3).
\end{equation}
For $\vtheta \in (-\varepsilon_n, \varepsilon_n)$, we have
\[
    n \phi(\vtheta) =
    - \frac{n}{2}
    \vtheta^T \mH \vtheta
    + \bigO(n {\varepsilon_n}^3).
\]
We will choose $\varepsilon_n$ to ensure $n {\varepsilon_n}^3 \to 0$.
We will also ensure $n^{\delta} \varepsilon_n \to 0$,
so $e^{i (\vkappa_n - \vlambda n)^T \vtheta} \sim 1$
uniformly for $\theta \in (-\varepsilon_n, \varepsilon_n)$.
The change of variable $\vt = \sqrt{n} \vtheta$ is applied
\[
    C_n =
    \frac{n^{-d/2}}{(2 \pi)^d}
    \int_{(- \sqrt{n} \varepsilon_n, \sqrt{n} \varepsilon_n)^d}
    A(\vzeta e^{i n^{-1/2} \vt})
    e^{- \vt^T \mH \vt / 2 + \bigO(n {\varepsilon_n}^3)}
    d \vt.
\]
We have $A(\vzeta e^{i n^{-1/2} \vt}) \to A(\vzeta)$
and $n {\varepsilon_n}^3 \to 0$, so
\[
    C_n \sim
    A(\vzeta)
    \frac{n^{-d/2}}{(2 \pi)^d}
    \int_{(- \sqrt{n} \varepsilon_n, \sqrt{n} \varepsilon_n)^d}
    e^{- \vt^T \mH \vt / 2}
    d \vt.
\]
We recognise a Cauchy integral
and choose $\varepsilon_n$ such that $\sqrt{n} \varepsilon_n \to +\infty$,
so
\[
    C_n
    \sim
    A(\vzeta)
    \frac{n^{-d/2}}{(2 \pi)^d}
    \int_{(-\infty, +\infty)^d}
    e^{- \vt^T \mH \vt / 2}
    d \vt
    \sim
    \frac{A(\vzeta)}{\sqrt{(2 \pi n)^d \det(\mH)}}.
\]
The three constraints on $\varepsilon_n$ are satisfied
by choosing, for example, $\varepsilon_n = n^{- \beta}$
with $\beta \in (\max(\delta, 1/3), 1/2)$.

\paragraphproof{Tail.}
The tail corresponds to
$(-\pi, \pi]^d \setminus (-\varepsilon_n, \varepsilon_n)^d$.
The tail part of the integral~\eqref{eq:central_part_large_powers_limit}
is bounded by
\[
    \int_{(-\pi, \pi]^d \setminus (-\varepsilon_n, \varepsilon_n)^d}
    \left|
    A(\vzeta e^{i \vtheta})
    e^{- n \phi(\vtheta)}
    \right|
    d \vtheta
    \leq
    \left(
    \sup_{\vtheta \in (-\pi, \pi)^d} |A(\vzeta e^{i \vtheta})| 
    \right)
    \int_{(-\pi, \pi]^d \setminus (-\varepsilon_n, \varepsilon_n)^d}
    e^{- n \Real(\phi(\vtheta))}
    d \vtheta.
\]
We will prove in the next paragraph
that for all large enough $n$,
the minimum of $\Real(\phi(\vtheta))$
on the tail is reached at some points from
$[-\varepsilon_n, \varepsilon_n]^d \setminus (-\varepsilon_n, \varepsilon_n)^d$.
Thus, the tail is bounded by
\[
    \bigO(1)
    \sup_{\vtheta \in [-\varepsilon_n, \varepsilon_n]^d \setminus (-\varepsilon_n, \varepsilon_n)^d}
    e^{- n \Real(\phi(\vtheta))}.
\]
We replace $\phi(\vtheta)$ with its Taylor expansion
\[
    - n \Real(\phi(\vtheta)) =
    - \frac{n}{2} \vtheta^T \mH \vtheta
    + \bigO(n {\varepsilon_n}^3).
\]
Our choice of $\varepsilon_n$ ensures the $\bigO$ tends to $0$.
We divide $\vtheta$ by $\varepsilon_n$
to bound the tail by
\[
    \bigO(1)
    \sup_{\vtheta \in [-\varepsilon_n, \varepsilon_n]^d \setminus (-\varepsilon_n, \varepsilon_n)^d}
    \exp \left(
    - \frac{n {\varepsilon_n}^2}{2}
    ({\varepsilon_n}^{-1} \vtheta)^T \mH ({\varepsilon_n}^{-1} \vtheta)
    \right).
\]
The vector ${\varepsilon_n}^{-1} \vtheta$ has
at least one coefficient equal to $-1$ or $1$,
so it stays bounded away from $\vzero$.
Since $\mH$ is positive definite,
there exist a positive constant $c$
such that
$({\varepsilon_n}^{-1} \vtheta)^T \mH ({\varepsilon_n}^{-1} \vtheta) \geq c$,
so the tail is bounded by
\[
    \bigO(1)
    \exp \left(- c \frac{n {\varepsilon_n}^2}{2} \right).
\]
Since $n {\varepsilon_n}^2 \to +\infty$,
this converges exponentially fast to $0$,
so the tail is negligible compared to
the central part of the integral.

\paragraph{Locating the minimum of $\Real(\phi(\vtheta))$ on the tail.}
By definition of $\phi(\vtheta)$, we have
\begin{align*}
    \Real(\phi(\vtheta))
    &=
    \Real \left(
    i \vlambda^T \vtheta
    - \log \left(
        \frac{B(\vzeta e^{i \vtheta})}
            {B(\vzeta)}
    \right)
    \right)
    \\&=
    \log(B(\vzeta))
    - \log \left( \left|
    B(\vzeta e^{i \vtheta})
    \right| \right).
\end{align*}
Thus, the minima of $\Real(\phi(\vtheta))$
correspond to maxima of $|B(\vzeta e^{i \vtheta})|$.
By assumption, on the polydisc of radius $\vzeta$,
the maximum of $|B(\vz)|$
is reached only at $\vz = \vzeta$,
so the minimum of $\Real(\phi(\vtheta))$
is reached only at $\vtheta = \vzero$.
As all points of the central part tend to $\vzeta$ with $n$,
this minimum tends to $\Real(\phi(\vzero)) = 0$.
By contradiction, let us assume that
there is an infinite set of indices $n$
for which there exists a point in
$(-\pi, \pi)^d \setminus [-\varepsilon_n, \varepsilon_n]^d$
where $\Real(\phi(\vtheta))$ reaches its minimum
on the tail $(-\pi, \pi)^d \setminus (-\varepsilon_n, \varepsilon_n)^d$.
We extract a converging subsequence $\vu_n$.
The limit $\vv$ must be $\vzero$,
otherwise $\lim_n \Real(\phi(\vu_n)) = \Real(\phi(\vv)) > 0$,
which contradicts the fact that the minimum on the tail
tends to $0$.
The gradient of $\Real(\phi(\vtheta))$ vanishes
at each $\vtheta = \vu_n$.
From \cref{eq:phi_gaussian}, we deduce
that the gradient of $\Real(\phi(\vtheta))$
at $\vtheta = \vu_n$ is proportional
to $\mH \vu_n + \bigO(\|\vu_n\|^2)$.
Let $q_n$ denote the minimum of the absolute value
of all coefficients of $\vu_n$,
then for all $n$
\[
    \mH \frac{\vu_n}{q_n} + \bigO(\|\vu_n\|) = \vzero.
\]
Left-multiplying by $\mH^{-1}$,
we deduce that $\vu_n / q_n$ converges to $\vzero$,
which is impossible as at least
one of its coefficients is equal to $1$.
This concludes the proof that
for all large enough $n$,
the minima of $\Real(\phi(\vtheta))$
on $(-\pi, \pi]^d \setminus (-\varepsilon_n, \varepsilon_n)^d$
belongs to $[-\varepsilon_n, \varepsilon_n]^d \setminus (-\varepsilon_n, \varepsilon_n)^d$.
\end{proof}

    \subsection{Combining Large Powers and Singularity Analysis}\label{sec_appendix-Elie-singularity}
The following result generalizes \cref{th_large-powers-limit}
to the case where the saddle point meets singularities.
It is the analytic result behind \Cref{th_critical,th_mixed-case,th_proba-X-Y}.
This is a multivariate combination of the large powers theorem \cite[Theorem VIII.8]{flajolet2009analytic} and singularity analysis \cite[Section VI]{flajolet2009analytic}.
The presence of large powers simplifies the singularity analysis part
and we avoid the complications related to multivariate singularity analysis
\cite{pemantle2024analytic}.
We will derive a more precise version of it (asymptotic expansion)
in \Cref{sec_asymptotic_expansion}.

\begin{theorem}
\label{th_large-powers-singularity}
Consider a power series $B(z_1, \ldots, z_d)$
with nonnegative coefficients,
full rank, aperiodic
and convergent on a neighborhood $\Omega$ of $0$.
Let $K(\vt) = \log(B(e^{\vt}))$ denote
its cumulant generating function.
Let $\vlambda$ denote a vector of positive real values
such that the function $K(\vt) - \vlambda^T \vt$
does not tend to its infimum on the boundary
of $\Psi = \{ \vt \in \reals^d,\ e^{\vt} \in \Omega \}$.
Let $\vtau$ denote the unique point of $\Psi$
where $K(\vt) - \vlambda^T \vt$ reaches its minimum.
Set $\vzeta = e^{\vtau}$.

We assume $\zeta_j \leq 1$ for all $j$ and denote by $\mS$
the set of indices such that $\zeta_j < 1$,
and by $\mC$ the corresponding set for $\zeta_j = 1$.
Note that by assumption, we have $\mS \cup \mC = \{1, \ldots, d\}$.
Let $\mH$ denote the Hessian matrix of $K(\vt)$ at $\vtau$
and $\mM$ the submatrix of $\mH^{-1}$ corresponding to the rows and column in $\mC$.
Consider a power series $A(\vz)$
analytic on the torus of radius $\vzeta$
and satisfying $A(\vzeta) \neq 0$.
Consider a real value $\delta < 1/2$
and a sequence of vectors of real values
\[
    \vkappa_n = \vlambda n + \bigO(n^{\delta}).
\]
Then
\[
    [\vz^{\vkappa_n}]
    \frac{A(\vz) B(\vz)^n}
    {\prod_{j \in \mS \cup \mC} (1 - z_j)}
    \sim
    \frac{B(\vzeta)^n}{\vzeta^{\vkappa_n}}
    \frac{A(\vzeta)}
        {\prod_{j \in \mS} (1 - \zeta_j)}
    \frac{1}
        {\sqrt{ (2 \pi)^d n^{|\mS|} \det(\mH)}}
    \int_{(0, +\infty)^{|\mC|}}
    e^{- \vu^T \mM \vu / 2}
    d \vu.
\]
\end{theorem}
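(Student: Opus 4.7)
The plan is to adapt the argument of \Cref{th_large-powers-limit} with an integration contour designed to accommodate the singularities of $\prod_{j \in \mS \cup \mC}(1-z_j)^{-1}$ at $z_j = 1$. I would first express the coefficient as a multivariate Cauchy integral
\[
    [\vz^{\vkappa_n}] \frac{A(\vz) B(\vz)^n}{\prod_{j \in \mS \cup \mC}(1-z_j)}
    = \frac{1}{(2 i \pi)^d} \oint \frac{A(\vz) B(\vz)^n}{\prod_j (1-z_j)} \frac{d\vz}{\vz^{\vkappa_n + \vone}},
\]
and choose a product contour: the $j$-th coordinate traces the circle of radius $\zeta_j$ when $j \in \mS$ (strictly inside the unit disk, so no singularity is enclosed), and for $j \in \mC$ it traces the unit circle with a small outward Hankel-type indentation of size $n^{-1/2}$ around $z_j = 1$, pushing the contour slightly into the upper half-plane. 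Writing $z_j = \zeta_j e^{i\theta_j}$ brings the integral into the same shape as in \Cref{th_large-powers-limit}, with the contour in $\theta_j$ indented above $0$ for $j \in \mC$.

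Next I would apply the saddle-point rescaling $\theta_j = u_j / \sqrt{n}$ and split into a central part and a tail. The tail is controlled exactly as in \Cref{th_large-powers-limit}: \Cref{th:multivariate_daffodil} ensures $|B(\vzeta e^{i \vtheta})|$ is uniquely maximal at $\vtheta = \vzero$, and $\prod_j(1 - z_j)^{-1}$ stays bounded on the indented contour. In the central region, Taylor expansion gives $B(\vzeta e^{i \vu / \sqrt{n}})^n \sim B(\vzeta)^n\, e^{- \vu^T \mH \vu / 2}$ and $A(\vzeta e^{i \vu / \sqrt{n}}) \to A(\vzeta)$. Each subcritical factor $1/(1 - \zeta_j e^{i u_j / \sqrt{n}})$ converges to $1/(1-\zeta_j)$, while each critical factor $1/(1 - e^{i u_j / \sqrt{n}})$ is equivalent to $\sqrt{n}/(-i u_j)$, producing an aggregate factor $n^{|\mC|/2}$ together with singular kernels along the real $u_j$-axis.

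The remaining object to evaluate is then
\[
    \mathcal{I} := \int_{\Gamma} e^{- \vu^T \mH \vu / 2}\prod_{j \in \mC} \frac{d u_j}{-i u_j}\prod_{j \in \mS} d u_j,
\]
where $\Gamma$ is the product of real lines (for $j \in \mS$) and real lines indented upward near $0$ (for $j \in \mC$). I would apply the Laplace-transform identity $1/(-i u_j) = \int_0^{+\infty} e^{i u_j v_j}\, d v_j$, valid on the upper indentation, and swap the order of integration (justified by absolute convergence of the Gaussian in $\vu$). The inner integral over $\vu \in \reals^d$ is an unrestricted Gaussian with imaginary source supported on the $\mC$ coordinates and evaluates to $(2\pi)^{d/2}/\sqrt{\det \mH}\cdot e^{- \vv^T \mM \vv / 2}$, where $\mM$ is precisely the $\mC \times \mC$ submatrix of $\mH^{-1}$. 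The remaining integral over $\vv \in (0,+\infty)^{|\mC|}$ is exactly the one in the statement, and combining the scalings $n^{-d/2} \cdot n^{|\mC|/2} = n^{-|\mS|/2}$ yields the claimed asymptotic.

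The main obstacle will be rigorously handling the contour deformation near the critical singularities together with the Fubini exchange: one must check that the Gaussian tail estimates survive the upward indentation so that the Laplace-transform identity applies, and that the orientation of the indentation is consistent with the sign conventions, so that the surviving integral is over $(0,+\infty)^{|\mC|}$ rather than $(-\infty,0)^{|\mC|}$. Once these points are settled, the remaining estimates are direct adaptations of those already used in \Cref{th_large-powers-limit}.
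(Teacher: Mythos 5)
Your proposal follows essentially the same route as the paper's proof: Cauchy integral, product contour with an indentation around $z_j=1$ for $j\in\mC$, saddle-point rescaling $\theta_j = u_j/\sqrt{n}$, Taylor expansion of the integrand so that the critical factors produce $1/(-i\theta_j)$, the Laplace-transform identity to turn each $1/(-iu_j)$ into an integral over $(0,+\infty)$, and a completed-square Gaussian integral yielding the submatrix $\mM$ of $\mH^{-1}$. The paper's proof uses exactly this sequence of steps, so this is not a genuinely different argument.

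Two small points would need correcting to make it airtight. First, your indentation is described as \emph{outward} yet \emph{into the upper half-plane}; these are inconsistent. Writing $z_j = e^{i\theta_j}$ with $\theta_j = a + ib$, the upper half $\theta$-plane ($b>0$) corresponds to $|z_j| = e^{-b} < 1$, i.e.\ an \emph{inward} bulge. Inward is the correct direction: an outward bulge would enclose the pole at $z_j = 1$ and pick up a residue that you are not accounting for. Since you later use $1/(-iu_j) = \int_0^{+\infty} e^{i u_j v_j}\,dv_j$, which requires $\Im(u_j) > 0$, you do mean the upper half-plane, so ``outward'' is simply a slip, but it should be fixed. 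Second, using $n^{-1/2}$ for the scale is too small if it is meant to be \emph{both} the indentation radius \emph{and} the cutoff between central region and tail: the tail bound requires $n\varepsilon_n^2 \to \infty$, which fails at $\varepsilon_n = n^{-1/2}$. The paper uses $\varepsilon_n = n^{-\beta}$ with $\beta \in (\max(\delta, 1/3), 1/2)$, satisfying simultaneously $n\varepsilon_n^3 \to 0$, $\sqrt{n}\,\varepsilon_n \to \infty$, and $n^\delta \varepsilon_n \to 0$. If you instead keep a fixed $O(1)$ indentation in the rescaled $u$-variable but cut the tail at $\varepsilon_n = n^{-\beta}$ in the $\theta$-variable, everything goes through; you just need to state the cutoff explicitly.
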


\begin{proof}
We write the coefficient extraction as a Cauchy integral
and apply again the change of variable $\vz = \vzeta e^{i \vtheta}$.
However, for $j \in \mC$, as $\zeta_j = 1$ and $z_j = 1$
is a singularity of the integrand, we cannot let $\theta_j$ go through $0$.
So in that case, fixing $\varepsilon_n = n^{- \beta}$
with $\beta \in (\max(\delta, 1/3), 1/2)$,
as in the proof of \cref{th_large-powers-limit},
we define the path of integration for $\theta_j$
as $(-\pi, -\varepsilon_n)$, then a half-circle centered at $0$,
starting at $- \varepsilon_n$, going through $i \varepsilon_n$
and stopping at $\varepsilon_n$, then $(\varepsilon_n, \pi]$.
For $j \in \mS$, we have $\zeta_j < 1$ so we choose
the integration path $\theta_j \in (-\pi, \pi]$.
The function $\phi(\vt)$ is defined as in \eqref{eq:phi}
and \cref{eq:central_part_large_powers_limit} becomes
\[
    [\vz^{\vkappa_n}]
    \frac{A(\vz) B(\vz)^n}
    {\prod_{j \in \mS \cup \mC} (1 - z_j)}
    =
    \frac{B(\vzeta)^n}
    {\vzeta^{\vkappa_n}}
    \frac{1}{(2 \pi)^d}
    \int
    \frac{A(\vzeta e^{i \vtheta}) e^{- n \phi(\vtheta)}}
    {\prod_{j \in \mS \cup \mC} (1 - \zeta_j e^{i \theta_j})}
    \frac{d \theta}
    {e^{i (\vkappa_n - \vlambda n)^T \vtheta}}.
\]

\paragraphproof{Central part.}
It corresponds to the following part of the domain of integration,
denoted by $\gamma_n$.
For $j \in \mC$, we choose $\theta_j$ on the half-circle described above.
For $j \in \mS$, we choose $\theta_j \in (-\varepsilon_n, \varepsilon_n)$.
On this domain, we apply a Taylor expansion of the integrand.
Note that for $j \in \mC$, as $\zeta_j = 1$,
we have $1 - \zeta_j e^{i \theta_j} \sim - i \theta_j$,
while for $j \in \mS$,
we have $1 - \zeta_j e^{i \theta_j} \sim 1 -\zeta_j$.
\begin{equation}
\label{eq:asympt_one}
    \frac{1}{(2 \pi)^d}
    \int_{\gamma_n}
    \frac{A(\vzeta e^{i \vtheta}) e^{- n \phi(\vtheta)}}
    {\prod_{j \in \mS \cup \mC} (1 - \zeta_j e^{i \theta_j})}
    \frac{d \theta}
    {e^{i (\vkappa_n - \vlambda n)^T \vtheta}}
    \sim
    \frac{1}{(2 \pi)^d}
    \int_{\gamma_n}
    \frac{A(\vzeta)
        e^{- n \vtheta^T \mH \vtheta / 2 + \bigO(n \|\vtheta\|^3)}}
    {\prod_{j \in \mC} (- i \theta_j)
    \prod_{j \in \mS} (1 - \zeta_j)}
    d \theta.
\end{equation}
Note that our choice of $\varepsilon_n$ ensures
that $\bigO(n \|\vtheta\|^3)$ tends to $0$.
The change of variable $\vt = \sqrt{n} \vtheta$ is applied.
The integral is now equivalent with
\[
    \frac{A(\vzeta)}
        {\prod_{j \in \mS} (1 - \zeta_j)}
    \frac{i^{|\mC|}}
        {(2 \pi)^d n^{|\mS|/2}}
    \int
    \frac{e^{- \vt^T \mH \vt / 2}}
        {\prod_{j \in \mC} t_j}
    d \vt,
\]
where the domain of integration is the following.
For $j \in \mC$,
$t_j$ is on the line from $-\sqrt{n} \varepsilon_n$ to $i$,
then on the line from $i$ to $\sqrt{n} \varepsilon_n$.
For $j \in \mS$,
$t_j \in (\sqrt{n} -\varepsilon_n, \sqrt{n} \varepsilon_n)$.
The tails of this integral are exponentially small,
so we add them to simplify the expression.
The central asymptotics becomes
\[
    \frac{A(\vzeta)}
        {\prod_{j \in \mS} (1 - \zeta_j)}
    \frac{i^{|\mC|}}
        {(2 \pi)^d n^{|\mS|/2}}
    \int_{(i - \infty, i + \infty)^{|\mC|} \times (-\infty, +\infty)^{|\mS|}}
    \frac{e^{- \vt^T \mH \vt / 2}}
        {\prod_{j \in \mC} t_j}
    d \vt.
\]
To further simplify the expression,
we inject the following identity,
valid for any $t$ with positive imaginary part
\[
    \frac{1}{t} =
    - i
    \int_0^{+\infty}
    e^{i t u}
    d u.
\]
Applying this identity for $t = t_j$ and $j \in \mC$ yields
\[
    \frac{A(\vzeta)}
        {(2 \pi)^d n^{|\mS|/2} \prod_{j \in \mS} (1 - \zeta_j)}
    \int_{(i - \infty, i + \infty)^{|\mC|} \times (-\infty, +\infty)^{|\mS|}}
    \int_{(0, +\infty)^{|\mC|}}
    \exp \bigg(
        - \frac{1}{2} \vt^T \mH \vt
        + i \sum_{j \in \mC} u_j t_j
    \bigg)
    d \vu
    d \vt.
\]
Let $(\vu_{\mC}, \vzero_{\mS})$ denote the vector of dimension $d$
with $u_j$ at any index $j \in \mC$
and $0$ at any index $j \in \mS$.
The matrix $\mH$ is definite positive,
so its inverse and square root are well defined.
We interchange the integrals and
apply the change of variable $\vs = \sqrt{\mH} \vt$
\begin{align*}
    \frac{A(\vzeta)}
        {\prod_{j \in \mS} (1 - \zeta_j)}
&
    \frac{1}
        {(2 \pi)^d \sqrt{n^{|\mS|} \det(\mH)}}
\\&\times
    \int_{(0, +\infty)^{|\mC|}}
    \int_{(i - \infty, i + \infty)^{|\mC|} \times (-\infty, +\infty)^{|\mS|}}
    \exp \left(
        - \frac{1}{2} \vs^T \vs
        + i (\vu_{\mC}, \vzero_{\mS})^T \mH^{-1/2} \vs
    \right)
    d \vs
    d \vu.
\end{align*}
The argument of the exponential is factorized
\begin{align*}
    \frac{A(\vzeta)}
        {\prod_{j \in \mS} (1 - \zeta_j)}
&
    \frac{1}
        {(2 \pi)^d \sqrt{n^{|\mS|} \det(\mH)}}
    \iint
\\
&    \exp \left(
        - \frac{1}{2}
        \left(\vs - i \mH^{-1/2} (\vu_{\mC}, \vzero_{\mS}) \right)^T
        \left(\vs - i \mH^{-1/2} (\vu_{\mC}, \vzero_{\mS}) \right)
        - \frac{1}{2} (\vu_{\mC}, \vzero_{\mS})^T \mH^{-1} (\vu_{\mC}, \vzero_{\mS})
    \right)
    d \vs
    d \vu.
\end{align*}
The integrals with respect to each $s_j$ are now separated,
each a Gaussian integral with value $\sqrt{2 \pi}$.
We deduce that the central part has asymptotics
\[
    \frac{A(\vzeta)}{\prod_{j \in \mS} (1 - \zeta_j)}
    \frac{1}{ \sqrt{ (2 \pi)^d n^{|\mS|} \det(\mH)} }
    \int_{(0, +\infty)^{|\mC|}}
    \exp \left(
        - \frac{1}{2} (\vu_{\mC}, \vzero_{\mS})^T \mH^{-1} (\vu_{\mC}, \vzero_{\mS})
    \right)
    d \vu.
\]

\paragraphproof{Tail.}
The proof that the tail is
exponentially small compared to the central part
is the same as in the proof of \cref{th_large-powers-limit}.
\end{proof}

    \subsection{Asymptotic Expansion}
    \label{sec_asymptotic_expansion}

We now explain how to obtain more precise asymptotic results,
applied in \Cref{sec_error-terms}.
Since \cref{th_large-powers-limit} is a particular case
of \cref{th_large-powers-singularity},
we focus on the latter.
%
In the critical and mixed cases,
we find an asymptotic expansion where each term corresponds to a power of $1/\sqrt{n}$.
This is somewhat surprising, as for singularity analysis \cite[Section VI]{flajolet2009analytic}
and large powers \cite[Theorem VIII.8]{flajolet2009analytic},
the terms of the asymptotic expansion correspond to powers of $1/n$.

\begin{theorem}
\label{th:asympt_expansion}
We keep the notations of \cref{th_large-powers-singularity}
and replace our assumption
$\vkappa_n = \vlambda n + \bigO(n^{\delta})$
for some $\delta < 1/2$
with
\[
    \vkappa_n = \vlambda n + \bigO(1).
\]
Then for any $r \geq 1$,
there exist functions $a_{k,n}$ of $\vkappa_n - \vlambda n$
(which are bounded with respect to $n$) such that
\begin{equation}
\label{eq:asympt_expansion}
    [\vz^{\vkappa_n}]
    \frac{A(\vz) B(\vz)^n}
    {\prod_{j \in \mS \cup \mC} (1 - z_j)}
    =
    \frac{B(\vzeta)^n}{\vzeta^{\vkappa_n}}
    \frac{1}{n^{|\mS|/2}}
    \left(
    \sum_{k=0}^{r-1}
    a_{k,n}
    n^{-k/2}
    + \bigO(n^{-r/2})
    \right).
\end{equation}
An algorithm computing them is provided in the proof.
In particular, $a_0 := a_{0,n}$ is independent of $n$ and equal to
\begin{align*}
    a_0
%
%
    &=
    \frac{1}{\sqrt{(2 \pi)^d \det(\mH)}}
    \int_{(0, +\infty)^{|\mC|}}
    e^{- \vu^T \mM \vu}
    d \vu
\end{align*}
and we will express $a_{1,n}$ in \cref{th:compute_a1}.
Furthermore, in the particular case $\mC = \emptyset$,
then $a_{k,n} = 0$ for all odd $k$.
\end{theorem}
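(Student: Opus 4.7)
The plan is to refine the proof of \cref{th_large-powers-singularity} by retaining more terms in every Taylor expansion and integrating term by term against the Gaussian weight. I would keep the same contour $\gamma_n$ and the same change of variable $\vt = \sqrt{n}\,\vtheta$. After rescaling, the integrand on the central part is the product of five analytic factors evaluated at $\vtheta = \vt/\sqrt{n}$: the amplitude $A(\vzeta e^{i \vtheta})$; the oscillation $e^{-i (\vkappa_n - \vlambda n)^T \vtheta}$, which unlike in \cref{th_large-powers-singularity} no longer tends to $1$ since $\vkappa_n - \vlambda n$ is only $\bigO(1)$, and thus contributes nontrivial Taylor coefficients depending on $n$ solely through $\vkappa_n - \vlambda n$; the smooth subcritical factor $\prod_{j \in \mS} (1 - \zeta_j e^{i \theta_j})^{-1}$; the desingularized critical factor $\prod_{j \in \mC} (1 - \zeta_j e^{i \theta_j})^{-1} = \prod_{j \in \mC} (-i \theta_j)^{-1} \cdot h(\vtheta_{\mC})$ where $h$ is analytic at $\vzero$ with $h(\vzero) = 1$; and $e^{-n \phi(\vtheta)} = e^{-\frac{1}{2} \vtheta^T \mH \vtheta} \cdot e^{-n R(\vtheta)}$ where $R = \phi - \frac{1}{2} \vtheta^T \mH \vtheta = \sum_{k \geq 3} c_k(\vtheta)$, with $c_k$ homogeneous of degree $k$. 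Combining the Jacobian $n^{-d/2}$ with the factor $n^{|\mC|/2}$ produced by $\prod_{j \in \mC} (-i \theta_j)^{-1}$ yields the announced prefactor $n^{-|\mS|/2}$.

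Since $n R(\vt/\sqrt{n}) = \sum_{k \geq 3} n^{1-k/2} c_k(\vt)$ and the other four factors are analytic at $\vtheta = \vzero$, their product $G(\vt/\sqrt{n}; \vkappa_n - \vlambda n)$ admits a formal expansion $\sum_{k \geq 0} n^{-k/2} Q_{k,n}(\vt)$, where each $Q_{k,n}$ is a polynomial in $\vt$ whose coefficients depend on $n$ only through the bounded vector $\vkappa_n - \vlambda n$. Truncating at order $r$ and controlling the remainder via Cauchy estimates on a fixed polydisc around $\vzero$ gives $G = \sum_{k=0}^{r-1} n^{-k/2} Q_{k,n}(\vt) + n^{-r/2} E_{r,n}(\vt)$ with $|E_{r,n}(\vt)|$ bounded by a fixed polynomial in $\vt$ uniformly in $n$ on the central domain $\|\vt\| \leq \sqrt{n}\, \varepsilon_n$. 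Integrating each $Q_{k,n}(\vt)\, e^{-\frac{1}{2} \vt^T \mH \vt} \prod_{j \in \mC} t_j^{-1}$ on the rescaled contour (extended to its full natural domain at exponentially small cost, as in the proof of \cref{th_large-powers-singularity}) defines $a_{k,n}$ and yields \eqref{eq:asympt_expansion}; the remainder contributes $\bigO(n^{-r/2})$ after Gaussian integration, and the tail estimate from \cref{th_large-powers-singularity} handles the complement of $\gamma_n$. This constructive procedure is the announced algorithm, and keeping only the constant term of $G$ reproduces the integral form of $a_0$ from \cref{th_large-powers-singularity}. The main technical obstacle is the uniform control of $E_{r,n}$ on the growing central domain together with the bookkeeping of the $\vkappa_n - \vlambda n$ dependence, both of which reduce to standard Cauchy estimates combined with the Gaussian decay.

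For the parity statement in the case $\mC = \emptyset$, I would establish the invariant that, in the expansion of $G$, the polynomial $Q_{k,n}(\vt)$ is a sum of homogeneous polynomials of degrees all of the same parity as $k$. For any analytic function evaluated at $\vt/\sqrt{n}$, the order-$k$ Taylor coefficient is homogeneous of degree $k$, so the invariant holds for the four smooth factors. For $e^{-n R(\vt/\sqrt{n})}$, a monomial $\prod_i c_{k_i}(\vt)^{j_i}$ arises at order $n^{-\sum j_i(k_i - 2)/2}$ with degree $\sum_i j_i k_i$, which differs from $\sum_i j_i(k_i - 2)$ by the even number $2\sum_i j_i$, so the invariant is preserved. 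Multiplying factors preserves it as well, since the product of two homogeneous polynomials of degrees of parities $p_1$ and $p_2$ is homogeneous of degree of parity $p_1 + p_2$. When $\mC = \emptyset$, the integration domain $\reals^{|\mS|}$ is symmetric and the Gaussian weight $e^{-\frac{1}{2} \vt^T \mH \vt}$ is even, so odd-degree polynomials integrate to zero, forcing $a_{k,n} = 0$ for odd $k$.
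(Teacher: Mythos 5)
Your proposal is correct and follows essentially the same route as the paper's proof: the paper packages your "product of five factors" into a single two-variable function $F(\vt, y)$ (Eq.~\eqref{eq:Fvty}) whose Taylor expansion in $y = n^{-1/2}$ yields the polynomials you call $Q_{k,n}$, and it controls the remainder the same way before Gaussian integration and tail disposal. Your parity argument via the homogeneity-degree invariant is also the argument the paper invokes (the paper states it tersely as "by construction" for $F_k$); your direct reasoning on the symmetric domain $\reals^{|\mS|}$ and the paper's route through the change of variable $\vs = \sqrt{\mH}\vt$ are trivially equivalent.
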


\begin{proof}
We fix some $r \geq 1$.
Let us say that a sequence $u_n$
\emph{tends superpolynomially fast} to $0$
if for any integer $k$, the sequence $n^k u_n$ tends to $0$.
For example, $e^{-\sqrt{n}}$ tends superpolynomially fast to $0$.
We proved that the tails tends superpolynomially fast to $0$,
so its contribution to \cref{eq:asympt_expansion}
is negligible.
We apply the change of variable $\vt = \sqrt{n} \vtheta$
in the left-hand side of \cref{eq:asympt_one}
and deduce
\[
    [\vz^{\vkappa_n}]
    \frac{A(\vz) B(\vz)^n}
    {\prod_{j \in \mS \cup \mC} (1 - z_j)}
    =
    \frac{B(\vzeta)^n}{\vzeta^{\vkappa_n}}
    \frac{1}{(2 \pi)^d n^{d/2}}
    \int
    \frac{A(\vzeta e^{i \vt / \sqrt{n}}) e^{- n \phi(\vt / \sqrt{n})}}
    {\prod_{j \in \mS \cup \mC} (1 - \zeta_j e^{i t_j / \sqrt{n}})}
    \frac{d \vt}
    {e^{i (\vkappa_n - \vlambda n)^T \vt / \sqrt{n}}}
    \left(1 + \bigO(n^{-r/2}) \right)
\]
where the domain of integration is the following.
For $j \in \mC$,
$t_j$ is on the line from $-\sqrt{n} \varepsilon_n$ to $i$,
then on the line from $i$ to $\sqrt{n} \varepsilon_n$.
For $j \in \mS$,
$t_j \in (- \sqrt{n} \varepsilon_n, \sqrt{n} \varepsilon_n)$.
Let us set
\begin{equation}
\label{eq:Fvty}
    F(\vt, y) :=
    \frac{A(\vzeta e^{i \vt y})
    \exp \left( \frac{1}{2} \vt^T \mH \vt - \phi(\vt y) y^{-2} \right)}
    {\left(\prod_{j \in \mC} \frac{1 - e^{i t_j y}}{- i t_j y} \right)
    \prod_{j \in \mS} (1 - \zeta_j e^{i t_j y}) }
    \frac{1}{e^{i (\vkappa_n - \vlambda n)^T \vt y}}.
\end{equation}
Recalling that $\vzeta_j = 1$ for $j \in \mC$
and injecting $F(\vt, y)$ in our previous expression leads to
\begin{equation}
\label{eq:FFF}
    [\vz^{\vkappa_n}]
    \frac{A(\vz) B(\vz)^n}
    {\prod_{j \in \mS \cup \mC} (1 - z_j)}
    =
    \frac{B(\vzeta)^n}{\vzeta^{\vkappa_n}}
    \frac{i^{|\mC|}}{(2 \pi)^d n^{|\mS|/2}}
    \int
    \frac{F(\vt, n^{-1/2})}{\prod_{j \in \mC} t_j}
    e^{- \vt^T \mH \vt / 2}
    d \vt
    \left(1 + \bigO(n^{-r/2}) \right).
\end{equation}
The term
$\frac{1 - e^{i t_j y}}{- i t_j y}$ has a Taylor expansion at $y=0$.
Observe also that the valuation of
$\frac{1}{2} \vt^T \mH \vt - \phi(\vt y) y^{-2}$
in $y$ is at least $1$,
and that in the series expansion at $(\vt, y) = (\vzero, 0)$,
each monomial where the sum of the exponents of the $t_j$
is $p$ has $y$ at the power $p-2$.
This implies that $F(\vt, y)$ has a Taylor expansion at $y = 0$,
denoted by
\[
    F(\vt, y) =
    \sum_{k = 0}^{r-1}
    F_k(\vt) y^k
    + \bigO \left( (t_1 \cdots t_d)^{\beta_r} y^{r} \right)
\]
where $\beta_r$ is some sequence depending on $r$
(we could prove $\beta_r \leq 3 r$ but will not need it)
and each $F_k(\vt)$ is a polynomial in $\vt$
and in $\vkappa_n - \vlambda n$,
which is a bounded sequence in $n$.
%
With our domain of integration,
we have for any $\beta_r$
\[
    \int
    \frac{\bigO \left( (t_1 \cdots t_d)^{\beta_r} n^{-r/2} \right)}{\prod_{j \in \mC} t_j}
    e^{- \vt^T \mH \vt / 2}
    d \vt
    =
    \bigO(n^{-r/2})
\]
so, injecting in \cref{eq:FFF}
the Taylor expansion of $F(\vt, \vy)$
at $\vy = n^{-1/2}$, we obtain
\[
    [\vz^{\vkappa_n}]
    \frac{A(\vz) B(\vz)^n}
    {\prod_{j \in \mS \cup \mC} (1 - z_j)}
    =
    \frac{B(\vzeta)^n}{\vzeta^{\vkappa_n}}
    \frac{1}{n^{|\mS|/2}}
    \left(
    \sum_{k=0}^{r-1}
    \left[
    \frac{i^{|\mC|}}{(2 \pi)^d}
    \int
    \frac{F_k(\vt)}{\prod_{j \in \mC} t_j}
    e^{- \vt^T \mH \vt / 2}
    d \vt
    \right] 
    n^{-k/2}
    + \bigO(n^{-r/2})
    \right).
\]
The tails of the integrals are exponentially small,
so we add them to simplify the expression.
After deformation, the domain of integration becomes
$(i - \infty, i + \infty)^{|\mC|} \times (-\infty, +\infty)^{|\mS|}$.
Defining
\[
    a_{k,n} :=
    \frac{i^{|\mC|}}{(2 \pi)^d}
    \int_{(i - \infty, i + \infty)^{|\mC|} \times (-\infty, +\infty)^{|\mS|}}
    \frac{F_k(\vt)}{\prod_{j \in \mC} t_j}
    e^{- \vt^T \mH \vt / 2}
    d \vt,
\]
we obtain \cref{eq:asympt_expansion}.

The expression of $a_0$ was obtained and simplified
in the proof of \cref{th_large-powers-singularity}.
Let us now show how to apply the same ideas to the other coefficients.
Write for any complex $t$ with positive imaginary part
\[
    \frac{i}{t}
    =
    \int_0^{+\infty}
    e^{i t u}
    d u,
\]
then
\[
    a_{k,n} =
    \frac{1}{(2 \pi)^d}
    \iint_{(0,+\infty)^{|\mC|}}
    \vt^{\vp}
    F_k(\vt)
    e^{- \vt^T \mH \vt / 2 + i (\vu_{\mC}, \vzero_{\mS})^T \vt}
    d \vu
    d \vt.
\]
We apply the change of variable $\vs = \sqrt{\mH} \vt$
and factorize the argument of the exponential
as in the proof of \cref{th_large-powers-singularity}
\begin{align*}
    a_{k,n}
    &=
    \frac{1}{(2 \pi)^d \sqrt{\det(\mH)}}
    \iint_{(0,+\infty)^{|\mC|}}
    F_k(\mH^{-1/2} \vs)
    e^{- \vs^T \vs / 2 + i (\vu_{\mC}, \vzero_{\mS})^T \mH^{-1/2} \vs}
    d \vu
    d \vs
\\
    &=
    \frac{1}{(2 \pi)^d \sqrt{\det(\mH)}}
    \iint_{(0,+\infty)^{|\mC|}}
    F_k(\mH^{-1/2} \vs)
    e^{- (\vs - i \mH^{-1/2} (\vu_{\mC}, \vzero_{\mS})^T (\vs - i \mH^{-1/2} (\vu_{\mC}, \vzero_{\mS})) / 2 - \vu^T \mM \vu / 2}
    d \vu
    d \vs.
\end{align*}
We interchange the integrals and shift
the vector $\vs$ by $i \mH^{-1/2} (\vx, \vzero)$
\[
    a_{k,n}
    =
    \frac{1}{(2 \pi)^d \sqrt{\det(\mH)}}
    \int_{(0,+\infty)^{|\mC|}}
    \int
    F_k\left( \mH^{-1/2} \vs + i \mH^{-1} (\vu_{\mC}, \vzero_{\mS}) \right)
    e^{- \vs^T \vs / 2 - \vu^T \mM \vu / 2}
    d \vs
    d \vu.
\]
Recall that $F_k(\vt)$ is a polynomial,
so there exist polynomials $G_{k, \vp}(\vu)$
(which are also polynomials in $\vkappa_n - \vlambda n$)
and a finite set $S_k \subset \integers_{\geq 0}^d$ such that
\[
    F_k\left( \mH^{-1/2} \vs + i \mH^{-1} (\vu_{\mC}, \vzero_{\mS}) \right)
    =
    \sum_{\vp \in S_k}
    \vs^{\vp}
    G_{k, \vp}(\vu).
\]
We deduce
\[
    a_{k,n} =
    \frac{1}{(2 \pi)^d \sqrt{\det(\mH)}}
    \sum_{\vp \in S_k}
    \int_{(0,+\infty)^{|\mC|}}
    \int
    \vs^{\vp}
    G_{k, \vp}(\vu)
    e^{- \vs^T \vs / 2 - \vu^T \mM \vu / 2}
    d \vs
    d \vu.
\]
Each integral with respect to $s_j$, for $j \in \mS \cup \mC$,
is simplified using the formula
\[
    \int_{(i - \infty, i + \infty)}
    s^p e^{- s^2 / 2} d s
    =
    \begin{cases}
    0 & \text{if $p$ is odd,}\\
    \sqrt{2 \pi}
    \frac{(2 q)!}{2^q q!}
    & \text{if $p = 2 q$.}
    \end{cases}
\]
We observe that in the expression of $a_{k,n}$,
only vectors $\vp$ where all coefficients are even may contribute.
Let us define
\[
    S_k^{\star} =
    \{ \vq,\ 2 \vq \in S_k \},
\]
then
\begin{equation}
\label{eq:ak}
    a_{k,n} =
    \frac{1}{ \sqrt{(2 \pi)^d \det(\mH)} }
    \sum_{\vq \in S_k^{\star}}
    \bigg(
        \prod_{j \in \mS \cup \mC}
        \frac{(2 q_j)!}{2^{q_j} q_j!}
    \bigg)
    \int_{(0,+\infty)^{|\mC|}}
    G_{k, 2 \vq}(\vu)
    e^{ - \vu^T \mM \vu / 2}
    d \vu.
\end{equation}

In the particular case $\mC = \emptyset$, the function $F_k$
is simply composed with $\mH^{-1/2} \vs$.
If $k$ is odd, by construction, the sum of the exponents
in each monomial of $F_k(\vt)$ is odd.
This implies that each monomial of $F_k(\mH^{-1/2} \vs)$
contains some $s_j$ at an odd power,
so $a_{k,n} = 0$.

An alternative approach to compute the coefficients $a_{k,n}$
is to apply the change of variable $- \vt^T \vt / 2 = \phi(\vtheta)$
in the left-hand side of \cref{eq:asympt_one}.
We refer the interested reader to \cite[Chapter 5]{analytic2013pemantle}.
\end{proof}

Our next result provides an explicit expression
for the term $a_{1,n}$ of the asymptotic expansion
from \cref{th:asympt_expansion}.

\begin{lemma}
\label{th:compute_a1}
With $\vzeta$ defined as in \cref{th:asympt_expansion},
define the polynomial
\[
    \phi_3(\vt) :=
    [y^3]
    \log \left( \frac{B(\vzeta e^{\vt y})}{B(\vzeta)} \right).
\]
Observe that the sum of the exponents of each of its monomials is $3$.
Let $\phi_3^{\star}(\vu)$ denote the polynomial
obtained by replacing each monomial $t_j t_k t_{\ell}$
(with $j$, $k$, $\ell$ in $\mS \cup \mC$ not necessarily distinct) with
\begin{align}
    \label{eq:replace_in_phi3t}
    &- \ve_j^T \mH^{-1} (\vu_{\mC}, \vzero_{\mS})
    \ve_k^T \mH^{-1} (\vu_{\mC}, \vzero_{\mS})
    \ve_{\ell}^T \mH^{-1} (\vu_{\mC}, \vzero_{\mS})
    \\&
    + \left(
    (\mH^{-1})_{k,\ell} \ve_j
    + (\mH^{-1})_{j,\ell} \ve_k
    + (\mH^{-1})_{j,k} \ve_{\ell}
    \right)^T
    \mH^{-1} (\vu_{\mC}, \vzero_{\mS}). \nonumber
\end{align}
Let $\vb$ denote the vector of size $d$ with $b_j = -1/2$ for $j \in \mC$
and $b_j = \frac{\zeta_j}{1 - \zeta_j}$ for $j \in \mS$.
Then the term $a_{1,n}$ in the asymptotic expansion from \cref{th:asympt_expansion}
is equal to
\begin{align*}
    a_{1,n} =&
    \frac{A(\vzeta)}{\prod_{j \in \mS} (1 - \zeta_j)}
    \frac{1}{ \sqrt{ (2 \pi)^d \det(\mH) }}
\\& \times
    \int_{(0,+\infty)^{|\mC|}}
    \left(
    \phi_3^{\star}(\vu)
    -
    \left[
    \frac{\nabla_A(\vzeta)}{A(\vzeta)}
    + \vb
    - (\vkappa_n - \vlambda n)
    \right]^T \mH^{-1} (\vu_{\mC}, \vzero_{\mS})
    \right)
    e^{- \vu \mM \vu / 2}
    d \vu.
\end{align*}
\end{lemma}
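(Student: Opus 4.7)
The strategy is to compute $a_{1,n}$ directly from the integral obtained in the proof of \Cref{th:asympt_expansion}. Recalling $F(\vt,y)$ from~\eqref{eq:Fvty}, one has
\[
    a_{1,n} =
    \frac{i^{|\mC|}}{(2\pi)^d}
    \int_{(i-\infty,i+\infty)^{|\mC|}\times(-\infty,+\infty)^{|\mS|}}
    \frac{F_1(\vt)}{\prod_{j\in\mC} t_j}\,
    e^{-\vt^T\mH\vt/2}\,d\vt,
\]
where $F_1(\vt):=[y^1]F(\vt,y)$. Since $F(\vt,y)$ is a product of five factors, the cleanest way to extract $F_1$ is via the identity $F_1(\vt)=F(\vt,0)\,\partial_y\log F(\vt,y)|_{y=0}$, reducing the task to differentiating each factor separately at $y=0$.

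The limit $y\to 0$ gives $F(\vt,0)=A(\vzeta)/\prod_{j\in\mS}(1-\zeta_j)$. The logarithmic derivative at $y=0$ then splits into five contributions, one from each factor. The factor $A(\vzeta e^{i\vt y})$ contributes $i\vt^T \nabla A(\vzeta)/A(\vzeta)$. Matching the expansion $\phi(\vtheta)=\tfrac12\vtheta^T\mH\vtheta+i\phi_3(\vtheta)+O(\|\vtheta\|^4)$ against the definition of $\phi_3$ in the lemma shows that the exponential factor contributes $-i\phi_3(\vt)$. The expansion $(1-e^{it_jy})/(-it_jy)=1+it_jy/2+O(y^2)$ yields $-\tfrac{i}{2}\sum_{j\in\mC}t_j$, i.e.\ the $\mC$-block of $\vb$. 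Differentiating $\log(1-\zeta_je^{it_jy})$ for $j\in\mS$ yields $i\sum_{j\in\mS}\tfrac{\zeta_j}{1-\zeta_j}t_j$, the $\mS$-block of $\vb$. Finally, the phase factor $e^{-i(\vkappa_n-\vlambda n)^T\vt y}$ contributes $-i(\vkappa_n-\vlambda n)^T\vt$. Summing and factoring $i$ gives
\[
    F_1(\vt)=F(\vt,0)\cdot i\left[-\phi_3(\vt)+\vt^T\!\left(\tfrac{\nabla A(\vzeta)}{A(\vzeta)}+\vb-(\vkappa_n-\vlambda n)\right)\right].
\]

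I then plug this into the integral and run the same sequence of transformations used in the proof of \Cref{th:asympt_expansion}: represent $1/t_j$ for $j\in\mC$ as $-i\int_0^\infty e^{it_ju_j}du_j$, change variables $\vs=\sqrt{\mH}\,\vt$, and shift $\vs\mapsto \vs+i\mH^{-1/2}(\vu_\mC,\vzero_\mS)$ to reveal the weight $e^{-\vu^T\mM\vu/2}$. Writing $\vr'=\mH^{-1}(\vu_\mC,\vzero_\mS)$ and $\vs'=\mH^{-1/2}\vs$, the remaining $\vs$-integral reduces to the Gaussian expectation $\mean_{\vs'\sim\mathcal N(0,\mH^{-1})}\bigl[-\phi_3(\vs'+i\vr')+(\vs'+i\vr')^T\vv\bigr]$, with $\vv:=\tfrac{\nabla A(\vzeta)}{A(\vzeta)}+\vb-(\vkappa_n-\vlambda n)$. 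Odd Gaussian moments kill the $\vs'^T\vv$ piece, leaving the linear contribution $i\vv^T\mH^{-1}(\vu_\mC,\vzero_\mS)$, which produces the second term announced in the lemma.

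The main combinatorial step, and the principal obstacle, is the evaluation of $\mean[\phi_3(\vs'+i\vr')]$. Writing $\phi_3(\vt)=\tfrac{1}{6}T(\vt,\vt,\vt)$ for the associated symmetric trilinear form $T$ and expanding multilinearly, only the degrees $0$ and $2$ in $\vs'$ survive the Gaussian expectation; using $\mean[\vs'_j\vs'_k]=(\mH^{-1})_{jk}$, these evaluate to $-i\phi_3(\vr')$ and $\tfrac{i}{2}\sum_{j,k,\ell}K^{(3)}_{jk\ell}(\log\vzeta)(\mH^{-1})_{jk}(\vr')_\ell$ respectively. Comparing monomial by monomial against the substitution rule~\eqref{eq:replace_in_phi3t} shows that their sum is exactly $i\,\phi_3^{\star}(\vu)$. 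Collecting all factors of $i$ with the prefactor $i^{|\mC|}/(2\pi)^d$ and the Jacobian $1/\sqrt{\det\mH}$ (two of these factors combine to give a $-1$ which is absorbed by flipping the sign inside the bracket) yields the formula announced by the lemma.
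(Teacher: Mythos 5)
Your argument is correct and follows essentially the same route as the paper: compute $F_1(\vt)$ by first-order expansion of each of the five factors of $F(\vt,y)$, then reduce the residual $\vs$-integral to Gaussian second moments. You organize the first step via logarithmic derivatives and the second via the Wick/Isserlis identity for the trilinear form $T$, whereas the paper Taylor-expands each factor directly and carries out the corresponding monomial-by-monomial substitution with the even-moment replacement rule, but these are the same computations.
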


\begin{proof}
Our first step is to compute $F_1(\vt) = [y^1] F(\vt, y)$,
where $F(\vt, y)$ is defined in \cref{eq:Fvty}.
Using the expression of $\phi_3(\vt)$ from the lemma,
we have the following Taylor expansions
\begin{align*}
    A(\vzeta e^{i \vt y}) &=
    A(\vzeta) + i \nabla_A(\vzeta) \vt y + \bigO(y^2),
\\
    \exp \left(
    \frac{1}{2} \vt^T \mH \vt - \phi(\vt y) y^{-2}
    \right)
    &=
    1 - i \phi_3(\vt) y + \bigO(y^2),
\\
    \frac{1}{\frac{1 - e^{i t_j y}}{-i t_j y}}
    &=
    1 - \frac{i}{2} t_j y + \bigO(y^2),
\\
    \frac{1}{1 - \zeta_j e^{i t_j y}} &=
    \frac{1}{1 - \zeta_j} + \frac{\zeta_j}{(1 - \zeta_j)^2} i t_j y + \bigO(y^2),
\\
    \frac{1}{e^{i (\vkappa_n - \vlambda n)^T \vt y}} &=
    1 - i (\vkappa_n - \vlambda n)^T \vt y + \bigO(y^2).
\end{align*}
Using the vector $\vb$ from the current lemma,
we obtain the following Taylor expansion for the products
\[
    \frac{1}{\prod_{j \in \mC} \frac{1 - e^{i t_j y}}{- i t_j y}}
    \frac{1}{\prod_{j \in \mS} (1 - \zeta_j e^{i t_j y})}
    =
    1 + i \vb^T \vt y + \bigO(y^2).
\]
We deduce
\[
    F_1(\vt) =
    \frac{- i A(\vzeta)}{\prod_{j \in \mS} (1 - \zeta_j)}
    \left(
    \phi_3(\vt)
    - \left[
    \frac{\nabla_A(\vzeta)}{A(\vzeta)}
    + \vb
    - (\vkappa_n - \vlambda n)
    \right]^T \vt
    \right).
\]

Following the proof of \cref{th:asympt_expansion},
our next step is to compute
\[
    \sum_{\vq \in S_1^{\star}}
    \bigg(
    \prod_{j \in \mS \cup \mC}
    \frac{(2 q_j)!}{2^{q_j} q_j!}
    \bigg)
    G_{1, 2 \vq}(\vu).
\]
It is obtained by the following transformation of $F_1(\vt)$.
Replace $\vt$ with $\mH^{-1/2} \vs + i \mH^{-1} (\vu_{\mC}, \vzero_{\mS})$,
then for each $j$ and $k$, replace $s_j^k$ with $0$ if $k$ is odd,
and with $\frac{(2q)!}{2^q q!}$ if $k = 2q$.
Clearly, any $t_j$ at the power $1$ will be transformed into $i \mH^{-1} (\vu_{\mC}, \vzero_{\mS})$. By construction, $\phi_3(\vt)$ is a polynomial where the sum of exponents in each momonial is $3$.
Consider three integers $j, k, \ell$ in $\mS \cup \mC$.
The monomial $t_j t_k t_{\ell}$ will be transformed into
\[
    \left( \ve_j^T \mH^{-1/2} \vs + i \ve_j^T \mH^{-1} (\vu_{\mC}, \vzero_{\mS}) \right)
    \left( \ve_k^T \mH^{-1/2} \vs + i \ve_k^T \mH^{-1} (\vu_{\mC}, \vzero_{\mS}) \right)
    \left( \ve_{\ell}^T \mH^{-1/2} \vs + i \ve_{\ell}^T \mH^{-1} (\vu_{\mC}, \vzero_{\mS}) \right)
\]
where each $s_h$ at an odd power is replaced by $0$
and each $s_h^2$ is replaced by $1$, for all $h \in \mS \cup \mC$.
The expression becomes
\begin{align*}
    - i
    \ve_j^T \mH^{-1} (\vu_{\mC}, \vzero_{\mS})
    \ve_k^T \mH^{-1} (\vu_{\mC}, \vzero_{\mS})
    \ve_{\ell}^T \mH^{-1} (\vu_{\mC}, \vzero_{\mS})
    &+ i \ve_j^T \mH^{-1} (\vu_{\mC}, \vzero_{\mS})
    \sum_{h=1}^d \ve_k^T \mH^{-1/2} \ve_h \ve_{\ell}^T \mH^{-1/2} \ve_h
    \\&+ i \ve_k^T \mH^{-1} (\vu_{\mC}, \vzero_{\mS})
    \sum_{h=1}^d \ve_j^T \mH^{-1/2} \ve_h \ve_{\ell}^T \mH^{-1/2} \ve_h
    \\&+ i \ve_{\ell}^T \mH^{-1} (\vu_{\mC}, \vzero_{\mS})
    \sum_{h=1}^d \ve_j^T \mH^{-1/2} \ve_h \ve_k^T \mH^{-1/2} \ve_h
\end{align*}
which, given the symmetry of $\mH^{-1/2}$,
is equal to \eqref{eq:replace_in_phi3t}.
With $\phi_3^{\star}(\vu)$ defined as in the lemma,
we deduce that after evaluating $F_1(\vt)$ at $\vt = \mH^{-1/2} \vs + i \mH^{-1} (\vu_{\mC}, \vzero_{\mS})$, then replacing for each $j$ any odd power of $s_j$ with $0$ and $s_j^2$ with $1$, we obtain
\[
    \sum_{\vq \in S_1^{\star}}
    \bigg(
    \prod_{j \in \mS \cup \mC}
    \frac{(2 q_j)!}{2^{q_j} q_j!}
    \bigg)
    G_{1, 2 \vq}(\vu)
=
    \frac{A(\vzeta)}{\prod_{j \in \mS} (1 - \zeta_j)}
    \left(
    \phi_3^{\star}(\vu)
    - \left[
    \frac{\nabla_A(\vzeta)}{A(\vzeta)}
    + \vb
    - (\vkappa_n - \vlambda n)
    \right]^T
    \mH^{-1} (\vu_{\mC}, \vzero_{\mS})
    \right).
\]
Injecting this in \cref{eq:ak} for $k=1$
yields the expression of $a_{1,n}$ from the current lemma.
\end{proof}

\newpage

\section{Properties and Proofs for the General Independent Culture (GIC)}\label{sec_appendix-gic}

\subsection{Satisfying the Conditions of the Asymptotic Theorems in \Cref{sec_appendix-Elie}}
\label{sec_P_theorem_conditions}

In this paper, the characteristic polynomial \( P(\vx) \) encodes a probability distribution over the preference rankings of a voter. This guarantees its convergence on a neighborhood \( \Omega \) of \( 0 \).  
Since we assume that the distribution of voter preferences over rankings is generic, \ie every ranking \( r \) has a positive probability, it follows that all coefficients of the characteristic polynomial \( P(\vx) \) from \Cref{def_def-of-P} are positive.  
The following result establishes that this assumption ensures that the remaining conditions required for the application of our asymptotic theorems (\Cref{th_large-powers-limit,th_large-powers-singularity,th:asympt_expansion}) are satisfied.

\begin{lemma}
\label{th_conditions-P-satisfied}
Consider a polynomial $P(x_1, \ldots, x_d)$
where for any $\mX \subseteq \{1, \ldots, d\}$,
the monomial $\prod_{j \in \mX} x_j$ has a positive coefficient,
while any other monomial has a zero coefficient.
As usual, define its cumulant generating function as
\[
    K : \vt \mapsto \log(P(e^{\vt})).
\]
Then $P(\vx)$ has full rank (\Cref{def:rank})
and is aperiodic (\Cref{def:periodicity}).
Furthermore, for any $\vlambda \in (0, 1)^d$
and $\vl$ on the boundary of $[-\infty, +\infty]^d$,
the function $t \mapsto K(\vt) - \vlambda^T \vt$
tends to infinity as $\vt$ tends to $\vl$.
Thus, there exists a unique vector $\vtau \in \reals^d$
where the gradient of $K$ vanishes.
\end{lemma}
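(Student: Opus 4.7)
The first two claims (full rank and aperiodicity) are essentially free. The support $S_P$ of $P$ is the full Boolean lattice $\{0,1\}^d$ by hypothesis, so in particular it contains $\vzero, \ve_1, \ldots, \ve_d$. Hence \Cref{th_full_rank_aperiodic} applies directly and gives both full rank and aperiodicity.

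The main task is the coercivity statement, and this is where the hypothesis $\vlambda \in (0,1)^d$ will be used in an essential way. The plan is to exploit the elementary log-sum-exp bound: for every $\mX \subseteq \{1,\ldots,d\}$,
\[
    K(\vt) - \vlambda^T \vt
    \;\geq\;
    (\vone_{\mX} - \vlambda)^T \vt + \log(p_{\mX}),
\]
where $\vone_{\mX}$ denotes the indicator vector of $\mX$. Fix any unit vector $\vd$ and choose $\mX(\vd) := \{j : d_j > 0\}$. Then
\[
    (\vone_{\mX(\vd)} - \vlambda)^T \vd
    \;=\;
    \sum_{d_j > 0} (1-\lambda_j) d_j \;+\; \sum_{d_j \leq 0} (-\lambda_j) d_j,
\]
and every term on the right is nonnegative because $\lambda_j \in (0,1)$. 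Moreover the sum is strictly positive: if any $d_j > 0$ the first sum is positive, otherwise $\vd \neq \vzero$ forces some $d_j < 0$ and the second sum is positive. By continuity in $\vd$ and compactness of the unit sphere, there exists $c > 0$ such that
\[
    \max_{\mX \subseteq \{1,\ldots,d\}} (\vone_{\mX} - \vlambda)^T \vd \;\geq\; c
\]
uniformly over unit $\vd$. Writing $\vt = \|\vt\|\,\vd$ and applying the lower bound above to the maximizing $\mX$ yields
\[
    K(\vt) - \vlambda^T \vt \;\geq\; c\,\|\vt\| + \min_{\mX} \log(p_{\mX}) \;\xrightarrow[\|\vt\| \to \infty]{}\; +\infty,
\]
which is exactly the boundary behavior claimed (the only boundary of $[-\infty,+\infty]^d$ reachable from $\reals^d$ corresponds to $\|\vt\| \to \infty$). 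The potential obstacle here would be ensuring that the uniform constant $c > 0$ exists; this is handled cleanly by the compactness argument, but it is the one spot where the openness of the interval $(0,1)$ for each $\lambda_j$ truly matters, since otherwise one could take $\vd = \pm \ve_j$ with $\lambda_j \in \{0,1\}$ and lose positivity.

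Finally, for existence and uniqueness of $\vtau$, the plan is to invoke strict convexity. \Cref{th_convexity-cumulant-gf} applies (the hypothesis of full rank was just established), so $K$ is strictly convex on $\reals^d$, hence so is $\vt \mapsto K(\vt) - \vlambda^T \vt$. A strictly convex function on $\reals^d$ that tends to $+\infty$ as $\|\vt\| \to \infty$ attains its minimum at a unique point, and the gradient vanishes precisely at this minimizer. This yields the unique $\vtau$ claimed.
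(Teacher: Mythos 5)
Your proof is correct, and it rests on the same key inequality as the paper's: discard all but one term of $P(e^{\vt})$ to get
$K(\vt) - \vlambda^T\vt \geq (\vone_{\mX}-\vlambda)^T\vt + \log p_{\mX}$ for a well-chosen $\mX$.
Where you diverge is in how $\mX$ is selected and how the conclusion is packaged. The paper fixes the target boundary point $\vl$ first and takes $\mX = \{j : l_j = +\infty\}$, then argues coordinate by coordinate that $(\vone_{\mX}-\vlambda)^T\vt \to +\infty$ as $\vt\to\vl$ (the coordinates going to $\pm\infty$ contribute $+\infty$, the others stay bounded). You instead pick $\mX$ from the sign pattern of the direction $\vd = \vt/\|\vt\|$ and use compactness of the unit sphere to extract a uniform coercivity constant $c>0$; this gives the cleaner, strictly stronger statement that $K(\vt)-\vlambda^T\vt \geq c\|\vt\| + O(1)$ globally, which in turn directly yields the boundary behavior (any $\vt\to\vl$ on the boundary has $\|\vt\|\to\infty$) \emph{and} the attainment of the minimum without having to argue via lower semicontinuity on the compactification. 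Note your compactness step can even be bypassed: one checks directly that $\max_{\mX}(\vone_{\mX}-\vlambda)^T\vd \geq \min_j\min(\lambda_j,1-\lambda_j)\,\|\vd\|_1$, giving an explicit constant. You also correctly flag that the strict inequalities $0<\lambda_j<1$ are exactly what keeps $c>0$, which is the same place the paper relies on $\vlambda\in(0,1)^d$. The two arguments are morally the same; yours trades the paper's pointwise analysis at $\vl$ for a uniform bound along rays, which is a modest gain in robustness at no extra cost.
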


\begin{proof}
By assumption, the support of $P(\vx)$
contains $\vzero$ and $\ve_j$ for each $j \in \{1, \ldots, d\}$.
\Cref{th_full_rank_aperiodic} then implies
that $P(\vx)$ has full rank and is aperiodic.

Now, consider $\vt$ converging to $\vl$.
Let $\mX$ denote the sets of indices $j$
such that $\vl_j = +\infty$.
Let $p_{\mX}$ denote the coefficient of $P(\vx)$
corresponding to the monomial of exponents in $\mX$.
Let $\vone_{\mX}$ denote the vector of dimension $d$
with value $1$ at indices from $\mX$,
and $0$ otherwise.
Then
\[
    P(e^{\vt})
    \geq
    p_{\mX} \exp \left(\vone_{\mX}^T \vt \right),
\]
so
\[
    K(\vt) - \vlambda^T \vt
    \geq
    \log(p_{\mX})
    + \vone_{\mX}^T \vt
    - \vlambda^T \vt
    \geq
    \log(p_{\mX})
    + (\vone_{\mX} - \vlambda)^T \vt.
\]
The vector $\vone_{\mX} - \vlambda$
has positive coefficients at the indices where $\vt$ tends to $+\infty$,
and negative coefficients at the indices where $\vt$ tends to a constant or $-\infty$.
Since $\vl$ is on the boundary of $[-\infty, +\infty]^d$,
it has at least one element equal to $-\infty$ or $+\infty$,
so $(\vone_{\mX} - \vlambda)^T \vt$ tends to infinity.
This implies that $K(\vt) - \vlambda^T \vt$
converges to $+\infty$ as well.

According to \Cref{th_convexity-cumulant-gf},
$K$ is strictly convex.
If its minimum is not reached
on the boundary of its domain of definition,
it must be reached in its interior and is unique.
Thus, there exists a unique point $\vtau$
where the gradient of $K$ vanishes.
\end{proof}

\subsection{Connection Between the Hessian Matrix and the Correlation Matrix}
\label{sec_link-matrices}

This section demonstrate the connection discussed in \Cref{sec_critical-case} between our Hessian matrix $\mH_{K}(\vtau)$ and the correlation matrix $R_m$ of \cite{niemi1968paradox} and \cite{krishnamoorthy2005condorcet}.

For each $j \in \mA$, let $X_j$ denote the random variable
taking value $1$ if candidate $j$ is ranked higher than candidate $m$,
and $0$ otherwise.
Set $\vY = 2 \vX - \vone$.
In this paper, we consider the covariance matrix $\mH_K(\vtau)$ of $\vX$,
whose coefficient $(j,k)$ is
\[
     \mean(X_j X_k) - \mean(X_j) \mean(X_k),
\]
while \cite{niemi1968paradox} and \cite{krishnamoorthy2005condorcet}
consider the correlation matrix $R_m$ of $\vY$,
whose coefficient $(j,k)$ is
\[
    \frac{\mean(Y_j Y_k) - \mean(Y_j) \mean(Y_k)}
    {\sqrt{(1 - \mean(Y_j)^2) (1 - \mean(Y_k)^2}}.
\]
Observe that
\begin{align*}
    \mean(Y_j Y_k) - \mean(Y_j) \mean(Y_k)
    &=
    \mean((2 X_j - 1) (2 X_k - 1)) - \mean(2 X_j - 1) \mean(2 X_k - 1)
    \\&=
    4 \left(
    \mean(X_j X_k) - \mean(X_j) \mean(X_k)
    \right).
\end{align*}
Let $D$ denote the invertible diagonal matrix
whose $j$th element is $(1 - \mean(Y_j)^2)^{-1/2}$,
then
\[
    R_m = 4 D \mH_K(\vtau) D.
\]
Thus, the change of variable $\vu = 2 D \vv$ links the two formulas
\[
    \frac{1}{\sqrt{\det(R_m)}}
    \int_{(0, +\infty)^{m-1}}
    e^{-\vv^T {R_m}^{-1} \vv / 2}
    d \vv
    =
    \frac{1}{\sqrt{\det(\mH_{K}(\vtau))}}
    \int_{(0, +\infty)^{m-1}}
    e^{-\vu^T {\mH_K(\vtau)}^{-1} \vu / 2}
    d \vu.   
\]

\subsection{Inversion Lemma for a Saddle Point Coordinate}\label{sec_appendix-flip-flop}

This section consists in proving \Cref{th_supercritical-becomes-subcritical}, rewritten below.
\begin{lemmastar}
Let $\mX$ and $\mY$ be two disjoint sets of adversaries. Let $\tilde{\vzeta}$ be the saddle point associated with $\proba\big(m \succ_{\valpha} \mathcal{X} \land m \preccurlyeq_{\valpha} \mathcal{Y} \big)$ . Let $j \in \mX$, $\mX' = \mX \setminus \{j\}$, and $\mY' = \mY \cup \{j\}$. 
Then the saddle point associated with $\proba\big(m \succ_{\valpha} \mathcal{X}' \land m \preccurlyeq_{\valpha} \mathcal{Y}' \big)$ has its $k$-th coordinate equal to $\tilde{\zeta}_k$ if $k \neq j$ and $\frac{1}{\tilde{\zeta}_j}$ if $k=j$.
\end{lemmastar}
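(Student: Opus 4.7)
The plan is to exploit a direct algebraic relationship between the two characteristic polynomials $P_{\mY}^{\mX}$ and $P_{\mY'}^{\mX'}$, which will translate into a simple change of variables between the two optimization problems defining the saddle points.

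First, I will write down the link between the polynomials. From the definition of $P_{\mY}^{\mX}$, moving index $j$ from $\mX$ to $\mY$ amounts to substituting $x_j \mapsto 1/y_j$ and introducing an extra factor $y_j$. More precisely, writing $(\vx_{{}_{\mX'}}, \vy_{{}_{\mY}}, y_j)$ for the arguments of the new polynomial:
\[
    P_{\mY'}^{\mX'}\bigl(\vx_{{}_{\mX'}},\vy_{{}_{\mY}}, y_j\bigr)
    = y_j \, P_{\mY}^{\mX}\bigl(\vx_{{}_{\mX'}}, 1/y_j, \vy_{{}_{\mY}}\bigr),
\]
where in the right-hand side the argument $1/y_j$ sits at the coordinate corresponding to $j \in \mX$.

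Second, I will translate this identity to the level of cumulant generating functions. Let $\tilde K$ denote the cumulant generating function associated with $(\mX,\mY)$ and $\tilde K'$ the one associated with $(\mX',\mY')$. Writing $\vt$ for the log-variables in the original setup and $\vt'$ for the new one, taking logarithms of the identity above gives
\[
    \tilde K'(\vt') \;=\; t'_j \;+\; \tilde K\bigl(\vt'_{{}_{\mX'}},\, -t'_j,\, \vt'_{{}_{\mY}}\bigr).
\]

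Third, I will apply the coordinate change $t_k = t'_k$ for $k \neq j$ and $t_j = -t'_j$ inside the objective function defining the new log saddle point, and simplify using the identity $\alpha_j + \beta_j = 1$:
\begin{align*}
    -\tilde K'(\vt') + (\vbeta_{{}_{\mX'}}, \valpha_{{}_{\mY'}})^{\transp} \vt'
    &= -t'_j - \tilde K(\vt) + \sum_{k \in \mX'} \beta_k t_k + \sum_{k \in \mY} \alpha_k t_k + \alpha_j t'_j \\
    &= -\tilde K(\vt) + \sum_{k \in \mX'} \beta_k t_k + (1 - \alpha_j)t_j + \sum_{k \in \mY} \alpha_k t_k \\
    &= -\tilde K(\vt) + (\vbeta_{{}_{\mX}}, \valpha_{{}_{\mY}})^{\transp} \vt.
\end{align*}
Thus the objective function of the new problem, expressed in the variables $\vt$, coincides with that of the old problem.

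Finally, since this bijective change of variables identifies the two objective functions, the unique maximizers (which exist by \Cref{th_conditions-P-satisfied}) are related by $\tilde\tau'_k = \tilde\tau_k$ for $k\neq j$ and $\tilde\tau'_j = -\tilde\tau_j$. Exponentiating yields $\tilde\zeta'_k = \tilde\zeta_k$ for $k\neq j$ and $\tilde\zeta'_j = 1/\tilde\zeta_j$, as claimed. There is no real obstacle here: the argument is essentially a bookkeeping exercise, and the only subtle point is carefully tracking the sign flip on the $j$-th coordinate together with the replacement of $\beta_j$ by $\alpha_j$, which cancel thanks to $\vbeta = \vone - \valpha$.
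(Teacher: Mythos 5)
Your proof is correct and takes essentially the same approach as the paper: establish the algebraic identity $P_{\mY'}^{\mX'}(\vx_{{}_{\mX'}},\vy_{{}_{\mY'}}) = y_j\, P_{\mY}^{\mX}(\vx_{{}_{\mX'}}, 1/y_j, \vy_{{}_{\mY}})$, then observe that inverting the $j$-th coordinate maps one saddle-point optimization problem onto the other. The only cosmetic difference is that you carry out the bookkeeping in log-coordinates via the cumulant generating functions, whereas the paper works directly with the ratio $P/\prod \zeta_k^{\beta_k}$; the two are equivalent by definition of $\vtau$ and $\vzeta = e^{\vtau}$.
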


\begin{proof}
Transferring a candidate from $\mX$ to $\mY$ corresponds to the following algebraic operation:
\begin{equation}\label{eq_polynomials-flip-flap}
P_{\mY'}^{\mX'}\big(\vx_{{}_{\mX'}},\vy_{{}_{\mY'}}\big) = 
 y_j P_{\mY}^{\mX}\big(\vx_{{}_{\mX'}}, \frac{1}{y_j},\vy_{{}_{\mY}}\big),  
\end{equation}
where $y_j$ encodes when $m$ is preferred to $j$, whereas $x_j$ encoded the opposite.

Now, Let us denote $(\vbeta,\valpha)$ the threshold vector associated with $\proba\big(m \succ_{\valpha} \mX \land m \preccurlyeq_{\valpha} \mY\big)$, and $(\hat{\vbeta},\hat{\valpha})$ the threshold vector associated with  $\proba\big(m \succ_{\valpha} \mX'\land m \preccurlyeq_{\valpha} \mY'\big)$. Remark that $\hat{\vbeta}$ is the vector $\vbeta$ where the coordinate $\beta_j$ has been removed, and $\hat{\valpha}$ is the vector $\valpha$ where the coordinate $\alpha_j$ has been added.
Finally, let us denote $\hat{\vzeta}$ the saddle point associated with $\proba\big(m \succ_{\valpha} \mX'\land m \preccurlyeq_{\valpha} \mY'\big)$. By definition of $\hat{\vzeta}$, extending the remark in \Cref{sec_saddle-point-method}, we have
\[\hat{\vzeta} = \argmin_{(\reals_{> 0})^{|\mX' \cup \mY'|}} \frac{P_{\mY'}^{\mX'}\big(\vx_{{}_{\mX'}},\vy_{{}_{\mY'}}\big)}{{\vx_{{}_{\mX'}}}^{\hat{\vbeta}} {\vy_{{}_{\mY'}}}^{\hat{\valpha}} }.\ \]
So injecting \Cref{eq_polynomials-flip-flap} we get
\[\hat{\vzeta} = \argmin_{(\reals_{> 0})^{|\mX' \cup \mY'|}} \frac{P_{\mY}^{\mX}\big(\vx_{{}_{\mX'}}, \frac{1}{y_j},\vy_{{}_{\mY'}}\big)}{{\vx_{{}_{\mX'}}}^{\hat{\vbeta}} {y_j}^{-\beta_j}{\vy_{{}_{\mY}}}^{\valpha} }.\]
Since the minimum of a function over a positive orthant remains unchanged under the inversion of a coordinate, setting \( x_j = \frac{1}{y_j} \) yields
\[\min_{(\reals_{> 0})^{|\mX' \cup \mY'|}} \frac{P_{\mY}^{\mX}\big(\vx_{{}_{\mX'}}, \frac{1}{y_j},\vy_{{}_{\mY'}}\big)}{{\vx_{{}_{\mX'}}}^{\hat{\vbeta}} {y_j}^{-\beta_j}{\vy_{{}_{\mY}}}^{\valpha} } = \min_{(\reals_{> 0})^{|\mX \cup \mY|} }\frac{P_{\mY}^{\mX}\big(\vx_{{}_{\mX}},\vy_{{}_{\mY}}\big)}{{\vx_{{}_{\mX}}}^{\vbeta} {\vy_{{}_{\mY}}}^{\valpha} }, \]
and more importantly, the transformation \( x_j = \frac{1}{y_j} \) yields
\[\hat{\vzeta}=(\tilde{\zeta}_1,\dots,\tilde{\zeta}_{j-1}, \frac{1}{\tilde{\zeta}_j}, \tilde{\zeta}_{j+1},\dots,\tilde{\zeta}_{m-1}).\]
\end{proof}

\newpage
\section{Properties and Proofs for Particular Cultures}\label{sec_appendix-ic-and-mallows}

\subsection{Saddle Point, Hessian Matrix and Terms of the Asymptotic Expansion in Impartial Culture}
\label{sec_constants_IC}

We provide the proofs of \Cref{{th_impartial_culture},th_IC_error_term},
both considering Impartial Culture on $m$ candidates.
For each $j \in \mA$, let $X_j$ denote the random variable
taking value $1$ if candidate $j$ is ranked higher than candidate $m$,
and $0$ otherwise.
The probability generating function of the vector $\vX$ is $P(\vx)$
and its cumulant generating function $\vt \mapsto \mean(e^{\vt^T \vX})$
is $K : \vt \mapsto \log(P(e^{\vt}))$.
We have $\mean(\vX) = \vone/2$.
For our analysis, we need to compute the second and third central moments of $\vX$,
which correspond to the terms of order $2$ and $3$ in the Taylor expansion of $K$.

\paragraph{Second central moment.}
The Hessian of $K$ at $\vzero$ is the covariance matrix of $\vX$,
whose coefficient $(j,k)$ is equal to
\[
    \mH_K(\vzero)_{j,k} = \mean((X_j - 1/2)(X_k - 1/2)) = \mean(X_j X_k) - \frac{1}{4}.
\]
For $j = k$, we deduce $\mH_K(\vzero)_{j,k} = 1/4$.
In a random ranking $r$,
the probability that each candidate from a given subset $\mY$ of adversaries
beats candidate $m$ is $\frac{1}{|\mY| + 1}$.
Indeed, the ranking induced by $r$ on $\mY \cup \{m\}$
is uniformly distributed, so the probability that $m$ is in the last position
is $\frac{1}{|\mY| + 1}$. This implies for $j \neq k$
\[
    \mH_K(\vzero)_{j,k} = \frac{1}{3} - \frac{1}{4} = \frac{1}{12}.
\]
Letting $I$ denote the identity matrix of dimension $m-1$
and $J$ the matrix containing only $1$s, we deduce
\[
    \mH_K(\vzero) = \frac{1}{12} J + \frac{1}{6} I.
\]
The expressions of its inverse and determinant from \Cref{eq:mHKIC} follow.

\paragraph{Third central moment.}
A useful property of the cumulant generating function
is that the third term of its Taylor expansion
$\phi_3 := \vt \mapsto [y^3] \log(P(e^{\vt y}))$
(using the notation from \Cref{th:compute_a1})
is equal to the third central moment of $\vX$.
Thus
\[
    \phi_3(\vt) =
    \sum_{j, k, \ell \in \{1, \ldots, m-1\}}
    \mean((X_j - 1/2) (X_k - 1/2) (X_{\ell} - 1/2))
    t_j t_k t_{\ell}.
\]
We develop
\[
\mean((X_j - 1/2) (X_k - 1/2) (X_{\ell} - 1/2)) =
\mean(X_j X_k X_{\ell}) - \frac{1}{2} (\mean(X_j X_k) + \mean(X_j X_{\ell}) + \mean(X_k X_{\ell})) + \frac{3}{4} \mean(X_j) - \frac{1}{8}.
\]
As we saw earlier, we have $\mean(X_j X_k X_{\ell}) = \frac{1}{4}$.
In all cases, whether $j$, $k$ and $\ell$ are distinct
or some of them are equal, we find the third central moment to be $0$
and deduce $\phi_3(\vt) = 0$.

\paragraph{Proof of \Cref{th_impartial_culture}.}
The number of permutations of $\{1, \ldots, m\}$
where the set of elements above $m$ is exactly $\mX$
is $|\mX|! (m-1-|\mX|)!$.
Thus, for the Impartial Culture,
the characteristic polynomial from \cref{def_def-of-P} is
\[
    P(\vx) =
    \sum_{\mX \subseteq \mA}
    \frac{|\mX|! (m-1-|\mX|)!}{m!}
    \prod_{j \in \mX} x_j.
\]
The saddle point and log saddle point are
$\vzeta = \vone$ and $\vtau = \vzero$,
and we have $\vbeta = \vone / 2$.
\Cref{th_impartial_culture} is obtained by application
of \Cref{th_critical} with those parameters.

\paragraph{Proof of \Cref{th_IC_error_term}.}
The term $a_0$ is computed in \Cref{th_impartial_culture}.
The term $a_{1,n}$ is obtained by application of \Cref{th:compute_a1}
with $c = d = m-1$, $\vb = - \vone / 2$, $A(\vz) = 1$ so $\nabla_A(\vzeta) = \vzero$, $\mH = \mH_K(\vzero)$,
$\phi_3^{\star}(\vu) = 0$ because $\phi_3(\vt) = 0$, and $\vkappa_n - \vlambda n = (\lceil n / 2 \rceil - 1 - n / 2) \vone$.

\subsection{Saddle Points in the Mallows Models $\mM_{m \textnormal{ last}}$ and $\mM_{m \textnormal{ first}}$}\label{sec_appendix-saddle-point-Mallows}

This section is dedicated to proving \Cref{th_saddle-point-mallows} and \Cref{th_saddle-point-mallows-m-first}. Since the proof follows the exact same steps in both cases, we present it only for the culture $\mM_{m \textnormal{ last}}$.
\begin{lemmastar}
    Under the Mallows culture $\mathcal{M}_{m \; \textnormal{last}}$, the log saddle point $\vtau$ is given by
    
    \[ \vtau = \left(\frac{-m\rho}{2},\frac{(-m+2)\rho}{2}, \dots, \frac{(m-4)\rho}{2} \right),\]
    where $\rho$ is the concentration parameter of the culture, defined in \Cref{sec_definitions-and-notations}.
\end{lemmastar}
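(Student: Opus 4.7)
The plan is to verify that $\vtau^\ast := \bigl((2j-2-m)\rho/2\bigr)_{j=1}^{m-1}$ satisfies the defining equation $\nabla K(\vtau^\ast) = \vbeta = \vone/2$ of the log saddle point; uniqueness (\Cref{th_conditions-P-satisfied}) then forces $\vtau = \vtau^\ast$. By the probabilistic interpretation from \Cref{sec_saddle-point-method}, this amounts to showing that under the tilted distribution $\tilde p_r \propto p_r \prod_{j\in \mathcal{X}_r}\zeta_j^\ast$, with $\mathcal{X}_r$ the set of adversaries preferred to~$m$ and $\zeta_j^\ast := e^{\tau_j^\ast}$, every marginal probability $\tilde\proba(j \succ m)$ equals $1/2$.

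Write $q := e^{-\rho}$ so that $\zeta_j^\ast = q^{(m+2-2j)/2}$, and let $k := |\mathcal{X}_r|+1$ be the position of $m$ in $r$. The tilted weight of $r$ is then $q^{d(r,r_0)+\sum_{j\in\mathcal{X}_r}(m+2-2j)/2}$. The key step is a careful accounting of $d(r,r_0)$ by splitting discordant pairs into four groups: internal to $\mathcal{X}_r$, internal to $\mathcal{A}\setminus\mathcal{X}_r$, cross-block, and those involving $m$. Writing $r^+$, $r^-$ for the rankings induced on $\mathcal{X}_r$ and its complement, and $r_0^+$, $r_0^-$ for the restrictions of $r_0$, the internal groups contribute $d(r^+,r_0^+)+d(r^-,r_0^-)$, the pairs involving $m$ contribute $m-k$, and a direct count gives $s(\mathcal{X}_r)-\binom{k}{2}$ for the cross-block pairs, where $s(\mathcal{X}_r):=\sum_{j\in\mathcal{X}_r}j$. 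This cross-block bookkeeping is the main obstacle; it is precisely what the choice of $\vtau^\ast$ is designed to cancel, since $\sum_{j\in\mathcal{X}_r}(m+2-2j)/2 = (k-1)(m+2)/2 - s(\mathcal{X}_r)$. After cancellation, the tilted weight takes the form
\[
\tilde p_r \;\propto\; q^{d(r^+,r_0^+)+d(r^-,r_0^-)+f(k)},
\qquad f(k) \;:=\; \frac{(k-1)(m+2-k)}{2} + (m-k),
\]
which depends on $\mathcal{X}_r$ only through its cardinality.

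Two consequences of this factorization complete the argument. First, conditional on $\mathrm{pos}(m)=k$, every $(k-1)$-subset of $\mathcal{A}$ carries the same total weight $q^{f(k)}[k-1]_q![m-k]_q!$, so the tilted law of $\mathcal{X}_r$ given $k$ is uniform; hence $\tilde\proba(j \succ m \mid \mathrm{pos}(m)=k) = (k-1)/(m-1)$, independent of $j$. Second, an elementary expansion shows $f(k) = f(m+1-k)$, and the total weight of a position $k$, namely $\binom{m-1}{k-1}[k-1]_q![m-k]_q!\,q^{f(k)}$, is manifestly invariant under $k \leftrightarrow m+1-k$; therefore the tilted law of $\mathrm{pos}(m)$ is symmetric about $(m+1)/2$, so $\tilde{\mean}(\mathrm{pos}(m)) = (m+1)/2$. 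Combining,
\[
\tilde\proba(j \succ m) \;=\; \frac{\tilde{\mean}(\mathrm{pos}(m))-1}{m-1} \;=\; \frac{1}{2},
\]
as required. The companion result for $\mathcal{M}_{m\,\textnormal{first}}$ in \Cref{th_saddle-point-mallows-m-first} follows from the same argument after reversing $r_0$, which amounts to replacing $\rho$ by $-\rho$ throughout.
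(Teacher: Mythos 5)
Your proof is correct but follows a genuinely different route from the paper's. Both arguments verify that the candidate $\vtau^\ast$ satisfies $\nabla K(\vtau^\ast) = \vone/2$ and then invoke uniqueness, so the strategy is shared; the divergence lies entirely in how the verification is carried out combinatorially. The paper pairs each ranking $r = (\sigma, m, \sigma')$ with its swap $r' = (\sigma', m, \sigma)$ and shows the tilted weights $p_r \prod_{k \in \mX_r}\zeta_k^\ast$ and $p_{r'} \prod_{k \in \mX_{r'}}\zeta_k^\ast$ coincide; because $\mX_r$ and $\mX_{r'}$ partition $\mA$, this single pairwise identity immediately balances $\tilde{\proba}(j\succ m)$ against $\tilde{\proba}(j\preccurlyeq m)$ for every $j$, and it also feeds directly into the reduced system $p_{\{j\}}\zeta_j = p_{\mA\setminus\{j\}}\prod_{k\neq j}\zeta_k$ which the paper solves. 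You instead dissect $d(r,r_0)$ into four classes of discordant pairs, observe that the tilting exactly cancels the cross-block count $s(\mX_r)-\binom{k}{2}$ (this is the right explanation of where the coefficients $(m+2-2j)/2$ come from), and finish with a two-step probabilistic argument: conditional uniformity of $\mX_r$ given $\mathrm{pos}(m)$ via the $q$-factorial factorization, plus symmetry of the law of $\mathrm{pos}(m)$ about $(m+1)/2$ via $f(k)=f(m+1-k)$. Your route is longer but delivers, as a by-product, a clean structural description of the tilted Mallows distribution as a mixture over $\mathrm{pos}(m)$ of uniform-on-subsets laws, which is information not visible in the paper's pairing argument. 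Conversely, the paper's pairing is the shorter calculation and naturally produces equations one can \emph{solve} for $\vzeta$ rather than merely verify. Your closing remark that $\mM_{m\,\textnormal{first}}$ follows by $\rho \mapsto -\rho$ is consistent with how the paper treats \Cref{th_saddle-point-mallows-m-first}.
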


\begin{proof}

To prove this result, we leverage the probabilistic interpretation of the saddle point, namely that in the distribution induced by the log saddle point \( \vtau \), candidate \( m \) is, in expectation, precisely at the threshold for being an \( \valpha \)-winner.

In the general case, this interpretation translates to the following equation for each $\zeta_j$
\[ \sum_{r, \ j> m} p_r \Big(\prod_{k > m \textnormal{ in } r} \zeta_k \Big)= \sum_{r, \ j < m} p_r \Big(\prod_{k > m \textnormal{ in } r} \zeta_k \Big). \]

For the Mallows culture \( \mM_{m \textnormal{ last}} \), the equilibrium condition can be formulated in terms of ranking pairs. Specifically, due to the structure of the Mallows distribution, a ranking of the form \( (\sigma, m, \sigma') \) balances with the ranking obtained by swapping \( \sigma \) and \( \sigma' \), \ie \( (\sigma', m, \sigma) \).

For instance, when \( m=5 \), the ranking \( (1,3,2,5,4) \) must be in equilibrium with \( (4,5,1,3,2) \), yielding
\[ p_{(1,3,2,5,4)} \zeta_1 \zeta_2 \zeta_3 = p_{(4,5,1,3,2)} \zeta_4.\]

Therefore, for an arbitrary \( m \), the saddle point satisfies \( m!/2 \) equilibrium equations. Indeed, for each suitable pair of rankings \( r \) and \( r' \), the saddle point satisfies
\[
p_r \prod_{k > m \textnormal{ in } r } \zeta_k = p_{r'} \prod_{k > m \textnormal{ in } r' } \zeta_k.
\]
By considering the equation
\[
p_{\{ j \}} \zeta_j =  p_{\mA \setminus \{ j \}} \prod_{k \neq j} \zeta_k
\]
for each adversary \( j \), we obtain a system of \( m-1 \) independent equations with \( m-1 \) unknowns (except when \( m = 3 \), where the system reduces to a single equation). Solving this system, while ensuring that all coordinates of the saddle point are nonzero, yields the desired result.
In the case of \( m = 3 \), the system simplifies to \( p_{\{ 1 \}} \zeta_1 = p_{\{ 2 \}} \zeta_2 \). Adding the equation \( p_{\{ 1, 2 \}} \zeta_1 \zeta_2 = p_{\emptyset} \) provides two independent equations, allowing for a unique solution.

\end{proof}

\subsection{Asymptotic Behavior in the Mallows Model $\mM_{3 \textnormal{ first}}$}\label{sec_appendix-mallows-3-first}

This section outlines the main steps leading to the asymptotic formula
\[ 1-\probacondorcetwinner{3} \underset{n \to +\infty}{\sim}  \sqrt{\frac{2}{\pi n}} \frac{e^{-\rho \lceil n/2 \rceil} }{1-e^{-\rho}}\left(\frac{2}{1+e^{-\rho}} \right)^n .
\]

First, it is needed to iteratively apply \Cref{eq_polynomials-flip-flap} on the supercritical coordinates, until every term is subcritical, which gives the following equality
\[ 
1- \probacondorcetwinner{3} =  \big( 
 \proba( 3 \preccurlyeq_{\frac{\vone}{\vtwo}} \{1\}) + \proba( 3 \preccurlyeq_{\frac{\vone}{\vtwo}} \{2\})  \big) - \proba( 3 \preccurlyeq_{\frac{\vone}{\vtwo}} \{1,2\}). 
 \]

\subsubsection*{Subcriticality}
Note that the characteristic polynomial associated with \( \proba( 3 \preccurlyeq_{\frac{\vone}{\vtwo}} \{1,2\}) \), namely \( P_{\{1,2\}}^{\emptyset}(y_1 y_2) \), is actually the characteristic polynomial of the probability \( \probacondorcetwinner{3} \) under the Mallows model \( \mM_{3 \textnormal{ last}} \). Therefore, we already know that the saddle point for this term is subcritical.

The characteristic polynomials $P_{\{1\}}^{\{2\}}(y_1,1)$ and $P_{\{2\}}^{\{1\}}(1,y_2)$ associated respectively with $\proba( 3 \preccurlyeq_{\valpha} \{1\})$ and $\proba( 3 \preccurlyeq_{\valpha} \{2\})$ are both univariate and of the form $\gamma(q + p y)$.
Denoting $\xi$ a saddle point associated with such a polynomial, a straightforward calculation yields 
\(
\xi = \frac{q}{p}.
\)
The verification of subcriticality follows directly, by simply writing explicitly  
\[ 
P^{\{2\}}_{\{1\}}(y_1,1) = \gamma  \left( ( 2e^{-2\rho} + e^{-3\rho}) +  (1+ 2e^{-\rho}) y_1  \right), 
\] and 
\[P^{\{1\}}_{\{2\}}(1,y_2) = \gamma \left( (e^{-\rho} + e^{-2\rho}+e^{-3\rho} ) + (1 + e^{-\rho} + e^{-2\rho} )y_2\right).\]

\subsubsection*{Dominating term}
The first two terms can be associated with a Mallows culture with two candidates and a concentration parameter $\tilde{\rho} = \log(\frac{q}{p})$. In the case of $ \proba( 3 \preccurlyeq_{\valpha} \{2\})$, $\tilde{\rho} = \rho$. In the case of $ \proba( 3 \preccurlyeq_{\valpha} \{1\})$ , $\tilde{\rho}$ is equal to $\rho $ multiplied by a quantity strictly less than 1. Therefore, this Mallows is less concentrated than the first one. Thus, asymptotically, $\proba( 3 \preccurlyeq_{\valpha} \{2\})$ will dominate $\proba( 3 \preccurlyeq_{\valpha} \{1\})$.

In particular, as $  \proba( 3 \preccurlyeq_{\valpha} \{1,2\})\leq  \proba( 3 \preccurlyeq_{\valpha} \{1\})$, the third term of the sum is also dominated by $\proba( 3 \preccurlyeq_{\valpha} \{2\})$.
Thus,
\[ 
1-\probacondorcetwinner{3} \underset{n \to +\infty}{\sim} \proba( 3 \preccurlyeq_{\valpha} \{2\}).
\]

Now, using \Cref{th_proba-X-Y}, we get
\[ \proba( 3 \preccurlyeq_{\valpha} \{2\}) \underset{n \to +\infty}{\sim} \frac{ P_{\{2\}}^{\{1\}}(1,\xi)^n}{(1-\xi)\xi^{\lfloor n/2 \rfloor}} \frac{1}{\sqrt{2\pi n \frac{1}{4}}},  \]
where $\xi = e^{-\rho}$, thus
$P_{\{2\}}^{\{1\}}(1,\xi) = 2\gamma e^{-\rho}(1+e^{-\rho}+e^{-2\rho})$, with $\gamma = \frac{1}{1+2e^{-\rho}+2e^{-2\rho}+e^{-3\rho}}$.

Straightforward calculations yield the final formula.

\end{document}